\newcommand*{\rom}[1]{\expandafter\@slowromancap\romannumeral #1@}
\newcommand{\squeezeup}{\vspace{-2.5mm}}
\newcommand{\upperRomannumeral}[1]{\uppercase\expandafter{\romannumeral#1}}
\newtheorem{lemma}{Lemma}{}
  \newtheorem{thm}{Theorem}
\theoremstyle{remark} \newtheorem{remark}{Remark}
\newtheorem{theorem}{Theorem}
\newtheorem{corollary}[theorem]{Corollary}
\title{Residual Ratio Thresholding for Model Order Selection } 
\author{Sreejith Kallummil,  \hspace{0cm} Sheetal Kalyani  \\
 Department of Electrical Engineering \\  Indian Institute of Technology Madras\\
  Chennai, India 600036 \\
  \{ee12d032,skalyani\}@ee.iitm.ac.in
  }
\begin{document}
 \maketitle
\begin{abstract}
Model order selection (MOS) in linear regression models is a widely studied problem in signal processing. Techniques based on information theoretic criteria (ITC) are algorithms of choice in MOS problems.  This article proposes a novel technique called residual ratio thresholding for MOS in linear regression models which is fundamentally different from the  ITC  based MOS criteria widely discussed in literature. This article also provides a rigorous mathematical analysis of the high signal to noise ratio (SNR) and large sample size behaviour of RRT. RRT is numerically shown to deliver a highly competitive performance when compared to popular model order selection criteria like Akaike information criterion (AIC), Bayesian information criterion (BIC), penalised adaptive likelihood (PAL) etc. especially when the sample size is small.
\end{abstract}

\section{Introduction}
Consider a linear regression model ${\bf y}={\bf X}\boldsymbol{\beta}+{\bf w}$, where ${\bf X}=[{\bf x}_1,\dotsc,{\bf x}_p] \in \mathbb{R}^{n \times p}$ is a known design matrix with columns $\{{\bf x}_k\}_{k=1}^p$, $\boldsymbol{\beta}\in \mathbb{R}^p$ is an unknown regression vector and ${\bf w}$ is a Gaussian distributed noise vector with  mean ${\bf 0}_n$ and covariance matrix $\sigma^2{\bf I}_n$. Here ${\bf 0}_n$ is the $n \times 1$ zero vector and ${\bf I}_n$ is the $n\times n$ identity matrix. We assume that the design matrix ${\bf X}$ has full column rank, i.e., $rank({\bf X})=p$ which is possible only if $n \geq p$. The noise variance $\sigma^2$ is assumed to be unknown.   Model order of the regression vector $k_0$ is the last index $k$ such that $\boldsymbol{\beta}_j=0$ for all $j>k$. Mathematically,  $k_0=\max\{k:\boldsymbol{\beta}_k\neq 0\}$ or equivalently   $k_0=\min\{k:\boldsymbol{\beta}_j= 0,\forall j> k\}$. In many cases of practical interest, $\boldsymbol{\beta}_k\neq 0$ for all $k\leq k_0$. In those situations, model order also corresponds to the number of non-zero entries in the regression vector $\boldsymbol{\beta}$. We also assume that the regression vector $\boldsymbol{\beta}\neq {\bf 0}_p$ which ensures that $k_0\geq 1$.  Model order selection (MOS)\cite{stoica2004model}, i.e., identification or detection of model order $k_0$ using ${\bf y}$ and ${\bf X}$ has got many applications including channel estimation in wireless communications\cite{raghavendra2005improving,tomasoni2013efficient}, fixing filter lengths in digital signal processing\cite{filter_design}, fixing the order in auto regressive (AR) time series models\cite{schmidt2011estimating} etc. This article deals with the development of novel techniques for MOS. After presenting the notations used in this article, we discuss the prior art on MOS and the  novel contributions in this article. 
\subsection{Notations used}
  Bold upper case letters represent matrices and bold lower case letters represent vectors.  $span({\bf X})$ is the column space of ${\bf X}$. ${\bf X}^T$ is the transpose and ${\bf X}^{\dagger}=({\bf X}^T{\bf X})^{-1}{\bf X}^T$ is the   pseudo inverse of ${\bf X}$. $[k]$ denotes the set $\{1,2,\dotsc,k\}$. ${\bf X}_{\mathcal{J}}$ denotes the sub-matrix of ${\bf X}$ formed using  the columns indexed by $\mathcal{J}$. In particular ${\bf X}_{[k]}=[{\bf x}_1,\dotsc,{\bf x}_k]$. ${\bf P}_{k}={\bf X}_{[k]}{\bf X}^{\dagger}_{[k]}$ is the projection matrix onto  $span({\bf X}_{[k]})$.     ${\bf a}_{\mathcal{J}}$ and ${\bf a}({\mathcal{J}})$ both denote the  entries of vector ${\bf a}$ indexed by $\mathcal{J}$.   $\|{\bf a}\|_q=(\sum\limits_{j=1}^m|{\bf a}_j|^q)^{1/q} $ is the $l_q$ norm of ${\bf a}\in \mathbb{R}^m$.  $\phi$ represents the null set. ${\bf O}_n$ is the $n\times n$ zero matrix. For any two index sets $\mathcal{J}_1$ and $\mathcal{J}_2$, the set difference  $\mathcal{J}_1/\mathcal{J}_2=\{j:j \in \mathcal{J}_1\& j\notin  \mathcal{J}_2\}$.  $f(m)=O(g(m))$ iff $\underset{m \rightarrow \infty}{\lim}\frac{f(m)}{g(m)}<\infty$. $\mathbb{E}(Z)$ represents the expectation of random variable/vector (R.V) $Z$ and $\mathbb{P}(\mathcal{A})$ represents the probability of event $\mathcal{A}$. ${\bf a}\sim \mathcal{N}({\bf u},{\bf C})$ implies that ${\bf a}$ is a Gaussian  R.V with mean ${\bf u}$ and covariance matrix ${\bf C}$.  $\mathbb{B}(a,b)$ denotes a Beta R.V with parameters $a$ and $b$. $B(a,b)=\int_{t=0}^1t^{a-1}(1-t)^{b-1}dt$ is the beta function with parameters $a$ and $b$. $\chi^2_k$ is a central chi square R.V with $k$ degrees of freedom (d.o.f), whereas, $\chi^2_k(\lambda)$ is  a non-central chi square R.V with $k$ d.o.f and non-centrality parameter $\lambda$. $G(x)=\int_{t=0}^{\infty}e^{-t}t^{x-1}dt$ is the Gamma function. A R.V $Z$ converges in probability to a constant $c$ (i.e., $Z\overset{P}{\rightarrow } c$) as $n \rightarrow \infty$ (or as $\sigma^2 \rightarrow 0$) if $\underset{n \rightarrow \infty}{\lim}\mathbb{P}(|Z-c|> \epsilon)=0$ (or $\underset{\sigma^2 \rightarrow 0}{\lim}\mathbb{P}(|Z-c|> \epsilon)=0$) for every fixed $\epsilon>0$.
\subsection{Prior art on MOS}
MOS is one of the most widely studied topics in signal processing.  Among the plethora of MOS techniques discussed in literature, the ones based on  information theoretic criteria (ITC) are the most popular.  The operational form of  ITC based MOS techniques  is given by
\begin{equation}\label{itc}
\hat{k_0}=\underset{k=1,\dotsc,p}{\min}n\log(\sigma^2_k)+h(k,\sigma^2_k),
\end{equation} 
where $\sigma^2_k=\|({\bf I}_n-{\bf P}_k){\bf y}\|_2^2/n$ is the maximum likelihood estimate of $\sigma^2$ assuming that the model ${\bf y}={\bf X}_{[k]}\boldsymbol{\beta}_{[k]}+{\bf w}$ is true.  The term $n\log(\sigma^2_k)$ measures how well the observation ${\bf y}$ is approximated using the columns in ${\bf X}_{[k]}$. Since one  can approximate ${\bf y}$ better using more number of columns, the term $n\log(\sigma^2_k)$ is a decreasing function of $k$. Hence,  $\underset{k=1,\dotsc,p}{\arg\min}\ n\log(\sigma^2_k)$ always equals the maximum possible model order, i.e., $p$.  The second term  $h(k,\sigma^2_k)$ in (\ref{itc}) popularly called penalty function is   typically an increasing   function of $k$. Consequently, the ITC in (\ref{itc}) select as model order estimate that $k$ which provides a good trade-off between data fit represented by $n\log(\sigma^2_k)$ and  complexity represented by $h(k,\sigma^2_k)$. Please note that the term $n\log(\sigma^2_k)$ is two times the negative of log likelihood of data ${\bf y}$ maximized w.r.t the unknown parameters $\{\boldsymbol{\beta}_{[k]},\sigma^2\}$. Consequently, (\ref{itc}) has an interpretation of minimizing penalised log likelihood, a widely popular statistical concept.  Another interesting interpretation of (\ref{itc}) from the perspective of sequential hypothesis testing is  derived in \cite{stoica2004information}.

The properties of ITC is completely determined by the penalty function $h(k,\sigma^2_k)$ and different penalty functions give different ITC.   The penalty function $h(k,\sigma^2_k)$ in (\ref{itc}) can either be a deterministic function of  $k$ as in Akaike information criteria \textit{aka} AIC ($h(k,\sigma^2_k)=2k$), large sample version of Bayesian information criteria \textit{aka} BIC  ($h(k,\sigma^2_k)=k\log(n)$)\cite{stoica2004model} etc. or a stochastic function as  in penalized adaptive likelihood PAL\cite{stoica2013model}, normalised minimum description length NMDL\cite{rissanen2000mdl}, finite sample  forms of BIC\cite{stoica2012proper}, empirical BIC\cite{nielsen2013bayesian}, exponentially embedded families (EEF)\cite{EEFPDF,ding2011inconsistency} etc. Penalty functions in popular ITC like AIC, BIC, MDL, NMDL, EEF etc.  are derived using statistical concepts like Kullbeck Leibler divergence, Laplace approximation for integrals, information theoretic complexity, exponential family of distributions etc.   Please see \cite{tsp} for a list of popular penalty functions.   Most of the analytical results on ITC  are based on either the large sample  asymptotics, i.e., $n \rightarrow \infty, p/n \rightarrow 0$ \cite{nishii1988maximum,asymptotic_map,rao1989strongly,shao1997asymptotic,zheng1995consistent,minimal} or the high signal to noise ratio (SNR) asymptotics, i.e.,  as $\sigma^2\rightarrow 0$\cite{ding2011inconsistency,tsp,schmidt2012consistency,stoica2012proper}. These asymptotic results are summarized in the following lemma. 
 \begin{lemma} ITC based MOS estimate in  (\ref{itc}) satisfies the following consistency results.\cite{nishii1988maximum,tsp}\\
 a). Suppose that $h(k,\sigma^2_k)=vk$ and maximum model order $p$ is fixed. Then $v/\log(\log(n))\rightarrow \infty$ and $v/n\rightarrow 0$  is sufficient for the large sample consistency, i.e., the probability of correct selection $PCS=\mathbb{P}(\hat{k}_0=k_0)\rightarrow 1$ as $n \rightarrow \infty$. \\
b). ITC with  $h(k,\sigma^2_k)=(ak+b)\log(1/\sigma^2_k)$ in  (\ref{itc}) is high SNR consistent (i.e., $PCS\rightarrow 1$ as $\sigma^2\rightarrow 0$) if $ak_0+b<n$.
 \end{lemma}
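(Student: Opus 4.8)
The plan is to control separately the two disjoint error events, \emph{overfitting} ($\hat{k}_0>k_0$) and \emph{underfitting} ($\hat{k}_0<k_0$), and to show that each has vanishing probability in the relevant asymptotic regime. Everything hinges on one structural fact about the residual $n\sigma^2_k=\|(\mathbf{I}_n-\mathbf{P}_k)\mathbf{y}\|_2^2$. Because $\mathbf{X}\boldsymbol{\beta}=\mathbf{X}_{[k_0]}\boldsymbol{\beta}_{[k_0]}\in span(\mathbf{X}_{[k]})$ for every $k\ge k_0$, the projection annihilates the signal and $n\sigma^2_k=\|(\mathbf{I}_n-\mathbf{P}_k)\mathbf{w}\|_2^2\sim\sigma^2\chi^2_{n-k}$ is pure noise, whereas for $k<k_0$ a strictly positive deterministic bias $b_k=\|(\mathbf{I}_n-\mathbf{P}_k)\mathbf{X}\boldsymbol{\beta}\|_2^2>0$ survives. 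First I would record these two regimes, together with the companion fact that for $k>k_0$ the increment $n(\sigma^2_{k_0}-\sigma^2_k)=\|(\mathbf{P}_k-\mathbf{P}_{k_0})\mathbf{w}\|_2^2\sim\sigma^2\chi^2_{k-k_0}$ is independent of $n\sigma^2_k$ (orthogonal projections onto orthogonal subspaces), so that the ratio obeys $\sigma^2_k/\sigma^2_{k_0}=1-B_k$ with $B_k\sim\mathbb{B}\!\left(\tfrac{k-k_0}{2},\tfrac{n-k}{2}\right)$.

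For part (a), with $\text{ITC}(k)=n\log(\sigma^2_k)+vk$, I would compare $\text{ITC}(k)$ with $\text{ITC}(k_0)$ index by index. Against overfitting, for each fixed $k>k_0$ the data-fit gap $n\log(\sigma^2_k/\sigma^2_{k_0})=n\log(1-B_k)$ converges in distribution to $-\chi^2_{k-k_0}$ and is therefore bounded in probability, while the penalty gap is exactly $v(k-k_0)$; hence $\text{ITC}(k)-\text{ITC}(k_0)=v(k-k_0)+n\log(1-B_k)$ is positive with probability tending to one as soon as $v\to\infty$, which $v/\log(\log(n))\to\infty$ guarantees, and a union bound over the \emph{fixed} range $k_0<k\le p$ closes this case. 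Against underfitting, for $k<k_0$ the surviving bias gives $\sigma^2_k/\sigma^2_{k_0}\overset{P}{\rightarrow}(c_k+\sigma^2)/\sigma^2>1$ under the usual identifiability condition $b_k/n\to c_k>0$, so the data-fit gap $n\log(\sigma^2_k/\sigma^2_{k_0})$ grows linearly in $n$ and swamps the penalty deficit $v(k_0-k)$ precisely when $v/n\to0$; a second finite union bound finishes part (a).

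For part (b) the decisive step is algebraic: substituting $h(k,\sigma^2_k)=(ak+b)\log(1/\sigma^2_k)$ collapses the criterion to $\text{ITC}(k)=(n-ak-b)\log(\sigma^2_k)$. Writing $\mathbf{w}=\sigma\mathbf{z}$ with $\mathbf{z}\sim\mathcal{N}(\mathbf{0}_n,\mathbf{I}_n)$ held fixed, for $k\ge k_0$ one has $\log(\sigma^2_k)=\log(\sigma^2)+\log(\|(\mathbf{I}_n-\mathbf{P}_k)\mathbf{z}\|_2^2/n)$, so $\log(\sigma^2_k)\to-\infty$ like $\log(\sigma^2)$, whereas for $k<k_0$ the bias makes $\sigma^2_k\to b_k/n$ and $\log(\sigma^2_k)$ stays finite. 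Since $ak_0+b<n$, the coefficient at $k_0$ is strictly positive, forcing $\text{ITC}(k_0)\to-\infty$; for $k<k_0$ the value $\text{ITC}(k)$ remains finite, while for $k>k_0$ (and $a>0$, the case of interest) a direct subtraction gives $\text{ITC}(k)-\text{ITC}(k_0)=-a(k-k_0)\log(\sigma^2)+O_P(1)\to+\infty$, the remainder being a fixed random variable once $\mathbf{z}$ is frozen. Hence $k_0$ is the unique minimiser as $\sigma^2\to0$ and high-SNR consistency follows, the argument being essentially deterministic.

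I expect the main obstacle to be the rigorous overfitting bound in part (a): the distributional limit $n\log(1-B_k)\to-\chi^2_{k-k_0}$ must be upgraded to a nonasymptotic tail estimate on $B_k$ that is uniform as both $n\to\infty$ and $v=v_n\to\infty$, and it is the sharpness of this $\mathbb{B}$/$\chi^2$ tail that pins down the $\log(\log(n))$ growth rate, the Hannan--Quinn threshold inherited from the law of the iterated logarithm; the weaker requirement $v\to\infty$ already suffices for convergence in probability, with $v/\log(\log(n))\to\infty$ being the sharp boundary for almost-sure selection. By comparison the underfitting step is a routine law-of-large-numbers argument for $\sigma^2_k$, and part (b) requires no probabilistic delicacy at all.
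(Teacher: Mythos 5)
The paper offers no proof of this lemma at all: it is stated as a summary of known results and simply cited to \cite{nishii1988maximum} and \cite{tsp}, so there is nothing in the text to compare your argument against line by line. Your sketch is, however, a correct reconstruction of the standard arguments behind both parts. The structural facts you isolate are right: for $k>k_0$ the signal is annihilated, $\|({\bf I}_n-{\bf P}_k){\bf w}\|_2^2$ and $\|({\bf P}_k-{\bf P}_{k_0}){\bf w}\|_2^2$ are independent chi-squares of the claimed degrees of freedom, and $1-\sigma^2_k/\sigma^2_{k_0}\sim\mathbb{B}(\tfrac{k-k_0}{2},\tfrac{n-k}{2})$ with $nB_k=O_P(1)$, so the overfitting gap $v(k-k_0)+n\log(1-B_k)$ diverges whenever $v\to\infty$ and the finite union bound over $k_0<k\leq p$ closes the case; the underfitting case is the law of large numbers plus $v/n\to 0$; and in part (b) the collapse of the criterion to $(n-ak-b)\log(\sigma^2_k)$ together with $\log(\sigma^2_{k_0})\to-\infty$ and the gap $-a(k-k_0)\log(\sigma^2)+O_P(1)$ is exactly the high-SNR argument of \cite{tsp}. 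Three of your side remarks are worth keeping explicit: (i) the underfitting step needs the identifiability assumption $\|({\bf I}_n-{\bf P}_k){\bf X}\boldsymbol{\beta}\|_2^2/n\to c_k>0$, which the lemma statement suppresses but the cited references impose via conditions on ${\bf X}^T{\bf X}/n$; (ii) since the lemma only claims $PCS\to 1$ (weak consistency) for fixed $p$, the condition $v\to\infty$ already suffices and no uniform or nonasymptotic tail bound on $B_k$ is actually required --- the $\log\log n$ rate you worry about is the threshold for almost-sure selection, not for convergence in probability, so the ``main obstacle'' you anticipate does not arise here; and (iii) part (b) tacitly needs $a>0$, as you note, since for $a=0$ the penalty gap between $k$ and $k_0$ vanishes and the criterion degenerates.
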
 
Using these consistency results, it is easy to show that ITC like NMDL, proper forms of BIC, EEF etc. are high SNR consistent. Similarly, one can show that BIC, NMDL etc. are also large sample consistent.  Techniques to create novel penalty functions based on the high SNR behaviour of ITC were proposed in \cite{designITC, tsp, stoica2013model}. Please note that ITC based MOS rules are also developed for non-linear order selection problems like source number enumeration in \cite{lu2013generalized,haddadi2010statistical},  sinusoidal enumeration\cite{nielsen2014model,asymptotic_map} etc.  

\subsection{Contribution of this article}
This article proposes a novel technique for MOS called residual ratio thresholding (RRT). RRT  is based on the behaviour of adjacent residual norm ratio $RR(k)=\dfrac{\sigma^2_k}{\sigma^2_{k-1}}$ and is structurally different from the ITC based MOS criteria in (\ref{itc}). { Unlike popular algorithms like BIC, AIC etc. which are motivated by large sample asymptotics,  RRT is motivated by a finite sample distributional result (see Theorem \ref{thm:beta_mos}). This finite sample nature of RRT is reflected in its superior empirical performance compared to AIC, BIC etc. when the sample size is small. }  RRT involves a tuning parameter $\alpha$ for which we give a proper semantic interpretation using high SNR and large sample analysis. In particular, one can set the hyper parameters in RRT so as  to achieve a predetermined high SNR and large sample lower bound on PCS. In this sense, RRT is similar to the ITC design technique in \cite{designITC}. However, for the same high SNR error bound, RRT is numerically shown to deliver better PCS in the low to medium SNR regime than  \cite{designITC}.   Further, the conditions required for the large sample consistency of RRT is also derived.  { Numerical simulations indicate that RRT performs better than  existing ITC based MOS techniques in many situations including but not limited to the case of small $n$ and $p$.  In  situations where RRT is  outperformed by other MOS criteria,  RRT performed close to the best performing MOS criterion.} Based on the derived analytical results and observed numerical results, we believe that RRT deserves a place in the algorithmic toolkit for MOS problems. 

This article is organized as follows. Section \rom{2} analyses the behaviour of residual ratio $RR(k)$. Section \rom{3} presents and analyses the RRT based MOS technique. Section \rom{4} presents numerical simulations.
\section{Behaviour of residual ratios}
Define ${\bf r}^k=({\bf I}_n-{\bf P}_k){\bf y}$, the residual after projecting onto the column space of ${\bf X}_{[k]}$. In terms of $\sigma^2_k$, $\|{\bf r}^k\|_2^2=n\sigma^2_k$.  In this section, we rigorously analyse the behaviour of residual ratios $RR(k)=\dfrac{\|{\bf r}^k\|_2^2}{\|{\bf r}^{k-1}\|_2^2}=\dfrac{\sigma^2_k}{\sigma^2_{k-1}}=\dfrac{\|({\bf I}_n-{\bf P}_k){\bf y}\|_2^2}{\|({\bf I}_n-{\bf P}_{k-1}){\bf y}\|_2^2}$ for $k\geq k_0$. The proposed RRT technique for MOS is based on this analysis. The basic distributional results are listed in the following lemma.
\begin{lemma}\label{lemma:basic_distributions} $RR(k)$ satisfies the following  for all $\sigma^2>0$.\\
a). $RR(k)$ for $k>k_0$ satisfies $RR(k)\sim \mathbb{B}(\dfrac{n-k}{2},\dfrac{1}{2})$.\\
b). $RR(k_0)=\dfrac{Z_1}{Z_1+Z_2}$, where $Z_1=\|({\bf I}_n-{\bf P}_{k_0}){\bf w}\|_2^2\sim \sigma^2\chi^2_{n-k_0}$ and $Z_2=\|({\bf P}_{k_0}-{\bf P}_{k_0-1}){\bf y}\|_2^2\sim \sigma^2\chi^2_1\left(\dfrac{\|({\bf I}_{n}-{\bf P}_{k_0-1}){\bf x}_{k_0}\|_2^2\boldsymbol{\beta}_{k_0}^2}{\sigma^2}\right)$.
\end{lemma}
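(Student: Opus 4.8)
The plan is to exploit the nested structure of the projections ${\bf P}_{k-1}$ and ${\bf P}_k$ together with the Gaussianity of ${\bf w}$. First I would write ${\bf I}_n-{\bf P}_{k-1}=({\bf I}_n-{\bf P}_k)+({\bf P}_k-{\bf P}_{k-1})$. Because $span({\bf X}_{[k-1]})\subseteq span({\bf X}_{[k]})$, both ${\bf I}_n-{\bf P}_k$ and ${\bf P}_k-{\bf P}_{k-1}$ are symmetric idempotent matrices projecting onto mutually orthogonal subspaces of dimensions $n-k$ and $1$. Their orthogonality yields the Pythagorean identity $\|{\bf r}^{k-1}\|_2^2=\|{\bf r}^k\|_2^2+\|({\bf P}_k-{\bf P}_{k-1}){\bf y}\|_2^2$, so that $RR(k)=\dfrac{A}{A+B}$ with $A=\|({\bf I}_n-{\bf P}_k){\bf y}\|_2^2$ and $B=\|({\bf P}_k-{\bf P}_{k-1}){\bf y}\|_2^2$. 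Setting $k=k_0$ gives the representation $RR(k_0)=Z_1/(Z_1+Z_2)$ of part b) directly.

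Next I would strip off the deterministic mean using the definition of $k_0$. Since $\boldsymbol{\beta}_j=0$ for $j>k_0$, the mean satisfies ${\bf X}\boldsymbol{\beta}={\bf X}_{[k_0]}\boldsymbol{\beta}_{[k_0]}\in span({\bf X}_{[k_0]})$. In part a), where $k>k_0$ and hence $k-1\geq k_0$, this mean lies in $span({\bf X}_{[k-1]})$ and is annihilated by both ${\bf I}_n-{\bf P}_k$ and ${\bf P}_k-{\bf P}_{k-1}$, so $A=\|({\bf I}_n-{\bf P}_k){\bf w}\|_2^2$ and $B=\|({\bf P}_k-{\bf P}_{k-1}){\bf w}\|_2^2$ depend only on the noise. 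In part b), the same argument gives $({\bf I}_n-{\bf P}_{k_0}){\bf y}=({\bf I}_n-{\bf P}_{k_0}){\bf w}$, whereas in $B$ only the component $\boldsymbol{\beta}_{k_0}{\bf x}_{k_0}$ of the mean survives: using ${\bf P}_{k_0}{\bf x}_{k_0}={\bf x}_{k_0}$ and the fact that ${\bf P}_{k_0}$ and ${\bf P}_{k_0-1}$ both fix ${\bf X}_{[k_0-1]}\boldsymbol{\beta}_{[k_0-1]}$, the projected mean equals $\boldsymbol{\beta}_{k_0}({\bf I}_n-{\bf P}_{k_0-1}){\bf x}_{k_0}$.

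I would then identify the distributions via quadratic forms in Gaussians. For a symmetric idempotent ${\bf Q}$ of rank $r$ and ${\bf w}\sim\mathcal{N}({\bf 0}_n,\sigma^2{\bf I}_n)$, an orthogonal diagonalisation of ${\bf Q}$ shows $\|{\bf Q}{\bf w}\|_2^2={\bf w}^T{\bf Q}{\bf w}\sim\sigma^2\chi^2_r$; with $r=n-k_0$ this gives $Z_1\sim\sigma^2\chi^2_{n-k_0}$, and for part a) it gives $A\sim\sigma^2\chi^2_{n-k}$ and $B\sim\sigma^2\chi^2_1$. For the quadratic form ${\bf y}^T({\bf P}_{k_0}-{\bf P}_{k_0-1}){\bf y}$ with surviving mean, the standard non-central result gives $Z_2\sim\sigma^2\chi^2_1(\lambda)$ with non-centrality $\lambda=\|({\bf I}_n-{\bf P}_{k_0-1}){\bf x}_{k_0}\|_2^2\boldsymbol{\beta}_{k_0}^2/\sigma^2$, exactly as stated.

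Finally I would establish independence and invoke the Beta identity. Since $({\bf I}_n-{\bf P}_k){\bf w}$ and $({\bf P}_k-{\bf P}_{k-1}){\bf w}$ are jointly Gaussian with cross-covariance $\sigma^2({\bf I}_n-{\bf P}_k)({\bf P}_k-{\bf P}_{k-1})={\bf O}_n$, they are independent, hence $A$ and $B$ are independent. Treating $A/\sigma^2\sim\chi^2_{n-k}$ and $B/\sigma^2\sim\chi^2_1$ as independent Gamma variables and using the classical fact that $U/(U+V)\sim\mathbb{B}(a,b)$ when $U\sim\chi^2_{2a}$ and $V\sim\chi^2_{2b}$ are independent, with $a=\tfrac{n-k}{2}$ and $b=\tfrac{1}{2}$, yields $RR(k)\sim\mathbb{B}\big(\tfrac{n-k}{2},\tfrac{1}{2}\big)$, which is part a). The main obstacle is the bookkeeping in the second step: tracking precisely which subspace the deterministic mean occupies and computing the surviving component so that the non-centrality parameter emerges exactly as claimed. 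The remaining distributional facts are routine.
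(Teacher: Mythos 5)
Your proposal is correct and follows essentially the same route as the paper's Appendix A: the orthogonal decomposition ${\bf I}_n-{\bf P}_{k-1}=({\bf I}_n-{\bf P}_k)+({\bf P}_k-{\bf P}_{k-1})$, identification of the (non)central chi-squared laws of the two quadratic forms after tracking which part of the mean ${\bf X}_{[k_0]}\boldsymbol{\beta}_{[k_0]}$ survives each projection, independence from the vanishing cross-covariance, and the Beta representation of a ratio of independent chi-squares. The only cosmetic difference is how the non-centrality parameter is extracted (you cancel the ${\bf X}_{[k_0-1]}\boldsymbol{\beta}_{[k_0-1]}$ component directly, the paper inserts $\pm{\bf I}_n$), which leads to the same expression.
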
 
\begin{proof}
Please see Appendix A. 
\end{proof}

We now give a bound in Theorem \ref{thm:beta_mos} which is a direct consequence of the distributional result a) in  Lemma \ref{lemma:basic_distributions}.
\begin{thm}\label{thm:beta_mos} 
Define $\Gamma_{RRT}^{\alpha}(k)=F_{\frac{n-k}{2},\frac{1}{2}}^{-1}\left(\frac{\alpha}{p}\right)$ for $1\leq k\leq p<n$, where $F_{\frac{n-k}{2},\frac{1}{2}}^{-1}()$  is the inverse cumulative distribution function (CDF) of a $\mathbb{B}(\frac{n-k}{2},\frac{1}{2})$ R.V. Then $\mathbb{P}(RR(k)>\Gamma_{RRT}^{\alpha}(k),\forall k> k_0)\geq 1-\alpha$ for each $0<\alpha<1$ and $\sigma^2>0$.
\end{thm}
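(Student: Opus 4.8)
The plan is to pass to the complementary event and control it with Boole's inequality (the union bound). The crucial observation is that although the residual ratios $RR(k)$ for different indices $k$ are all functions of the same data $\mathbf{y}$ and are therefore highly dependent, the union bound requires no independence whatsoever. This is precisely why the argument goes through despite the complicated joint law of $\{RR(k)\}_{k>k_0}$.

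First I would rewrite the target event in terms of its complement. The event $\{RR(k)>\Gamma_{RRT}^{\alpha}(k),\ \forall k> k_0\}$ fails exactly when $RR(k)\leq \Gamma_{RRT}^{\alpha}(k)$ for at least one index $k$ with $k_0<k\leq p$, i.e. its complement is $\bigcup_{k=k_0+1}^{p}\{RR(k)\leq \Gamma_{RRT}^{\alpha}(k)\}$. Thus it suffices to establish
\begin{equation*}
\mathbb{P}\left(\bigcup_{k=k_0+1}^{p}\{RR(k)\leq \Gamma_{RRT}^{\alpha}(k)\}\right)\leq \alpha .
\end{equation*}
Next I would pin down each single-index probability. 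By part a) of Lemma \ref{lemma:basic_distributions}, for every $k>k_0$ the ratio $RR(k)$ is distributed as $\mathbb{B}(\frac{n-k}{2},\frac{1}{2})$ and hence has continuous CDF $F_{\frac{n-k}{2},\frac{1}{2}}$. Since $\Gamma_{RRT}^{\alpha}(k)=F_{\frac{n-k}{2},\frac{1}{2}}^{-1}(\alpha/p)$ is by definition the $(\alpha/p)$-quantile of exactly this distribution, continuity of the Beta CDF yields
\begin{equation*}
\mathbb{P}\big(RR(k)\leq \Gamma_{RRT}^{\alpha}(k)\big)=F_{\frac{n-k}{2},\frac{1}{2}}\big(\Gamma_{RRT}^{\alpha}(k)\big)=\frac{\alpha}{p}.
\end{equation*}

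Finally I would assemble these ingredients. Applying the union bound to the displayed complement and counting that there are exactly $p-k_0$ indices satisfying $k_0<k\leq p$,
\begin{equation*}
\mathbb{P}\left(\bigcup_{k=k_0+1}^{p}\{RR(k)\leq \Gamma_{RRT}^{\alpha}(k)\}\right)\leq \sum_{k=k_0+1}^{p}\frac{\alpha}{p}=(p-k_0)\frac{\alpha}{p}\leq \alpha,
\end{equation*}
where the last inequality uses $k_0\geq 1$ (in fact the bound is strict since $p-k_0\leq p-1$). Taking complements gives $\mathbb{P}(RR(k)>\Gamma_{RRT}^{\alpha}(k),\ \forall k>k_0)\geq 1-\alpha$, as claimed. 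I do not anticipate any genuine obstacle here: the only point needing a moment's care is the equality in the single-index probability, which relies on the absolute continuity of the Beta law so that the quantile relation holds without the inequality slack that can arise for discrete distributions. Everything else is bookkeeping on the number of indices and a direct invocation of Lemma \ref{lemma:basic_distributions}(a).
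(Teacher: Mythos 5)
Your proof is correct and follows essentially the same route as the paper: pass to the complement, apply the union bound, use Lemma \ref{lemma:basic_distributions}(a) together with the quantile definition of $\Gamma_{RRT}^{\alpha}(k)$ to get $\alpha/p$ per index, and sum over the $p-k_0$ indices. Your added remarks on the irrelevance of dependence for the union bound and on the continuity of the Beta CDF are sound but do not change the argument.
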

\begin{proof} The proof follows directly from the union bound\footnote{For any $n$ events $\{\mathcal{A}_i\}_{i=1}^n$, union bound is $\mathbb{P}(\cup_{i=1}^nA_i)\leq \sum \limits_{i=1}^n\mathbb{P}(A_i)$}, the definition of $\Gamma_{RRT}^{\alpha}(k)$ and the result $RR(k)=\dfrac{\|{\bf r}^k\|_2^2/\sigma^2}{\|{\bf r}^{k-1}\|_2^2/\sigma^2}\sim  \mathbb{B}(\dfrac{n-k}{2},\dfrac{1}{2})$ for $k>k_0$. 
\begin{equation}
\begin{array}{ll}
\mathbb{P}(RR(k)>\Gamma_{RRT}^{\alpha}(k),\forall k> k_0)\\
\ \ \ \ \ \ =1-\mathbb{P}(\exists k>k_0: RR(k)<\Gamma_{RRT}^{\alpha}(k))\\
\ \ \ \ \ \ \geq 1-\sum\limits_{k>k_0}^p\mathbb{P}(RR(k)<\Gamma_{RRT}^{\alpha}(k))\\
\ \ \ \ \ \ =1-\sum\limits_{k>k_0}^pF_{\frac{n-k}{2},\frac{1}{2}}\left(F_{\frac{n-k}{2},\frac{1}{2}}^{-1}\left(\frac{\alpha}{p}\right)\right)\\
\ \ \ \ \ \ = 1-\frac{(p-k_0)}{p}\alpha\geq 1-\alpha. 
\end{array}
\end{equation}
\end{proof}
Theorem \ref{thm:beta_mos} implies that $RR(k)$ for $k>k_0$ is lower bounded by $\Gamma_{RRT}^{\alpha}(k)$ with a very high probability (for small values of $\alpha$).   Please note that the bound in Theorem  \ref{thm:beta_mos} hold true irrespective of the value of $k_0$. Also note that the  lower bound $\Gamma_{RRT}^{\alpha}(k)$ on $RR(k)$ for $k>k_0$ is independent of $\sigma^2$. Certain other interesting properties of $\Gamma_{RRT}^{\alpha}(k)$ is listed below.
\begin{lemma} $\Gamma_{RRT}^{\alpha}(k)$ satisfies the following properties. \\
a). For fixed $n$ and $p$, $\Gamma_{RRT}^{\alpha}(k)$ decreases with decreasing $\alpha$. In particular $\Gamma_{RRT}^0(k)=0$ and $\Gamma_{RRT}^p(k)=1$.\\
b). For fixed $n$ and $\alpha$, $\Gamma_{RRT}^{\alpha}(k)$ decreases with increasing $p$.\\
c). For fixed $n$, $p$ and $\alpha$, $\Gamma_{RRT}^{\alpha}(k)$ decreases with increasing $k$.
\end{lemma}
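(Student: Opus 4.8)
The plan is to reduce all three parts to elementary monotonicity properties of the Beta inverse CDF $F_{a,b}^{-1}(\cdot)$, and two facts will do essentially all of the work. First, since a $\mathbb{B}(a,b)$ random variable with $a,b>0$ is absolutely continuous with support $(0,1)$ and a density that is strictly positive on $(0,1)$ (here $a=\tfrac{n-k}{2}>0$ because $k\le p<n$, and $b=\tfrac12>0$), its CDF $F_{a,b}$ is a continuous, strictly increasing bijection of $(0,1)$ onto $(0,1)$; hence $F_{a,b}^{-1}$ is itself strictly increasing on $(0,1)$, with $F_{a,b}^{-1}(q)\to 0$ as $q\to 0^+$ and $F_{a,b}^{-1}(q)\to 1$ as $q\to 1^-$ (the infimum and supremum of the support). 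Second, for part c) I will need that the family $\mathbb{B}(a,b)$ is stochastically increasing in $a$ for fixed $b$. With these in hand the three claims follow quickly.

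For part a), fix $n$ and $p$. Then $\Gamma_{RRT}^{\alpha}(k)=F_{\frac{n-k}{2},\frac{1}{2}}^{-1}(\alpha/p)$ depends on $\alpha$ only through the quantile level $\alpha/p$, which is strictly increasing in $\alpha$; composing with the increasing map $F^{-1}$ shows $\Gamma_{RRT}^{\alpha}(k)$ is increasing in $\alpha$, i.e. it decreases as $\alpha$ decreases. The two boundary values come from the limiting behaviour of $F^{-1}$: at $\alpha=0$ the quantile is $0$ and $\Gamma_{RRT}^{0}(k)=\inf\{\text{support}\}=0$, while at $\alpha=p$ the quantile is $1$ and $\Gamma_{RRT}^{p}(k)=\sup\{\text{support}\}=1$. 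Part b) is equally immediate: the shape parameters $(\tfrac{n-k}{2},\tfrac12)$ do not involve $p$, so for fixed $n,\alpha$ and fixed $k$ the only effect of increasing $p$ is to decrease the quantile $\alpha/p$, whence $\Gamma_{RRT}^{\alpha}(k)$ decreases by monotonicity of $F^{-1}$.

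Part c) is where the real content lies, and I expect it to be the main obstacle, because increasing $k$ does not merely move the quantile level (which is held fixed at $\alpha/p$) but changes the distribution itself through the first shape parameter $a=\tfrac{n-k}{2}$, which decreases as $k$ increases. I would establish the required stochastic ordering by an explicit coupling using the gamma representation of a Beta variable: write $Y_a=\frac{U_a}{U_a+V}\sim\mathbb{B}(a,b)$ with $U_a$ a $\text{Gamma}(a)$ variate and $V$ an independent $\text{Gamma}(b)$ variate. For $a_2>a_1$ use the additivity of the gamma shape parameter to write $U_{a_2}=U_{a_1}+W$ with $W\sim\text{Gamma}(a_2-a_1)$ independent of $U_{a_1}$ and $V$; since $x\mapsto \frac{x}{x+V}$ is increasing for $V>0$ and $U_{a_1}+W\ge U_{a_1}$, we get $Y_{a_2}\ge Y_{a_1}$ pointwise under this coupling, so $\mathbb{B}(a_2,b)$ stochastically dominates $\mathbb{B}(a_1,b)$. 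Equivalently $F_{a_2,b}(x)\le F_{a_1,b}(x)$ for all $x$, and inverting the monotone CDFs gives $F_{a_2,b}^{-1}(q)\ge F_{a_1,b}^{-1}(q)$ for every $q\in(0,1)$. Applying this with $b=\tfrac12$, $q=\alpha/p$, and $a=\tfrac{n-k}{2}$ decreasing in $k$ shows that $\Gamma_{RRT}^{\alpha}(k)$ decreases with increasing $k$, completing the proof.
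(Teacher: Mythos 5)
Your proof is correct and follows the same overall route as the paper: parts a) and b) reduce to the monotonicity of the Beta inverse CDF in its argument together with the fact that the distribution is supported on $[0,1]$, and part c) reduces to the monotonicity of $F_{a,b}^{-1}(q)$ in the shape parameter $a$. The one substantive difference is that the paper simply \emph{asserts} that $F_{a,b}(x)$ is decreasing in $a$ (equivalently, that $\mathbb{B}(a,b)$ is stochastically increasing in $a$), whereas you actually prove it via the gamma representation $Y_a=U_a/(U_a+V)$ and the additivity coupling $U_{a_2}=U_{a_1}+W$, which yields pointwise dominance and hence the CDF inequality. That coupling argument is sound (gamma shape additivity, monotonicity of $x\mapsto x/(x+v)$, and inversion of strictly increasing CDFs are all applied correctly), so your write-up is a self-contained version of what the paper leaves as a cited fact; it buys rigor at the cost of a few extra lines, with no change in the underlying idea.
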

\begin{proof}
a) and b) follow from the monotonicity of CDF and the fact that a Beta distribution has support only in [0,1].  c) is true since the Beta CDF $F_{a,b}(x)$ is a decreasing function of $a$ for fixed values of $b$ and inverse CDF $F^{-1}_{a,b}(x)$ is an increasing function of $a$ for fixed values of $b$.
\end{proof}

We next consider the behaviour of $RR(k_0)$ as $\sigma^2 \rightarrow 0$. The main result is stated in the following Theorem. 
\begin{thm}\label{thm:rrknot} $RR(k_0)\overset{P}{\rightarrow}0$ as $\sigma^2 \rightarrow 0$.
\end{thm}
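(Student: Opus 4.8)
The plan is to work directly from the distributional decomposition $RR(k_0)=\dfrac{Z_1}{Z_1+Z_2}$ supplied by part b) of Lemma \ref{lemma:basic_distributions}, and to show that as $\sigma^2\rightarrow 0$ the numerator vanishes while the denominator stays bounded away from zero. Writing $Z_1\sim\sigma^2\chi^2_{n-k_0}$ and $Z_2\sim\sigma^2\chi^2_1(\lambda)$ with non-centrality parameter $\lambda=c/\sigma^2$, where $c=\|({\bf I}_n-{\bf P}_{k_0-1}){\bf x}_{k_0}\|_2^2\boldsymbol{\beta}_{k_0}^2$, the entire argument reduces to tracking how $Z_1$ and $Z_2$ scale in $\sigma^2$.

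First I would dispose of the numerator. Since $Z_1/\sigma^2\sim\chi^2_{n-k_0}$ has fixed mean $n-k_0$ independent of $\sigma^2$, we get $\mathbb{E}(Z_1)=\sigma^2(n-k_0)\rightarrow 0$, and Markov's inequality then yields $Z_1\overset{P}{\rightarrow}0$.

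Next I would show the denominator is bounded below by a positive constant in probability. The crucial observation is that $c>0$: the full column rank assumption on ${\bf X}$ forces ${\bf x}_{k_0}\notin span({\bf X}_{[k_0-1]})$, so $\|({\bf I}_n-{\bf P}_{k_0-1}){\bf x}_{k_0}\|_2^2>0$, while the definition of $k_0$ guarantees $\boldsymbol{\beta}_{k_0}\neq 0$. Using $\mathbb{E}(\chi^2_1(\lambda))=1+\lambda$ and $\var(\chi^2_1(\lambda))=2(1+2\lambda)$, I would compute $\mathbb{E}(Z_2)=\sigma^2+c\rightarrow c$ and $\var(Z_2)=2\sigma^4+4\sigma^2 c\rightarrow 0$, so Chebyshev's inequality gives $Z_2\overset{P}{\rightarrow}c$.

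Finally, combining $Z_1\overset{P}{\rightarrow}0$ and $Z_2\overset{P}{\rightarrow}c>0$ gives $Z_1+Z_2\overset{P}{\rightarrow}c$, and since the map $(x,y)\mapsto x/(x+y)$ is continuous at $(0,c)$ for $c>0$, the continuous mapping theorem (or an elementary argument using $Z_1+Z_2\geq Z_2$ together with $Z_1\leq\varepsilon$ and $Z_2\geq c/2$ on high-probability events) delivers $RR(k_0)\overset{P}{\rightarrow}\dfrac{0}{c}=0$. I do not anticipate a genuine obstacle; the only point requiring care is verifying $c>0$, for otherwise the denominator would also collapse and the ratio would become a $0/0$ indeterminate form. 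As a sanity check one could instead use the bound $RR(k_0)\leq Z_1/Z_2=\chi^2_{n-k_0}/\chi^2_1(\lambda)$ and the fact that $\chi^2_1(\lambda)/\lambda\overset{P}{\rightarrow}1$ as $\lambda\rightarrow\infty$ to see that the ratio is $O_P(\sigma^2)$, recovering the same conclusion.
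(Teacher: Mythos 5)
Your proposal is correct and follows essentially the same route as the paper: both use the decomposition $RR(k_0)=Z_1/(Z_1+Z_2)$ from Lemma \ref{lemma:basic_distributions}, show $Z_1\overset{P}{\rightarrow}0$ and $Z_2\overset{P}{\rightarrow}\|({\bf I}_n-{\bf P}_{k_0-1}){\bf x}_{k_0}\|_2^2\boldsymbol{\beta}_{k_0}^2>0$, and conclude via the convergence of ratios. The only difference is cosmetic: the paper cites a lemma from a reference for the two $\chi^2$ limits, whereas you derive them from first principles with Markov and Chebyshev bounds, and you make the positivity of the limiting denominator (via full column rank and $\boldsymbol{\beta}_{k_0}\neq 0$) explicit, which the paper leaves implicit.
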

\begin{proof}Please see Appendix B.
\end{proof}
Theorem \ref{thm:rrknot} implies that $RR(k_0)$ takes smaller and smaller values with increasing SNR. This is in contrast with $RR(k)$ for $k>k_0$ which is lower bounded  by a constant independent of the operating SNR. { The analysis of $RR(k)$ for $k<k_0$ is not relevant to the RRT algorithm discussed in this article and is thereby omitted. However, following the proof of Theorem \ref{thm:rrknot}, one can easily show that $RR(k)$ for $k<k_0$ converges in probability to a constant $c_k=\dfrac{\|({\bf I}_n-{\bf P}_k){\bf X}\boldsymbol{\beta}\|_2^2}{\|({\bf I}_n-{\bf P}_k){\bf X}\boldsymbol{\beta}\|_2^2+\|({\bf P}_{k}-{\bf P}_{k-1}){\bf X}\boldsymbol{\beta}\|_2^2}$ which is strictly bounded away from zero and one. }

\subsection{Numerical Validation }
\begin{figure*}[htb]
\begin{multicols}{2}

    \includegraphics[width=1\linewidth]{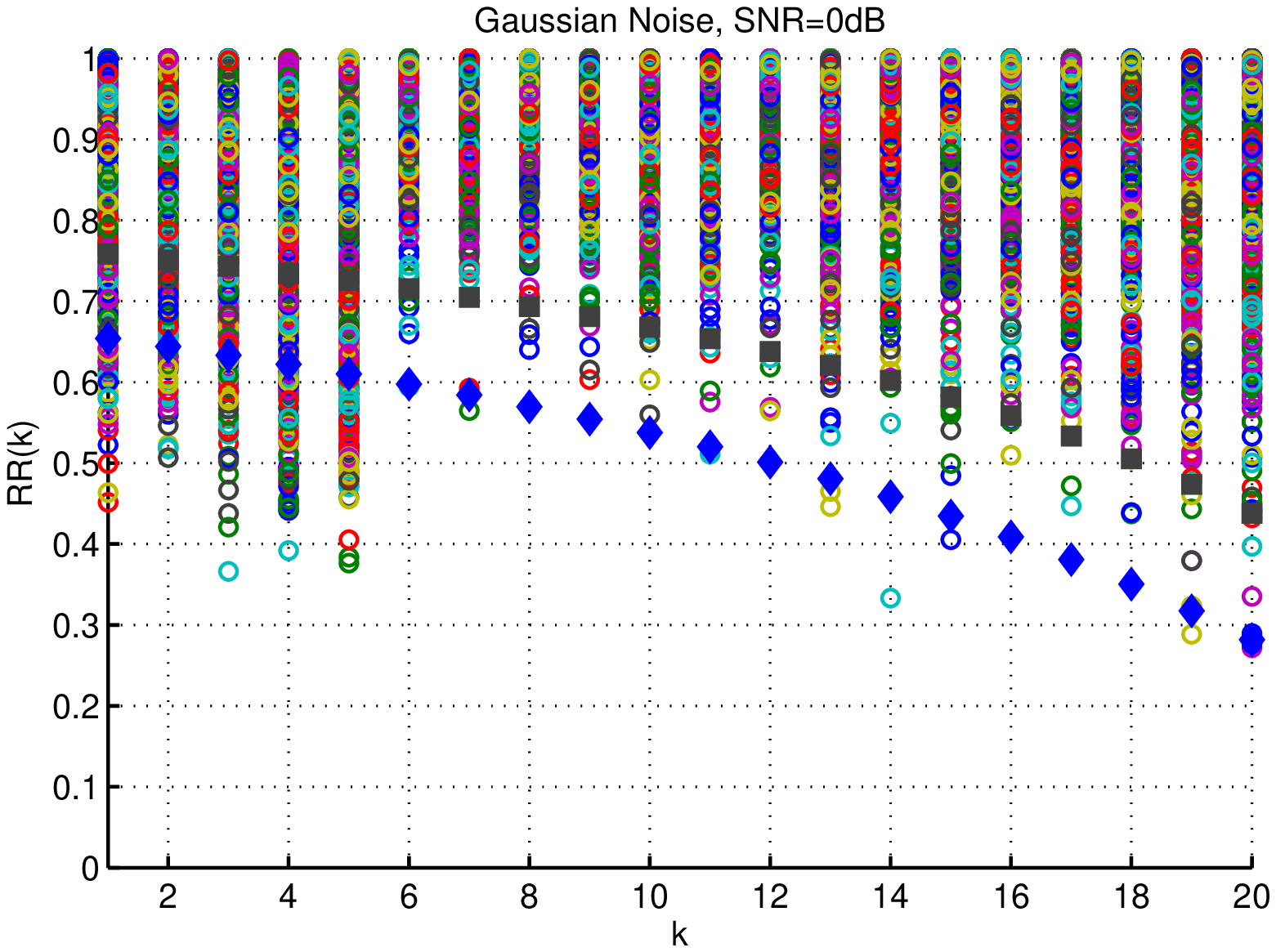} 
    \caption*{a). SNR=0dB. $\{RR(k)<\Gamma_{RRT}^{\alpha}(k),\forall\ k>k_{0}\}$ $93.5\%$ for ($\alpha=0.1$), $99\%$ for ($\alpha=0.01$) }
    
    \includegraphics[width=1\linewidth]{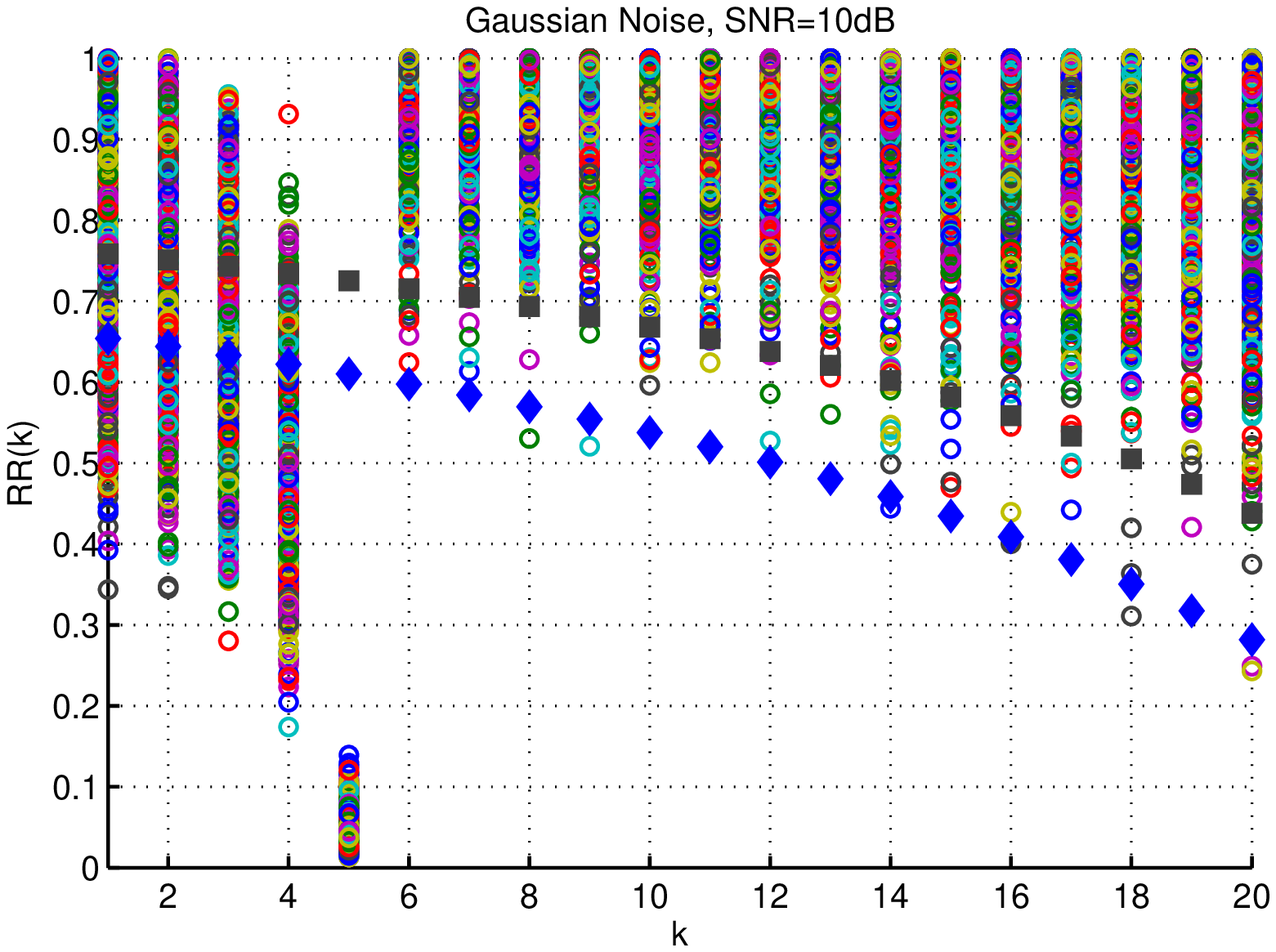}
    \caption*{b). SNR=20dB.  $\{RR(k)>\Gamma_{RRT}^{\alpha}(k),\forall\ k>k_{0}\}$ $94.1\%$ for ($\alpha=0.1$), $99.3\%$ for ($\alpha=0.01$) }
   \end{multicols} 
   \squeezeup
   \caption{Behaviour of $RR(k)$. $k_0=5$.  SNR=0dB (left) and SNR=20dB (right). Circles in Fig.1 represent the values of $RR(k)$, squares represent $\Gamma_{RRT}^{\alpha}(k)$ with $\alpha=0.1$ and diamonds represent $\Gamma_{RRT}^{\alpha}(k)$ with $\alpha=0.01$.  }
   \label{fig:evolution}
   \squeezeup
\end{figure*}
We next numerically validate the distributional results derived in previous subsections, \textit{viz}.  Theorem \ref{thm:beta_mos} and Theorem \ref{thm:rrknot}.  Consider a $30 \times 20$ design matrix ${\bf X}$ generated using independent $\mathcal{N}(0,1/n)$ entries.   $k_0$ is set at $k_0=5$ and $\boldsymbol{\beta}_k=\pm 1$ for all $k\leq k_0$. We plot 1000 realizations of $\{RR(k)\}_{k=1}^{p}$ at two different SNRs, \textit{viz} SNR=0dB (Fig.1.a) and SNR=20dB (Fig.1.b). From these plots and the empirically evaluated probabilities of $\{RR(k)>\Gamma_{RRT}^{\alpha}(k),\forall k>k_0\}$ reported alongside, it is clear that the $1-\alpha$ probability bound predicted by Theorem \ref{thm:beta_mos} holds true. Further, as one can see from Fig.1.a and Fig.1.b, the value of $RR(k_0)=RR(5)$ decreases with increasing SNR. This observation is in agreement  with the convergence result $RR(k_0)\overset{P}{\rightarrow} 0$ as $\sigma^2\rightarrow 0$ in Theorem \ref{thm:rrknot}.

\section{Residual ratio thresholding based MOS}
From the behaviour of $RR(k)$ discussed analytically and numerically in section \rom{2}, it is clear that $RR(k)$ for $k>k_0$ is larger than $\Gamma_{RRT}^{\alpha}(k)$ with a very high probability (for smaller values of $\alpha$), whereas, $RR(k_0)$ decreases to zero with increasing SNR or equivalently decreasing $\sigma^2$. Consequently, $RR(k_0)$ will be smaller than $\Gamma_{RRT}^{\alpha}(k_0)$ at high SNR, whereas, $RR(k)>\Gamma_{RRT}^{\alpha}(k)$ for all $k>k_0$ with a high probability. Hence, with increasing SNR, the model order estimate
\begin{equation}\label{rrt}
\hat{k}_{RRT}=\max\{k:RR(k)\leq \Gamma_{RRT}^{\alpha}(k)\}
\end{equation}
will corresponds to $k_0$ with a very high probability.   This is the RRT based MOS criterion  proposed  in this article. The efficacy of RRT is visible from Fig.1.b itself where $\hat{k}_{RRT}=k_0$ with probability $94.1\%$ for $\alpha=0.1$ and probability $99.3\%$ for $\alpha=0.01$ respectively.
\begin{remark} \label{rem:substitute}
 An important aspect regarding the RRT based MOS in (\ref{rrt}) is the choice of $\hat{k}_{RRT}$ when the set $\{k:RR(k)<\Gamma_{RRT}^{\alpha}(k)\}=\phi$. This situation happens only at very low SNR. Further, throughout this article, we assumed that $k_0\geq 1$. Hence, setting $\hat{k}_{RRT}=0$ when $\{k:RR(k)<\Gamma_{RRT}^{\alpha}(k)\}=\phi$ is not a prudent choice.  In this article, we set $\hat{k}_{RRT}=\max\{k:RR(k)\leq \Gamma_{RRT}^{\alpha_{new}}(k)\}$ where
\begin{equation}\label{alphanew}
\alpha_{new}=\underset{a>\alpha}{\min} \{a: \{k:RR(k)\leq \Gamma_{RRT}^{\alpha}(k)\}\neq\phi\}.
\end{equation}
Since $\alpha=p$ gives $\Gamma_{RRT}^{\alpha}(1)=1$ and $RR(1)\leq 1$,   a value of $\alpha_{new}\leq p$ always exist. $\alpha_{new}$ can be easily computed by first pre-computing  $\{\Gamma_{RRT}^{a}(k)\}_{k=1}^p$ for say 100 prefixed values of $a$ in the interval $(\alpha,p]$.   Note that the value of $\alpha_{new}$ can be greater than 1 and hence $\alpha_{new}$ does not have any probabilistic interpretation. Please note that $\Gamma_{RRT}^{\alpha}(k)$ and $\{\Gamma_{RRT}^{a}(k)\}_{k=1}^p$ in $(\alpha,p]$ can all be precomputed. Hence, the online computational complexity of RRT is same as that of ITC in (\ref{itc}).
\end{remark}
\begin{remark} RRT is directly based  on the evolution of residual norms and residual ratios with increasing SNR. This is in contrast with AIC, BIC etc. whose penalty terms  are based on information theoretic arguments and their asymptotic approximations. This is a fundamental philosophical difference between AIC, BIC etc. and RRT. In this sense, RRT is philosophically closer to PAL\cite{stoica2013model}, whose penalty term is also derived directly from the behaviour of residual norms. { The fact that RRT is based on finite sample results directly related to the statistics involved in MOS explains the superior performance of RRT \textit{viz a viz} BIC, AIC etc. when the sample size is small (see section \rom{4}).} 
\end{remark}

\subsection{High SNR behaviour and interpretation of $\alpha$}
We next explain the high SNR behaviour of $\hat{k}_{RRT}$. Define the probabilities of overestimation and underestimation of $\hat{k}_{RRT}$ as $\mathbb{P}_{\mathcal{O}}=\mathbb{P}(\{\hat{k}_{RRT}>k_0\})$ for $k_0<p$  and  $\mathbb{P}_{\mathcal{U}}=\mathbb{P}(\{\hat{k}_{RRT}<k_0\})$ for $k_0>1$ respectively. 
\begin{thm} 
\label{thm:highSNR}Overestimation and underestimation probabilities of RRT satisfy   $\underset{\sigma^2 \rightarrow 0}{\lim }\mathbb{P}_{\mathcal{O}}\leq \alpha$ and  $\underset{\sigma^2 \rightarrow 0}{\lim }\mathbb{P}_{\mathcal{U}}=0$ respectively. Consequently, $\underset{\sigma^2 \rightarrow 0}{\lim }PCS\geq 1-\alpha$.
\end{thm}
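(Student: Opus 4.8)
The plan is to split the error event according to the partition $\{\hat{k}_{RRT}<k_0\}\cup\{\hat{k}_{RRT}=k_0\}\cup\{\hat{k}_{RRT}>k_0\}$, bound the overestimation term using Theorem \ref{thm:beta_mos}, bound the underestimation term using Theorem \ref{thm:rrknot}, and then assemble the lower bound on $PCS=1-\mathbb{P}_{\mathcal{O}}-\mathbb{P}_{\mathcal{U}}$.

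For the overestimation probability I expect the bound to follow almost immediately from Theorem \ref{thm:beta_mos}. By the definition $\hat{k}_{RRT}=\max\{k:RR(k)\leq\Gamma_{RRT}^{\alpha}(k)\}$, the event $\{\hat{k}_{RRT}>k_0\}$ coincides with $\{\exists k>k_0: RR(k)\leq \Gamma_{RRT}^{\alpha}(k)\}$, i.e. the complement of the event lower bounded by $1-\alpha$ in Theorem \ref{thm:beta_mos}. Since each $RR(k)$ has a continuous (Beta) law, the boundary $\{RR(k)=\Gamma_{RRT}^{\alpha}(k)\}$ carries zero probability, so $\mathbb{P}_{\mathcal{O}}\leq\alpha$ for every $\sigma^2>0$; a fortiori $\lim_{\sigma^2\to 0}\mathbb{P}_{\mathcal{O}}\leq\alpha$.

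The underestimation probability is where Theorem \ref{thm:rrknot} does the work. The key observation is the containment $\{\hat{k}_{RRT}<k_0\}\subseteq\{RR(k_0)>\Gamma_{RRT}^{\alpha}(k_0)\}$: if the largest index passing the threshold test lies strictly below $k_0$, then $k_0$ itself must fail the test. The threshold $\Gamma_{RRT}^{\alpha}(k_0)=F_{\frac{n-k_0}{2},\frac{1}{2}}^{-1}(\alpha/p)$ is a fixed strictly positive constant, independent of $\sigma^2$ and positive for every $\alpha>0$ (it increases with $\alpha$ from the value $0$ at $\alpha=0$). Invoking the definition of convergence in probability from Theorem \ref{thm:rrknot} with $\epsilon=\Gamma_{RRT}^{\alpha}(k_0)$ and using $RR(k_0)\geq 0$ gives $\mathbb{P}_{\mathcal{U}}\leq\mathbb{P}(RR(k_0)>\Gamma_{RRT}^{\alpha}(k_0))\to 0$ as $\sigma^2\to 0$.

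Assembling these, $\liminf_{\sigma^2\to 0}PCS=1-\limsup_{\sigma^2\to 0}\mathbb{P}_{\mathcal{O}}-\lim_{\sigma^2\to 0}\mathbb{P}_{\mathcal{U}}\geq 1-\alpha$. The only genuine obstacle is the underestimation step, whose entire content is spotting the containment above and recognising that $\Gamma_{RRT}^{\alpha}(k_0)$ is a positive $\sigma^2$-free constant so that Theorem \ref{thm:rrknot} applies; everything else is bookkeeping, together with the degenerate cases $k_0=p$ (where $\mathbb{P}_{\mathcal{O}}=0$) and $k_0=1$ (where $\mathbb{P}_{\mathcal{U}}=0$). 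I would finally note that the empty-set modification of Remark \ref{rem:substitute} leaves the high-SNR limits untouched: since $RR(k_0)\overset{P}{\rightarrow}0<\Gamma_{RRT}^{\alpha}(k_0)$, the defining set contains $k_0$ (hence is nonempty, and $\alpha_{new}=\alpha$) with probability tending to one, so the event on which the substitution acts has vanishing probability as $\sigma^2\to 0$.
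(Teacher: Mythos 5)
Your proof is correct and follows essentially the same route as the paper's Appendix C: Theorem \ref{thm:beta_mos} controls overestimation, the containment $\{\hat{k}_{RRT}<k_0\}\subseteq\{RR(k_0)>\Gamma_{RRT}^{\alpha}(k_0)\}$ together with Theorem \ref{thm:rrknot} kills underestimation, and the empty-set substitution of Remark \ref{rem:substitute} is disposed of by showing its probability vanishes as $\sigma^2\rightarrow 0$ (the paper packages this as an explicit event $\mathcal{A}_2$ appearing in both decompositions). The only nit is that your intermediate claims that $\{\hat{k}_{RRT}>k_0\}$ \emph{coincides} with $\{\exists k>k_0: RR(k)\leq\Gamma_{RRT}^{\alpha}(k)\}$ and that $\mathbb{P}_{\mathcal{O}}\leq\alpha$ for every $\sigma^2>0$ overlook the contribution of the $\alpha_{new}$ branch at finite SNR, but your closing paragraph repairs this for the limit, which is all the theorem asserts.
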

\begin{proof} Please see Appendix C.
\end{proof}
\begin{remark} Theorem 1 gives a straight forward operational interpretation for the tuning parameter $\alpha$ as the high SNR upper bound on the probability of overestimation $\underset{\sigma^2 \rightarrow 0}{\lim }\mathbb{P}_{\mathcal{O}}\leq \alpha$ and probability of error $\underset{\sigma^2 \rightarrow 0}{\lim }1-PCS\leq \alpha$. Such a straight forward semantic interpretation is not available for the tuning parameters in AIC, BIC etc.  At all SNR where the set $\{k:RR(k)<\Gamma_{RRT}^{\alpha}(k)\}\neq \phi$, overestimation probability is given by $\mathbb{P}(\exists k>k_0: RR(k)<\Gamma_{RRT}^{\alpha}(k))$ which by Theorem \ref{thm:beta_mos} is less than $\alpha$. Further, the  probability that the set $\{k:RR(k)<\Gamma_{RRT}^{\alpha}(k)\}= \phi$ is very low at all practical SNR regimes. Hence,   the  bound $\mathbb{P}_{\mathcal{O}}\leq \alpha$ hold true even when the SNR is very low. However, the bound $1-PCS\leq \alpha$ hold true only when the SNR is very high. These observations will be numerically validated in section \rom{4}.
\end{remark}
\begin{remark} While designing the penalty function $h(k,\sigma^2_k)$ in ITC (\ref{itc}) or the parameter $\alpha$ in RRT, the user has control only over the high SNR behaviour of $\mathbb{P}_{\mathcal{O}}$. When $h(k,\sigma^2_k)$ is of the form $vk$ for some fixed parameter $v>0$ (like AIC, BIC etc.) and the user requires the high SNR $\mathbb{P}_{\mathcal{O}}$ to be lower than a predefined value $\mathbb{P}^{des}_{\mathcal{O}}$, \cite{designITC} proposed  to set $v=v^{des}$, where $v^{des}$  is  the minimum value of $v$  that delivers $\underset{\sigma^2 \rightarrow 0}{\lim}\mathbb{P}_{\mathcal{O}}\leq \mathbb{P}^{des}_{\mathcal{O}}$ assuming that $k_0=0$. The case of $k_0=0$ is worst case scenario in terms of overestimation.  To operate RRT satisfying  $\underset{\sigma^2 \rightarrow 0}{\lim}\mathbb{P}_{\mathcal{O}}\leq \mathbb{P}^{des}_{\mathcal{O}}$, one can set $\alpha=\mathbb{P}^{des}_{\mathcal{O}}$. Numerical simulations indicate that for the same value of $\mathbb{P}^{des}_{\mathcal{O}}$, RRT  very often  deliver PCS higher than that of the design criteria in \cite{designITC} in the low to moderate SNR regime.
\end{remark}
\subsection{High SNR inconsistency of RRT} 
From the $RR(k)\sim \mathbb{B}\left(\dfrac{n-k}{2},\dfrac{1}{2}\right)$ distribution in (\ref{beta_prelim}) and $\Gamma_{RRT}^{\alpha}(k)=F_{\frac{n-k}{2},\frac{1}{2}}^{-1}(\frac{\alpha}{p})$, it is true that  $\mathbb{P}(RR(k)<\Gamma_{RRT}^{\alpha}(k))=F_{\frac{n-k}{2},\frac{1}{2}}(F_{\frac{n-k}{2},\frac{1}{2}}^{-1}(\frac{\alpha}{p}))=\alpha/p$ for $k>k_0$. This implies that 
\begin{equation}
\mathbb{P}_{\mathcal{O}}\geq \mathbb{P}(\exists k>k_0: RR(k)<\Gamma_{RRT}^{\alpha}(k))\geq \alpha/p>0,\forall \sigma^2>0.
\end{equation}
Consequently, RRT with $\sigma^2$ independent values of $\alpha$ is not high SNR consistent. 
However, even in  a small scale problem with $p=10$, the lower bound on $\mathbb{P}_{\mathcal{O}}$ gives $0.01$ for $\alpha=0.1$ and $0.001$ for $\alpha=0.01$,  whereas, the upper bound gives $0.1$ and $0.01$ respectively.  Hence, for values of $\alpha$ like  $\alpha=0.1$ or $\alpha=0.01$,   the difference in PCS between a high SNR consistent MOS and RRT at high SNR would be negligible. Please note that MOS criteria like AIC, BIC, design criteria in \cite{designITC} etc. are also inconsistent at high SNR.  Further, numerical simulations indicate that RRT outperforms  high SNR consistent MOS  criteria such as \cite{tsp, stoica2013model} etc.  in the low and moderate SNR regimes. Consequently, the negligible performance loss at high SNR due to inconsistency is compensated by the good overall performance of RRT. Also please note that the high SNR performance of RRT with $\alpha=0.1$ or $\alpha=0.01$ is better than that of other high SNR inconsistent criteria like AIC, BIC etc. when the sample size is small.  
\subsection{Large sample behaviour of $\Gamma_{RRT}^{\alpha}(k_0)$}
\begin{figure*}[htb]
\begin{multicols}{3}

    \includegraphics[width=1\linewidth]{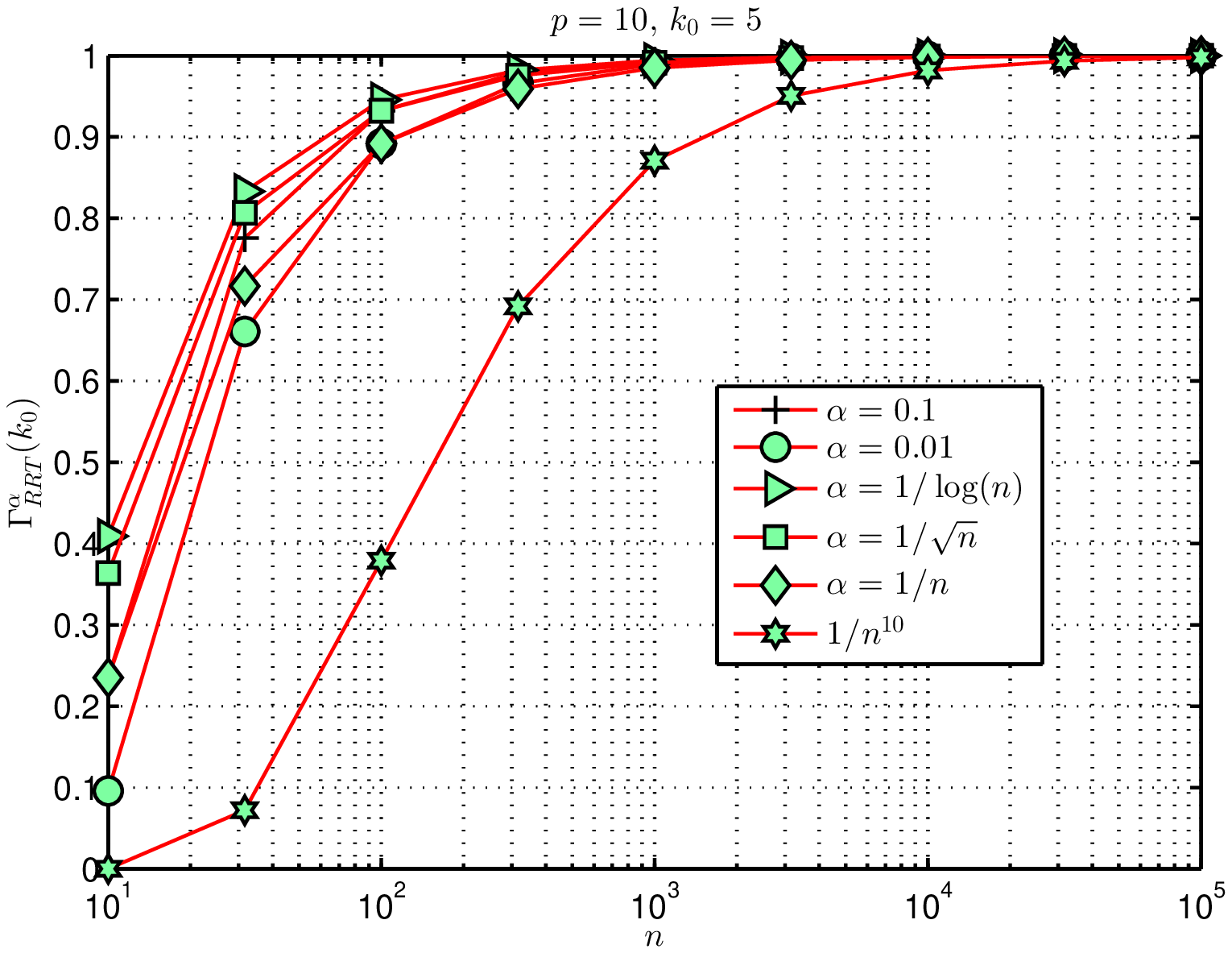} 
    \caption*{a). $p$, $k_0$ fixed. $n$ increasing.}
    
    \includegraphics[width=1\linewidth]{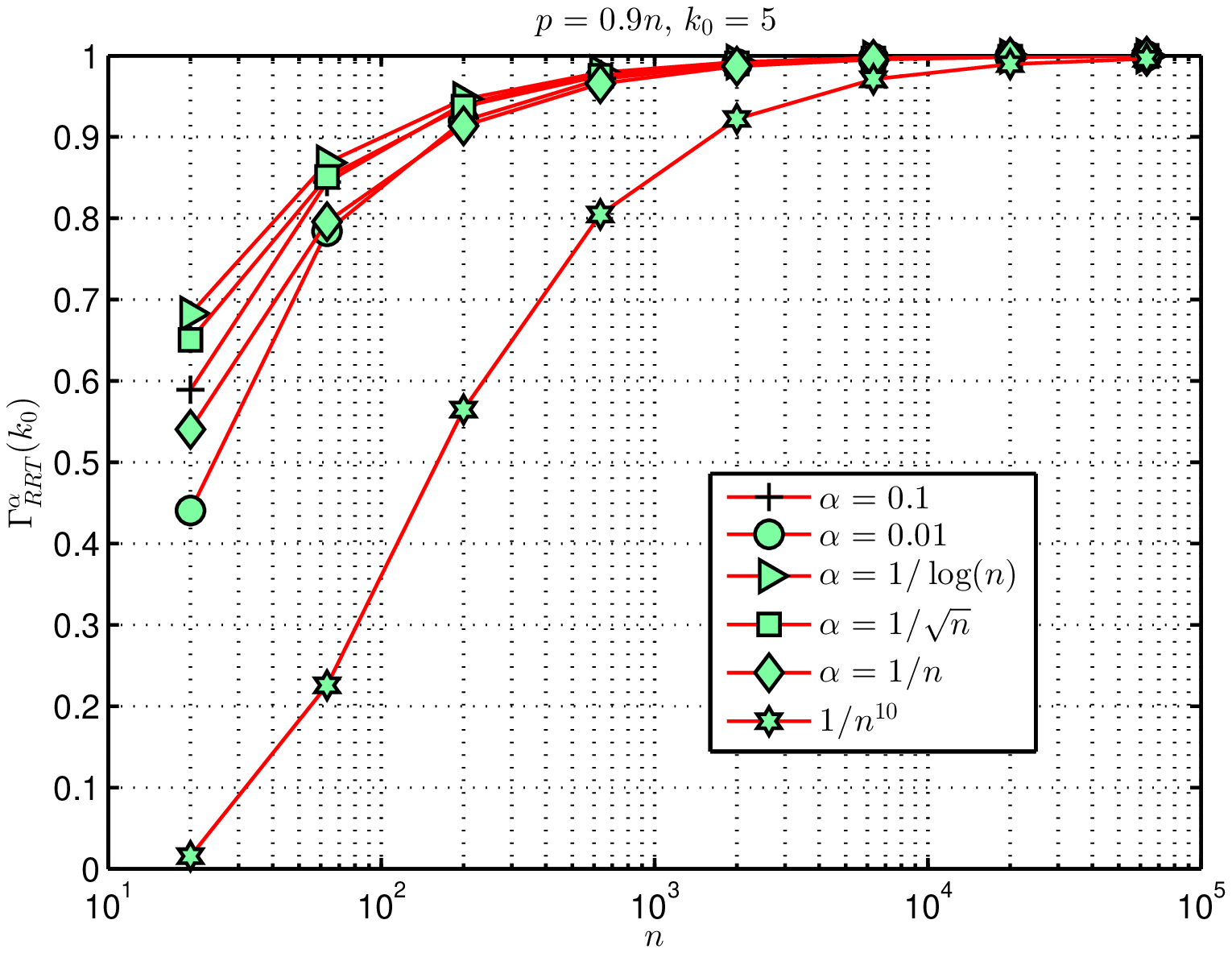} 
    \caption*{b).  $k_0$ fixed. $p=0.9n$, $n$ increasing.}
    
     \includegraphics[width=1\linewidth]{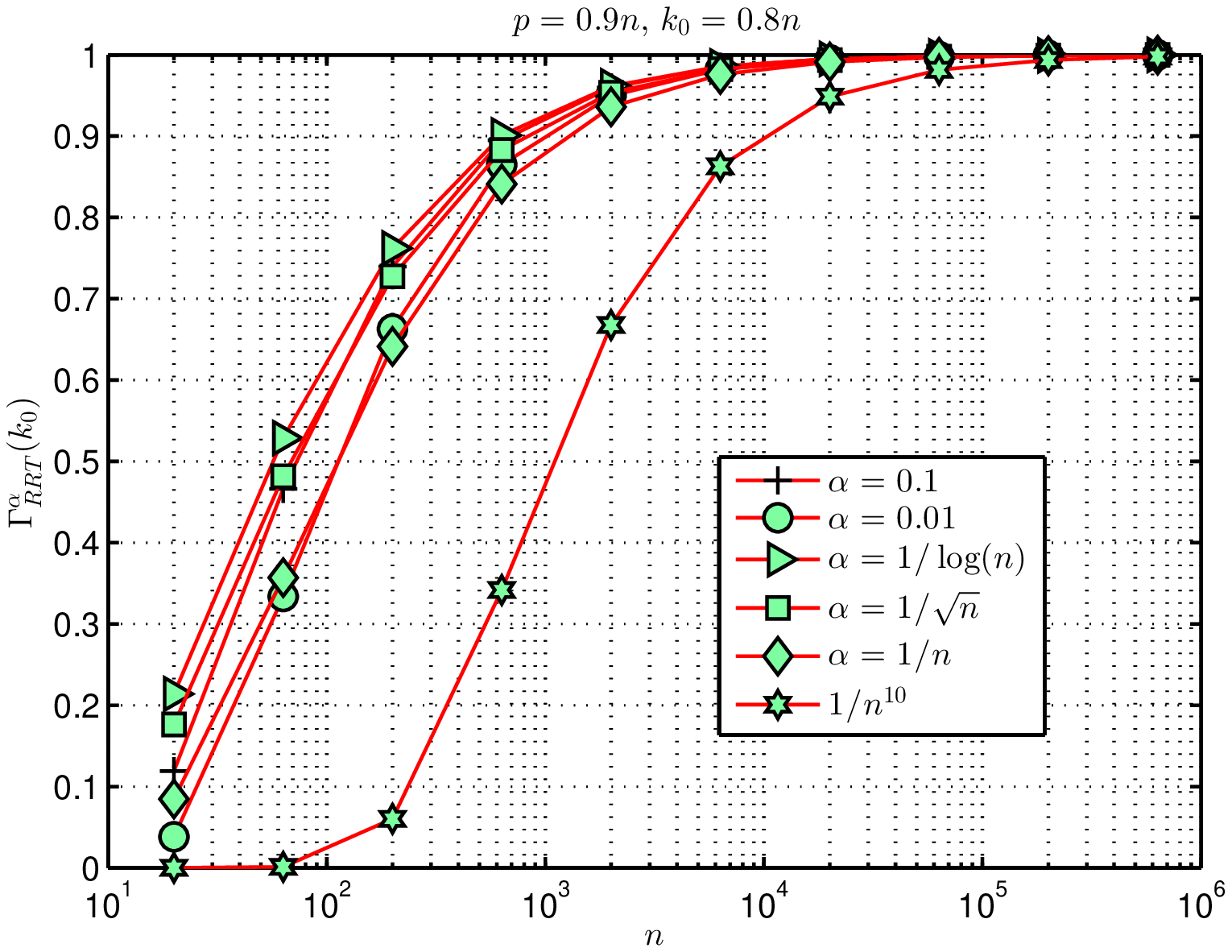} 
    \caption*{c). $k_0=0.8n$, $p=0.9n$, $n$ increasing. }
    
   \end{multicols} 
   \squeezeup
   \caption{Asymptotic behaviour of $RR(k_0)$. }
   \label{fig:asymptotic}
   \squeezeup
\end{figure*}
In the following two subsections, we evaluate the large sample behaviour of RRT. As a prelude, we first analyse the behaviour of the function $\Gamma_{RRT}^{\alpha}(k_0)$ as $n \rightarrow  \infty$. 
\begin{thm}\label{thm:asymptotic_rrt}
Let $n$ increase to $\infty$ such that $p/n\rightarrow [0,1)$ and $k_{lim}=\underset{n\rightarrow \infty}{\lim}k_0/n\in[0,1)$. Parameter $0\leq \alpha\leq 1$ is either a fixed number or a function of $n$  with limits $\underset{n \rightarrow \infty}{\lim}\alpha=0$ and  $-\infty\leq \alpha_{lim}=\underset{n\rightarrow \infty}{\lim}\log(\alpha)/n\leq 0$. Then, $\Gamma_{RRT}^{\alpha}(k_0)=F_{\frac{n-k_0}{2},\frac{1}{2}}^{-1}(\frac{\alpha}{p})$ satisfies the following asymptotic limits.\\
A1). $\underset{n \rightarrow \infty}{\lim}\Gamma_{RRT}^{\alpha}(k_0)=1$ if $\alpha_{lim}=0$.\\
A2). $0<\underset{n \rightarrow \infty}{\lim}\Gamma_{RRT}^{\alpha}(k_0)=e^{\frac{2\alpha_{lim}}{1-k_{lim}}}<1$ if $-\infty<\alpha_{lim}<0$.\\
A3). $\underset{n \rightarrow \infty}{\lim}\Gamma_{RRT}^{\alpha}(k_0)=0$ if $\alpha_{lim}=-\infty$.
\end{thm}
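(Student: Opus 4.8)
The plan is to analyze the inverse CDF $F_{\frac{n-k_0}{2},\frac{1}{2}}^{-1}(\alpha/p)$ through the asymptotic behaviour of the Beta CDF near its endpoints, since the argument $\alpha/p$ tends to $0$ under all three regimes. The key is that $\Gamma_{RRT}^{\alpha}(k_0)$ is precisely the value $x$ solving $F_{\frac{n-k_0}{2},\frac{1}{2}}(x)=\alpha/p$, so I would study how small $x$ must be to make the left tail of the $\mathbb{B}(\frac{n-k_0}{2},\frac{1}{2})$ distribution equal to $\alpha/p$, as $n\to\infty$ with $k_0/n\to k_{lim}$.

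First I would write the incomplete-beta integral explicitly: for $x\in[0,1]$,
\begin{equation}
F_{\frac{n-k_0}{2},\frac{1}{2}}(x)=\frac{1}{B\!\left(\frac{n-k_0}{2},\frac{1}{2}\right)}\int_{0}^{x}t^{\frac{n-k_0}{2}-1}(1-t)^{-\frac{1}{2}}\,dt.
\end{equation}
Because the first shape parameter $a=\frac{n-k_0}{2}$ grows linearly in $n$ while the second is fixed at $\frac{1}{2}$, the integrand is dominated by its behaviour near $t=x$, and the density concentrates near $t=1$. I would therefore obtain sharp two-sided estimates on this left-tail probability in terms of $x^{a}$, using that $\int_0^x t^{a-1}(1-t)^{-1/2}dt \asymp \frac{x^{a}}{a}(1-x)^{-1/2}$ for the relevant ranges of $x$, together with the asymptotics of $B(a,\frac12)\sim \Gamma(\frac12)a^{-1/2}$ from Stirling's formula for the Gamma function $G(x)$. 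The upshot is that, to leading exponential order, $\log F_{a,\frac12}(x)\approx a\log x + O(\log a)$, so setting this equal to $\log(\alpha/p)=\log\alpha-\log p$ and using $a=\frac{n-k_0}{2}\sim \frac{n(1-k_{lim})}{2}$ gives $\log x \sim \frac{2}{n(1-k_{lim})}\bigl(\log\alpha-\log p\bigr)$.

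From this single estimate all three cases follow by taking limits: since $\log p = O(\log n)= o(n)$, the term $\log p/n\to 0$ always drops out, and the limit of $\log\Gamma_{RRT}^{\alpha}(k_0)$ is governed entirely by $\alpha_{lim}=\lim \log(\alpha)/n$. In regime A1, $\alpha_{lim}=0$ forces $\log x\to 0$, hence $\Gamma_{RRT}^{\alpha}(k_0)\to 1$; in regime A2, $\log\Gamma_{RRT}^{\alpha}(k_0)\to \frac{2\alpha_{lim}}{1-k_{lim}}$, yielding the claimed limit $e^{\frac{2\alpha_{lim}}{1-k_{lim}}}\in(0,1)$; and in regime A3, $\alpha_{lim}=-\infty$ drives $\log x\to-\infty$, so $\Gamma_{RRT}^{\alpha}(k_0)\to 0$.

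The main obstacle I anticipate is making the tail estimate uniform and tight enough that the $O(\log a)$ and the $(1-x)^{-1/2}$ correction terms are genuinely negligible after dividing by $n$ — in particular, in regime A2 one needs the \emph{constant} (not just exponential) limit, so the argument must confirm that the subexponential factors $a^{-1/2}$, $(1-x)^{-1/2}$, and the Beta-function normalization contribute only $o(n)$ to $\log F$ and therefore vanish once divided by $n$ in the limit. This requires controlling $1-x$ as $n\to\infty$: since in A2 the solution $x\to e^{2\alpha_{lim}/(1-k_{lim})}$ stays bounded away from $1$, the factor $(1-x)^{-1/2}$ is $O(1)$ and causes no trouble, whereas in A1, where $x\to1$, I would need a slightly more careful expansion (e.g.\ writing $1-x=\varepsilon_n$ and checking $a\log x\to 0$ forces $a\,\varepsilon_n\to0$) to confirm the limit is exactly $1$. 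I would present these endpoint refinements as the technical heart of the argument, with the exponential-order computation serving as the backbone.
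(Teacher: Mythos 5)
Your approach is correct but takes a genuinely different route from the paper. The paper inverts the Beta CDF by quoting two external special-function facts: a cited proposition that $F^{-1}_{a,b}(x)\rightarrow 1$ as $a\rightarrow\infty$ for fixed $x\in(0,1)$ (to handle $\alpha$ and $p$ both fixed), and a series expansion of $F^{-1}_{a,b}(z)$ about $z=0$ whose terms are powers of $\rho=(azB(a,b))^{1/a}$; it then computes $\lim\log\rho(n,l)$ in the three regimes and argues that only the $l=1$ term survives. You instead work with the forward CDF, sandwiching the incomplete beta integral between $x^{a}/a$ and $x^{a}(1-x)^{-1/2}/a$, using Stirling for $B(a,\tfrac{1}{2})$ to get $\log F_{a,\frac{1}{2}}(x)=a\log x+O(\log a)+O\bigl(\log\tfrac{1}{1-x}\bigr)$, and then solving $\log F=\log(\alpha/p)$. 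Your route is more elementary and self-contained, and it sidesteps a point the paper glosses over, namely whether the infinite series for $F^{-1}$ can be truncated uniformly in $n$. Two refinements would tighten your write-up. First, $\alpha/p$ does not tend to $0$ when both $\alpha$ and $p$ are fixed (the paper's Case 1), but your integral bounds hold for any $x\in(0,1)$ and any fixed argument $z$, so this is only a framing slip. Second, the endpoint difficulty you flag in A1 disappears entirely if you invert by monotonicity at fixed test points rather than solving for $x$ directly: for any fixed $c<1$ your upper bound gives $\log F_{\frac{n-k_0}{2},\frac{1}{2}}(c)=\frac{n-k_0}{2}\log c+O(\log n)$, which is eventually smaller than $\log(\alpha/p)=n(\alpha_{lim}+o(1))$ whenever $\frac{1-k_{lim}}{2}\log c<\alpha_{lim}$ (vacuously true for all $c<1$ when $\alpha_{lim}=0$), forcing $\Gamma_{RRT}^{\alpha}(k_0)>c$; combined with $\Gamma_{RRT}^{\alpha}(k_0)\leq 1$ and the symmetric argument from the lower bound $F_{\frac{n-k_0}{2},\frac{1}{2}}(c)\geq c^{a}/(aB(a,\tfrac{1}{2}))$ for $c$ above the claimed limit, all three cases follow without ever evaluating $(1-x)^{-1/2}$ at a point approaching $1$.
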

\begin{proof}Please see Appendix D.
\end{proof}
Theorem \ref{thm:asymptotic_rrt} implies that when $\alpha$ is reduced to zero with increasing $n$  at a rate slower than $a^{-n}$ for some $a>1$, (i.e., $\alpha_{lim}=0$), then it is possible to achieve a value of $\Gamma_{RRT}^{\alpha}(k_0)$ arbitrarily close  to one at large $n$.  Choices of $\alpha$ that satisfy $\alpha_{lim}=0$ include $\alpha=\text{constant}$, $\alpha=1/\log(n)$, $\alpha=1/n^c$ for some $c>0$ etc. However, if one decreases $\alpha$ to zero at a rate $a^{-n}$ for some $a>1$ (i.e., $-\infty<\alpha_{lim}<0$), then it is impossible to achieve a value of $\Gamma_{RRT}^{\alpha}(k_0)$ closer to one. However, $\Gamma_{RRT}^{\alpha}(k_0)$ will still be bounded away from zero. When $\alpha$ is reduced to zero at a rate faster than $a^{-n}$ for some $a>1$ (say $a^{-n^2}$), then $\Gamma_{RRT}^{\alpha}(k_0)$  converges to zero with increasing $n$. This behaviour of $\Gamma_{RRT}^{\alpha}({k_0})$  have a profound impact on the large sample behaviour of RRT.

Theorem \ref{thm:asymptotic_rrt} is numerically validated in Fig.\ref{fig:asymptotic} where we plot $\Gamma_{RRT}^{\alpha}(k_0)$ for three asymptotic regimes of practical interest, \textit{viz.}, a). $(p,k_0)$ fixed and $n\rightarrow \infty$,  b). $k_0$ fixed while $(p,n)\rightarrow \infty$ and c). $(n,p,k_0)\rightarrow \infty$. In all the three asymptotic regimes, adaptations of $\alpha$ satisfying $\alpha_{lim}= 0$ achieve $\underset{n \rightarrow \infty}{\lim}\Gamma_{RRT}^{\alpha}(k_0)=1$.  These numerical results are in accordance with Theorem \ref{thm:asymptotic_rrt}. 

\subsection{Large sample consistency of RRT}
In this section, we establish the conditions required for the large sample consistency of RRT, i.e., $\underset{n \rightarrow \infty}{\lim}PCS=1$. The main result in this section is Theorem \ref{thm:large_sample} presented below. 
\begin{thm}\label{thm:large_sample} Consider a situation where $n \rightarrow \infty$ such that \\
a) $0\leq k_{lim}=\underset{n \rightarrow \infty}{\lim}k_0/n<1$. \\
b). $\exists M_1>0$ and $n_0<\infty$ satisfying $\|({I}_{n}-{\bf P}_{k_0-1}){\bf x}_{k_0}\|_2^2|\boldsymbol{\beta}_{k_0}|^2/\sigma^2\geq M_1n>0$ for all $n>n_0$. 
Then \\
R1). RRT is large sample consistent provided that the parameter $\alpha$ satisfies $\alpha_{lim}=\underset{n \rightarrow \infty}{\lim}\log(\alpha)/n=0$ and $\underset{n \rightarrow \infty}{\lim}\alpha=0$.\\
R2). For a fixed $0<\alpha\leq 1$, $\underset{n \rightarrow \infty}{\lim}\mathbb{P}_{\mathcal{U}}=0$ and $\underset{n \rightarrow \infty}{\lim}\mathbb{P}_{\mathcal{O}}\leq \alpha$.
\end{thm}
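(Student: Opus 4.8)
The plan is to decompose the error probability as $1-PCS=\mathbb{P}_{\mathcal{O}}+\mathbb{P}_{\mathcal{U}}$ and control the two terms separately. The overestimation term is immediate from the finite-sample bound already proved: since $RR(k)\sim\mathbb{B}(\frac{n-k}{2},\frac{1}{2})$ for $k>k_0$ (Lemma \ref{lemma:basic_distributions}a), Theorem \ref{thm:beta_mos} gives $\mathbb{P}_{\mathcal{O}}=\mathbb{P}(\exists k>k_0: RR(k)\leq \Gamma_{RRT}^{\alpha}(k))\leq \alpha$ for every $n$ and every $\sigma^2>0$. Hence $\underset{n\rightarrow\infty}{\lim}\mathbb{P}_{\mathcal{O}}\leq \alpha$ unconditionally, which is the overestimation half of R2, and $\mathbb{P}_{\mathcal{O}}\rightarrow 0$ whenever $\alpha\rightarrow 0$, which covers the overestimation half of R1. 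The whole difficulty therefore lies in showing $\mathbb{P}_{\mathcal{U}}\rightarrow 0$.

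For underestimation I would first record the set inclusion $\{\hat{k}_{RRT}<k_0\}\subseteq\{RR(k_0)>\Gamma_{RRT}^{\alpha}(k_0)\}$: if $RR(k_0)\leq \Gamma_{RRT}^{\alpha}(k_0)$ then $k_0$ lies in the thresholded set, forcing $\hat{k}_{RRT}\geq k_0$; moreover the empty-set situation handled by the substitution rule of Remark \ref{rem:substitute} is itself contained in $\{RR(k_0)>\Gamma_{RRT}^{\alpha}(k_0)\}$, so the inclusion is robust to that correction. Thus $\mathbb{P}_{\mathcal{U}}\leq \mathbb{P}(RR(k_0)>\Gamma_{RRT}^{\alpha}(k_0))$, and it suffices to show the threshold eventually dominates $RR(k_0)$ with probability tending to one. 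Under both hypotheses we have $\alpha_{lim}=0$ (assumed in R1, and automatic in the fixed-$\alpha$ case of R2 since $\log(\alpha)/n\rightarrow 0$), so Theorem \ref{thm:asymptotic_rrt} (case A1) gives $\underset{n\rightarrow\infty}{\lim}\Gamma_{RRT}^{\alpha}(k_0)=1$.

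The core estimate is then to show that $RR(k_0)$ stays bounded away from $1$ in probability. Using Lemma \ref{lemma:basic_distributions}b I would write $RR(k_0)=\frac{\tilde{Z}_1}{\tilde{Z}_1+\tilde{Z}_2}$ with $\tilde{Z}_1=Z_1/\sigma^2\sim\chi^2_{n-k_0}$ and $\tilde{Z}_2=Z_2/\sigma^2\sim\chi^2_1(\lambda)$, where $\lambda=\|({\bf I}_n-{\bf P}_{k_0-1}){\bf x}_{k_0}\|_2^2\boldsymbol{\beta}_{k_0}^2/\sigma^2$. Assumption a) ($k_{lim}<1$) forces $n-k_0\rightarrow\infty$, so Chebyshev's inequality yields $\tilde{Z}_1/(n-k_0)\overset{P}{\rightarrow}1$; assumption b) forces $\lambda\geq M_1 n\rightarrow\infty$, and since $\chi^2_1(\lambda)$ has mean $1+\lambda$ and variance $2(1+2\lambda)$, the same argument gives $\tilde{Z}_2/\lambda\overset{P}{\rightarrow}1$. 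Combining these by Slutsky's theorem with the deterministic bound $\lambda/(n-k_0)\geq M_1 n/(n-k_0)\rightarrow M_1/(1-k_{lim})>0$ shows that $\tilde{Z}_2/\tilde{Z}_1$ exceeds any constant below $M_1/(1-k_{lim})$ with probability tending to one. Hence $RR(k_0)=(1+\tilde{Z}_2/\tilde{Z}_1)^{-1}$ is, in probability, asymptotically no larger than $c^{*}=\frac{1-k_{lim}}{1-k_{lim}+M_1}<1$.

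To finish I would fix $\epsilon>0$ with $1-\epsilon>c^{*}$. The previous step gives $\mathbb{P}(RR(k_0)>1-\epsilon)\rightarrow 0$, while $\Gamma_{RRT}^{\alpha}(k_0)\rightarrow 1$ means $\Gamma_{RRT}^{\alpha}(k_0)>1-\epsilon$ for all large $n$; on that event $\{RR(k_0)>\Gamma_{RRT}^{\alpha}(k_0)\}\subseteq\{RR(k_0)>1-\epsilon\}$, whence $\mathbb{P}_{\mathcal{U}}\leq \mathbb{P}(RR(k_0)>1-\epsilon)\rightarrow 0$. Together with the overestimation bounds this gives R2 directly, and when additionally $\alpha\rightarrow 0$ it gives $PCS=1-\mathbb{P}_{\mathcal{O}}-\mathbb{P}_{\mathcal{U}}\rightarrow 1$, which is R1. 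I expect the main obstacle to be the concentration of the non-central term $\tilde{Z}_2\sim\chi^2_1(\lambda)$ and its clean combination with the central term $\tilde{Z}_1$ into a probabilistic lower bound on $\tilde{Z}_2/\tilde{Z}_1$ that holds uniformly as $n\rightarrow\infty$; once that bound is in place, everything else reduces to bookkeeping with Theorems \ref{thm:beta_mos} and \ref{thm:asymptotic_rrt}.
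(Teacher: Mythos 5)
Your proposal is correct and follows essentially the same route as the paper's Appendix~E: bound $\mathbb{P}_{\mathcal{U}}$ by $\mathbb{P}(RR(k_0)>\Gamma_{RRT}^{\alpha}(k_0))$, invoke Theorem~\ref{thm:asymptotic_rrt} (case A1) to send the threshold to one, and use chi-square concentration (the paper's Lemma on $\chi^2_l/l\overset{P}{\rightarrow}1$ and $\chi^2_k(Ml)/l\overset{P}{\rightarrow}M$, which you replace by a direct Chebyshev argument plus stochastic monotonicity in the noncentrality parameter) to keep $RR(k_0)$ bounded away from one, with Theorem~\ref{thm:beta_mos} handling overestimation. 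The one imprecision is your identity $\mathbb{P}_{\mathcal{O}}=\mathbb{P}(\exists k>k_0: RR(k)\leq\Gamma_{RRT}^{\alpha}(k))$: because of the $\alpha_{new}$ fallback of Remark~\ref{rem:substitute}, the empty-set event can also produce $\hat{k}_{RRT}>k_0$, so the correct statement is the paper's $\mathbb{P}_{\mathcal{O}}\leq\mathbb{P}(\mathcal{A}_3)+\mathbb{P}(\mathcal{A}_2)$; since you already show the empty-set event is contained in $\{RR(k_0)>\Gamma_{RRT}^{\alpha}(k_0)\}$ and hence has vanishing probability, the asymptotic conclusions are unaffected, but the claim that $\mathbb{P}_{\mathcal{O}}\leq\alpha$ holds for every finite $n$ is not justified as written.
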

\begin{proof}
Please see Appendix E.
\end{proof}
Theorem \ref{thm:large_sample} implies that with proper adaptations on the parameter $\alpha$, it is possible to achieve a PCS arbitrarily close to one at large sample sizes. We first relate the requirements on $\alpha$ to the probabilities of overestimation and underestimation. 
\begin{remark}  To avoid underestimation at large $n$, i.e., $\underset{n \rightarrow \infty}{\lim}\mathbb{P}_{\mathcal{U}}=0$, it is sufficient that $\alpha_{lim}=0$. By Theorem \ref{thm:asymptotic_rrt}, a fixed value of $\alpha=0.1$ or $\alpha=0.01$ is sufficient for this. The adaptation  $\alpha \rightarrow 0$ as $n \rightarrow \infty$ is necessary only to prevent overestimation.   Further, in addition to the worst case high SNR overestimation probability, the bound $\underset{n \rightarrow \infty}{\lim}\mathbb{P}_{\mathcal{O}}\leq \alpha$ implies that the parameter $\alpha$ in RRT when set independent of $n$ also has the semantic interpretation of worst case large sample overestimation probability. 
\end{remark}
\begin{remark} The only user specified parameter in RRT is $\alpha$. Theorem \ref{thm:large_sample} implies that for all choices of $\alpha$ that satisfies $\alpha_{lim}=0$ and $\underset{n \rightarrow \infty}{\lim}\alpha=0$, RRT will have similar value of PCS at large values of $n$.  Note that the conditions $\alpha_{lim}=0$ and $\underset{n \rightarrow \infty}{\lim}\alpha=0$  are satisfied by a wide range of adaptations like  $\alpha=1/\log(n)$, $\alpha=1/n$  etc.   This points to the insensitivity of RRT to the choice of $\alpha$ as $ n\rightarrow \infty$, i.e., RRT is asymptotically  tuning free.
\end{remark}
We next discuss as corollaries the specific conditions under which the SNR condition in Theorem \ref{thm:large_sample}, i.e., $\exists M_1>0$ and $n_0<\infty$ such that $\|({\bf I}_{n}-{\bf P}_{k_0-1}){\bf x}_{k_0}\|_2^2|\boldsymbol{\beta}_{k_0}|^2/\sigma^2\geq M_1n>0$ for all $n>n_0$ for   hold true. 
\begin{corollary} Let ${\bf X} \in \mathbb{R}^{n \times p}$ with $p<n$ be an orthonormal matrix and $\boldsymbol{\beta}_k=b$ for all $k\leq k_0$ and $-\infty<b<\infty$. Then SNR is given by SNR=$\|{\bf X}\boldsymbol{\beta}\|_2^2/(n\sigma^2)=\|\boldsymbol{\beta}\|_2^2/(n\sigma^2)= k_0b^2/(n\sigma^2)$. Further ${\bf X}$ orthonormal implies that $({\bf I}_n-{\bf P}_{k_0-1}){\bf x}_{k_0}={\bf x}_{k_0}$ and hence $\|({\bf I}_{n}-{\bf P}_{k_0-1}){\bf x}_{k_0}\|_2^2|\boldsymbol{\beta}_{k_0}|^2/\sigma^2=b^2/\sigma^2=n SNR/k_0$. This setting has $M_1=SNR/k_0$ and $n_0=1$. Hence, when $n$ is increased to infinity fixing $k_0$ and SNR constant, RRT is large sample consistent. 
\end{corollary}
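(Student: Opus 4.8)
The plan is to verify that the stated orthonormal setting meets the two hypotheses of Theorem~\ref{thm:large_sample}, after which large sample consistency follows directly from its part R1) under any admissible adaptation of $\alpha$ (for instance $\alpha=1/\log(n)$, which obeys $\alpha_{lim}=0$ and $\underset{n \rightarrow \infty}{\lim}\alpha=0$). Hypothesis a) is immediate: with $k_0$ held fixed while $n\rightarrow\infty$ we have $k_0/n\rightarrow 0$, so $k_{lim}=0\in[0,1)$.

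The substance lies in hypothesis b), namely producing $M_1>0$ and $n_0<\infty$ with $\|({\bf I}_n-{\bf P}_{k_0-1}){\bf x}_{k_0}\|_2^2|\boldsymbol{\beta}_{k_0}|^2/\sigma^2\geq M_1 n$ for all $n>n_0$, and I would evaluate the left-hand side exactly. Orthonormality of ${\bf X}$ gives ${\bf x}_{k_0}\perp span({\bf X}_{[k_0-1]})$, so ${\bf P}_{k_0-1}{\bf x}_{k_0}={\bf 0}_n$ and $({\bf I}_n-{\bf P}_{k_0-1}){\bf x}_{k_0}={\bf x}_{k_0}$ with $\|{\bf x}_{k_0}\|_2^2=1$; the projected-column norm thus collapses to a constant and the effective SNR of the $k_0$-th coordinate reduces to $|\boldsymbol{\beta}_{k_0}|^2/\sigma^2=b^2/\sigma^2$. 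Orthonormality also yields $\|{\bf X}\boldsymbol{\beta}\|_2^2=\|\boldsymbol{\beta}\|_2^2=k_0 b^2$, whence SNR $=k_0 b^2/(n\sigma^2)$ and therefore $b^2/\sigma^2=n\,\mathrm{SNR}/k_0$. Substituting gives the exact identity $\|({\bf I}_n-{\bf P}_{k_0-1}){\bf x}_{k_0}\|_2^2|\boldsymbol{\beta}_{k_0}|^2/\sigma^2=(\mathrm{SNR}/k_0)\,n$, so hypothesis b) holds with $M_1=\mathrm{SNR}/k_0$ and $n_0=1$. The constant $M_1$ is strictly positive because the standing assumption $\boldsymbol{\beta}\neq{\bf 0}_p$ forces $b\neq 0$, hence SNR $>0$, and $M_1$ does not vary with $n$ since both SNR and $k_0$ are held fixed.

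With both hypotheses verified, invoking Theorem~\ref{thm:large_sample} R1) completes the argument. I do not anticipate a genuine obstacle: the proof is essentially a substitution, and the only conceptual point worth emphasising is that the linear-in-$n$ lower bound demanded by hypothesis b) is exactly what the constant-SNR regime supplies, once orthonormality reduces $\|({\bf I}_n-{\bf P}_{k_0-1}){\bf x}_{k_0}\|_2^2$ to unity and thereby strips all $n$-dependence from the design side.
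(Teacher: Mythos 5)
Your proposal is correct and follows essentially the same route as the paper: it verifies hypothesis b) of Theorem~\ref{thm:large_sample} by using orthonormality to reduce $\|({\bf I}_n-{\bf P}_{k_0-1}){\bf x}_{k_0}\|_2^2$ to $1$, rewrites $b^2/\sigma^2$ as $n\,\mathrm{SNR}/k_0$ via the fixed-SNR identity, and reads off $M_1=\mathrm{SNR}/k_0$, $n_0=1$ before invoking R1). The only additions beyond the paper's argument are the explicit check of hypothesis a) and the observation that $b\neq 0$ guarantees $M_1>0$, both of which are correct and harmless.
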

\begin{corollary} Following Corollary 1,  consider a situation where $k_0$ is increasing with $n$ and SNR increasing  atleast linearly with  $k_0$ asymptotically, i.e., $SNR/k_0>1$ for some $n\geq n_0$. Then $\|({\bf I}_{n}-{\bf P}_{k_0-1}){\bf x}_{k_0}\|_2^2|\boldsymbol{\beta}_{k_0}|^2/\sigma^2=n SNR/k_0>n$ for $n>n_0$. Here $M_1=1$. Hence if SNR increases atleast linearly with $k_0$,   then RRT is large sample consistent with increasing $k_0$ as long as $k_{lim}=\underset{n \rightarrow \infty}{\lim}k_0/n<1$. When $k_0$ increases and SNR is kept fixed, then  $\|({\bf I}_{n}-{\bf P}_{k_0-1}){\bf x}_{k_0}\|_2^2|\boldsymbol{\beta}_{k_0}|^2/\sigma^2=nSNR/k_0$ increases at the most sub-linearly with $n$ denying the existence of $M_1>0$ and $n_0<\infty$. In that situation, RRT may not be large sample consistent. 
\end{corollary}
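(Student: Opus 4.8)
The plan is to reduce the entire statement to a verification (or a demonstrated failure) of the signal-to-noise condition (b) of Theorem \ref{thm:large_sample}, exploiting the fact that orthonormality of ${\bf X}$ makes the otherwise opaque quantity $\|({\bf I}_{n}-{\bf P}_{k_0-1}){\bf x}_{k_0}\|_2^2|\boldsymbol{\beta}_{k_0}|^2/\sigma^2$ completely explicit. First I would carry over the identity established in Corollary 1: since ${\bf X}$ is orthonormal, $({\bf I}_n-{\bf P}_{k_0-1}){\bf x}_{k_0}={\bf x}_{k_0}$ has unit norm, so the left hand side of condition (b) equals $|\boldsymbol{\beta}_{k_0}|^2/\sigma^2=b^2/\sigma^2$, and substituting $SNR=k_0 b^2/(n\sigma^2)$ rewrites it as $nSNR/k_0$. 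The key structural observation that drives both halves of the corollary is that condition (b), namely $\geq M_1 n$ for all large $n$, is then exactly equivalent to requiring $SNR/k_0$ to stay bounded below by the positive constant $M_1$.

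For the consistency claim I would treat the regime in which $SNR$ grows at least linearly in $k_0$. The hypothesis $SNR/k_0>1$ for all $n\geq n_0$ immediately gives $nSNR/k_0>n$ for $n>n_0$, so condition (b) holds with $M_1=1$ and the stated $n_0$. Condition (a) of Theorem \ref{thm:large_sample} is precisely the assumed $k_{lim}<1$. Selecting any adaptation of $\alpha$ satisfying $\alpha_{lim}=0$ and $\lim_{n\to\infty}\alpha=0$ (for instance $\alpha=1/\log(n)$ or $\alpha=1/n$), statement R1 of Theorem \ref{thm:large_sample} then yields $\lim_{n\to\infty}PCS=1$, i.e.\ RRT is large sample consistent with $k_0$ increasing subject to $k_{lim}<1$.

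For the second regime I would instead show that the same sufficient condition can no longer be satisfied. With $SNR$ held fixed while $k_0\to\infty$, the quantity $nSNR/k_0=SNR\cdot(n/k_0)$ has $SNR/k_0\to 0$; hence for any candidate constant $M_1>0$ the inequality $nSNR/k_0\geq M_1 n$, equivalently $SNR/k_0\geq M_1$, fails for all sufficiently large $n$. Thus no pair $(M_1,n_0)$ witnessing condition (b) exists, the growth of the left hand side is at most sub-linear in $n$, and Theorem \ref{thm:large_sample} is simply not applicable. Here I would phrase the conclusion exactly as the statement does: this removes the \emph{guarantee} of consistency but does not establish inconsistency, so one can assert no more than that RRT may fail to be large sample consistent in this regime.

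Analytically the argument is routine once the orthonormal substitution is in place; there is no estimate to push through. The only point I would watch carefully is the logical asymmetry between the two cases. The first case is a genuine consistency conclusion obtained by checking that both hypotheses of Theorem \ref{thm:large_sample} hold, whereas the second case is a purely negative observation that a \emph{sufficient} condition is violated, and it must not be overstated into a claim that PCS fails to converge to one.
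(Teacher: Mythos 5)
Your proposal is correct and follows essentially the same route as the paper: the corollary's justification is exactly the orthonormal reduction of $\|({\bf I}_{n}-{\bf P}_{k_0-1}){\bf x}_{k_0}\|_2^2|\boldsymbol{\beta}_{k_0}|^2/\sigma^2$ to $nSNR/k_0$, verification of condition b) of Theorem \ref{thm:large_sample} with $M_1=1$ when $SNR/k_0>1$, and the observation that fixed SNR with growing $k_0$ makes the quantity sub-linear so the sufficient condition fails. Your care in noting that the second case only voids a sufficient condition (hence ``may not be consistent'' rather than ``is inconsistent'') matches the paper's phrasing exactly.
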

\begin{corollary} Next consider the situation ${\bf X} \in \mathbb{R}^{n\times p}$ with $p<n$ and  ${\bf X}_{i,j}\overset{i.i.d}{\sim}\mathcal{N}(0,1/n)$. By Lemma 5 of \cite{cai2011orthogonal}, 
\begin{equation}
\|({\bf I}_{n}-{\bf P}_{k_0-1}){\bf x}_{k_0}\|_2^2\geq \lambda_{min}({\bf X}_{[k_0]}^T{\bf X}_{[k_0]}),
\end{equation}
where $\lambda_{min}({\bf X}_{[k_0]}^T{\bf X}_{[k_0]})$ is the minimum eigenvalue of the  matrix ${\bf X}_{[k_0]}^T{\bf X}_{[k_0]}$. Under the limit $0\leq k_{lim}<1$, it is true that \cite{decoding_candes} 
\begin{equation}
\lambda_{min}({\bf X}_{[k_0]}^T{\bf X}_{[k_0]})\overset{P}{\rightarrow }(1-\sqrt{k_{lim}})^2\ \text{as} \ n \rightarrow \infty.
\end{equation}
Further, the SNR is fixed at $SNR=\dfrac{\mathbb{E}(\|{\bf X}\boldsymbol{\beta}\|_2^2)}{\mathbb{E}(\|{\bf w}\|_2^2)}=\dfrac{\|\boldsymbol{\beta}\|_2^2}{n\sigma^2}=\dfrac{k_0b^2}{n\sigma^2}$. Consequently at large sample sizes, $\|({\bf I}_{n}-{\bf P}_{k_0-1}){\bf x}_{k_0}\|_2^2|\boldsymbol{\beta}_{k_0}|^2/\sigma^2=(1-\sqrt{k_{lim}})^2b^2/\sigma^2=n (1-\sqrt{k_{lim}})^2 SNR/k_0$. It then follows from Corollaries 1-2 that  RRT is  consistent when $n$ increases to $\infty$   such that  \\
a). $k_0$ and SNR are kept fixed. \\
b). $k_0$ increases to $\infty$ and  SNR increases atleast linearly with $k_0$.
\end{corollary}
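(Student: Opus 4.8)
The plan is to reduce the claim to the deterministic SNR hypothesis (b) of Theorem \ref{thm:large_sample} and then invoke its conclusion R1. Writing $Q_n=\|(\mathbf{I}_n-\mathbf{P}_{k_0-1})\mathbf{x}_{k_0}\|_2^2|\boldsymbol{\beta}_{k_0}|^2/\sigma^2$ for the governing quantity, I would first establish that $Q_n$ grows at least linearly in $n$ with a strictly positive coefficient, and then quote R1 (with any admissible adaptation such as $\alpha=1/n$ or $\alpha=1/\log(n)$, for which $\alpha_{lim}=0$ and $\underset{n\rightarrow\infty}{\lim}\alpha=0$) to conclude $\underset{n\rightarrow\infty}{\lim}PCS=1$.

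The core estimate proceeds in three steps. First, I would apply Lemma 5 of \cite{cai2011orthogonal} to obtain the deterministic lower bound $\|(\mathbf{I}_n-\mathbf{P}_{k_0-1})\mathbf{x}_{k_0}\|_2^2\geq \lambda_{min}(\mathbf{X}_{[k_0]}^T\mathbf{X}_{[k_0]})$. Second, since the entries of $\mathbf{X}$ are i.i.d. $\mathcal{N}(0,1/n)$ and $k_0/n\rightarrow k_{lim}\in[0,1)$, the Marchenko--Pastur edge result used in \cite{decoding_candes} gives $\lambda_{min}(\mathbf{X}_{[k_0]}^T\mathbf{X}_{[k_0]})\overset{P}{\rightarrow}(1-\sqrt{k_{lim}})^2$. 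Third, I would substitute the SNR parametrization: since $\boldsymbol{\beta}_k=b$ for $k\leq k_0$, we have $|\boldsymbol{\beta}_{k_0}|^2=b^2$ and $b^2/\sigma^2=n\,SNR/k_0$, whence $Q_n\gtrsim (1-\sqrt{k_{lim}})^2\,n\,SNR/k_0$. In case (a), $k_0$ fixed forces $k_{lim}=0$ and $SNR/k_0$ is a positive constant, so $M_1=SNR/k_0>0$; in case (b), $SNR/k_0$ is bounded below by a positive constant while $(1-\sqrt{k_{lim}})^2>0$, so again $M_1>0$. Both cases are exactly the settings already handled in Corollaries 1 and 2 once the random eigenvalue is replaced by its limit.

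The main obstacle is that hypothesis (b) of Theorem \ref{thm:large_sample} is a \emph{deterministic} bound on $Q_n$, whereas here $\mathbf{X}$ is random and we only control $\lambda_{min}$ \emph{in probability}. I would close this gap by a conditioning argument. Fix $\epsilon>0$ small enough that $(1-\sqrt{k_{lim}})^2-\epsilon>0$ and set $\mathcal{E}_n=\{\lambda_{min}(\mathbf{X}_{[k_0]}^T\mathbf{X}_{[k_0]})>(1-\sqrt{k_{lim}})^2-\epsilon\}$; by the convergence above $\mathbb{P}(\mathcal{E}_n)\rightarrow 1$. On $\mathcal{E}_n$ the realized design matrix satisfies the deterministic hypothesis of Theorem \ref{thm:large_sample} with the positive constant $M_1=((1-\sqrt{k_{lim}})^2-\epsilon)\,\underset{n}{\inf}(SNR/k_0)$, so the conditional probability of correct selection tends to one. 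Decomposing $PCS=\mathbb{E}_{\mathbf{X}}[\mathbb{P}(\hat{k}_{RRT}=k_0\mid\mathbf{X})]$ over $\mathcal{E}_n$ and its complement and using $\mathbb{P}(\mathcal{E}_n^c)\rightarrow 0$ then yields $\underset{n\rightarrow\infty}{\lim}PCS=1$. The one point demanding care is that the conditional error bound from Theorem \ref{thm:large_sample} be uniform over realizations in $\mathcal{E}_n$; this holds because the error control in that theorem depends on $\mathbf{X}$ only through the constant $M_1$, which is uniformly bounded below on $\mathcal{E}_n$.
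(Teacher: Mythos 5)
Your proposal follows the paper's own argument exactly: the same appeal to Lemma 5 of \cite{cai2011orthogonal} for $\|({\bf I}_n-{\bf P}_{k_0-1}){\bf x}_{k_0}\|_2^2\geq \lambda_{min}({\bf X}_{[k_0]}^T{\bf X}_{[k_0]})$, the same Marchenko--Pastur edge limit from \cite{decoding_candes}, the same SNR substitution, and the same reduction to Corollaries 1--2 and Theorem \ref{thm:large_sample}. The one place you go beyond the paper is the conditioning on the high-probability event $\mathcal{E}_n$ to convert the in-probability eigenvalue bound into the deterministic hypothesis (b) of Theorem \ref{thm:large_sample} --- the paper simply treats the limit as an equality ``at large sample sizes,'' so your version is the same route carried out more carefully rather than a different one.
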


%
%
%
%
%

\subsection{Comparison between RRT and ITC hyper parameters}
{ In this subsection, we briefly compare the role played by hyper parameter $\alpha$ in RRT and the hyper prameter $v$ in the MOS criteria  of the form $\hat{k_0}=\underset{k=1,2,\dotsc,p}{\arg\min}\|({\bf I}_n-{\bf P}_k){\bf y}\|_2^2+vk$ (like AIC, BIC). It is well known that with increasing values of  $v$, $\mathbb{P}_{\mathcal{O}}$ decreases, whereas, $\mathbb{P}_{\mathcal{U}}$  increases. Exactly similar behaviour is visible in RRT with decreasing values of $\alpha$, i.e., a smaller value of $\alpha$ is qualitatively equivalent to a larger value of penalty parameter $v$.   This observation explains the similarity in the conditions required for large sample consistency of ITC and RRT. Note that to avoid overestimation as $n \rightarrow \infty$, one need $v\rightarrow \infty$ at the rate $v/\log\log(n)\rightarrow \infty$, whereas, to avoid underestimation one would require $v/n \rightarrow 0$, i.e., $v$ should not grow to $\infty$ at a very fast rate \cite{nishii1988maximum}. Once we take into account the fact that smaller $\alpha$ is equivalent to a higher $v$, the conditions that $\alpha \rightarrow 0$ to avoid overestimation and $\log(\alpha)/n \rightarrow 0$ to avoid underestimation are similar to the rules imposed on $v$. Similarly, for fixed values of $v$ and $\alpha$, both ITC and RRT overestimate the model order at high SNR, i.e., $\underset{\sigma^2\rightarrow 0}{\lim}\mathbb{P}_{\mathcal{O}}>0$, whereas, underestimation probability  $\mathbb{P}_{\mathcal{U}}$ satisfies $\underset{\sigma^2\rightarrow 0}{\lim}\mathbb{P}_{\mathcal{U}}=0$\cite{tsp}.}

\section{Numerical Simulations}
In this section, we numerically validate the high SNR and large sample consistency results derived in Section \rom{3}. We also compare the  performance of RRT  with popular MOS techniques. We compare RRT with classical ITC based MOS  like AIC $h(k,\sigma^2_k)=2k$, BIC $h(k,\sigma^2_k)=k\log(n)$ and the recently proposed  PAL in \cite{stoica2013model}. We also consider a recently proposed high SNR consistent (HSC) MOS with penalty $h(k,\sigma^2_k)=\max(k\log(n),2k\log(\frac{1}{\sigma^2_k}))$  \cite{tsp}. By Lemma 1, this technique is HSC as long as $n>2k_0$ and this condition is true in all our experiments. A technique to design penalty functions based on the high SNR behaviour of $\mathbb{P}_{\mathcal{O}}$ is proposed in \cite{designITC}.  This technique is also implemented (called ``Design" in figures) with  desired error levels $0.1$ and $0.01$ shown in brackets. Simulation results for other popular algorithms like EEF, NMDL, g-MDL etc. are not included because of space constraints. However, we have observed that the relative performance comparisons between RRT and algorithms like PAL, Design, BIC etc. also hold true  for NMDL, EEF etc.    

The entries of matrix ${\bf X}$ are sampled independently from $\mathcal{N}(0,1)$ and the columns are later normalised to have unit $l_2$ norm. We consider two models for $\boldsymbol{\beta}$, (1)  model 1 has $\boldsymbol{\beta}_k=\pm 1$ for all $k\leq k_0$ (i.e., signal component of $\boldsymbol{\beta}$ given by  $\boldsymbol{\beta}_{[k_0]}$ is not sparse) and 2) model 2  has $\boldsymbol{\beta}_k=\pm 1$ only  for few entries between $k=1$ and $k_0$ (i.e., $\boldsymbol{\beta}_{[k_0]}$ is  sparse). The non-zero locations will be reported alongside the figures.   Model 2 is  typical of auto regressive (AR) model order selection where the maximum lag (i.e., true order of AR process $k_0$) can be very high, however, the generator polynomial has only  few non-zero coefficients. Likewise, in sparse channel estimation\cite{raghavendra2005improving,tomasoni2013efficient}, it is likely that the length of channel impulse response (i.e., $k_0$) is high.  However, the CIR contains only few non-zero coefficients. Model 2 represents this scenario too.  All the results presented in this section are obtained after $10^4$ iterations. 
\subsection{Validating Theorem \ref{thm:highSNR} and Theorem \ref{thm:large_sample}}
\begin{figure*}
\begin{multicols}{2}
\includegraphics[width=\linewidth]{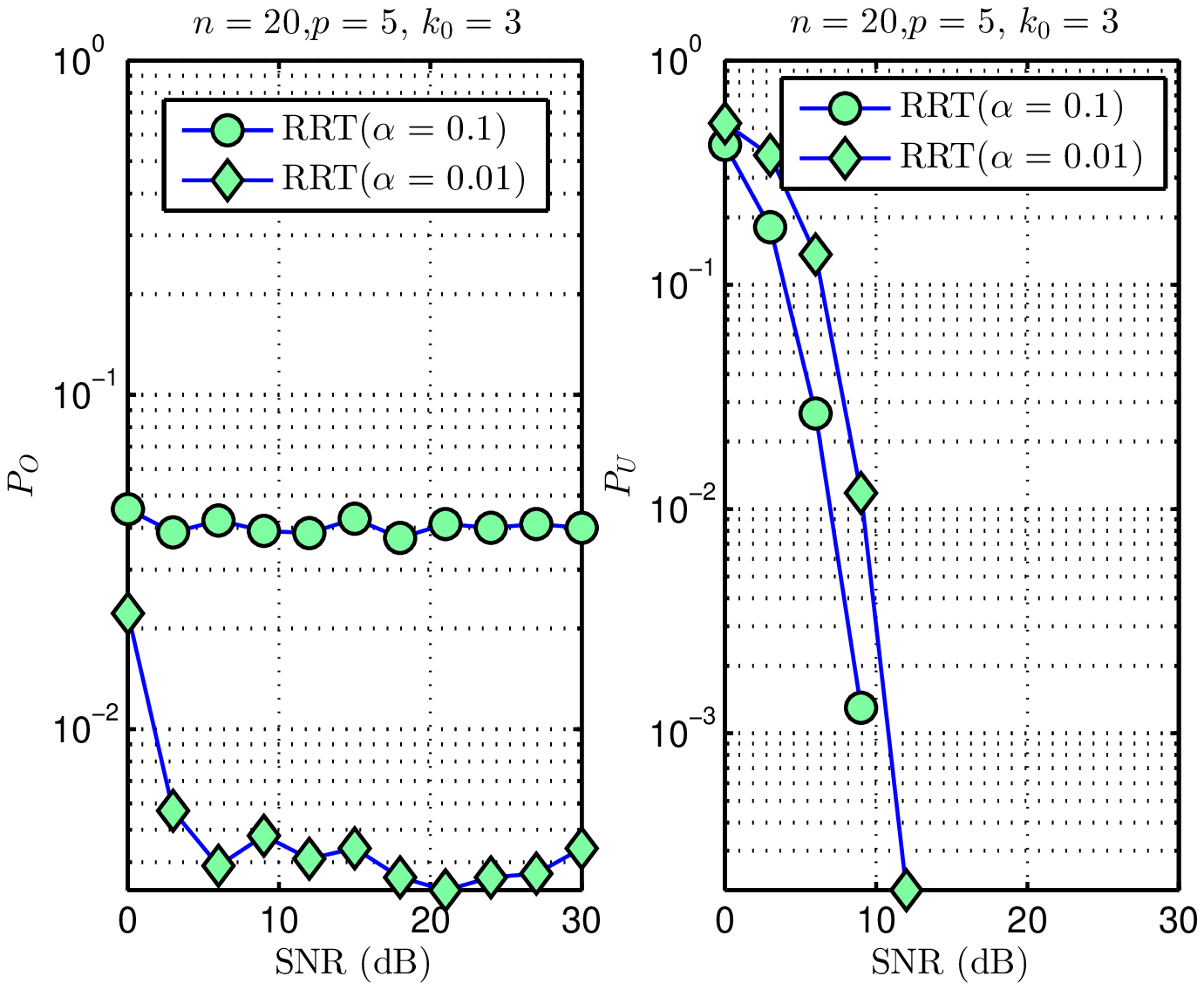}
\caption*{a)Verification of Theorem \ref{thm:highSNR}.$n=20$, $p=10$ and $k_0=3$.}
\includegraphics[width=\linewidth]{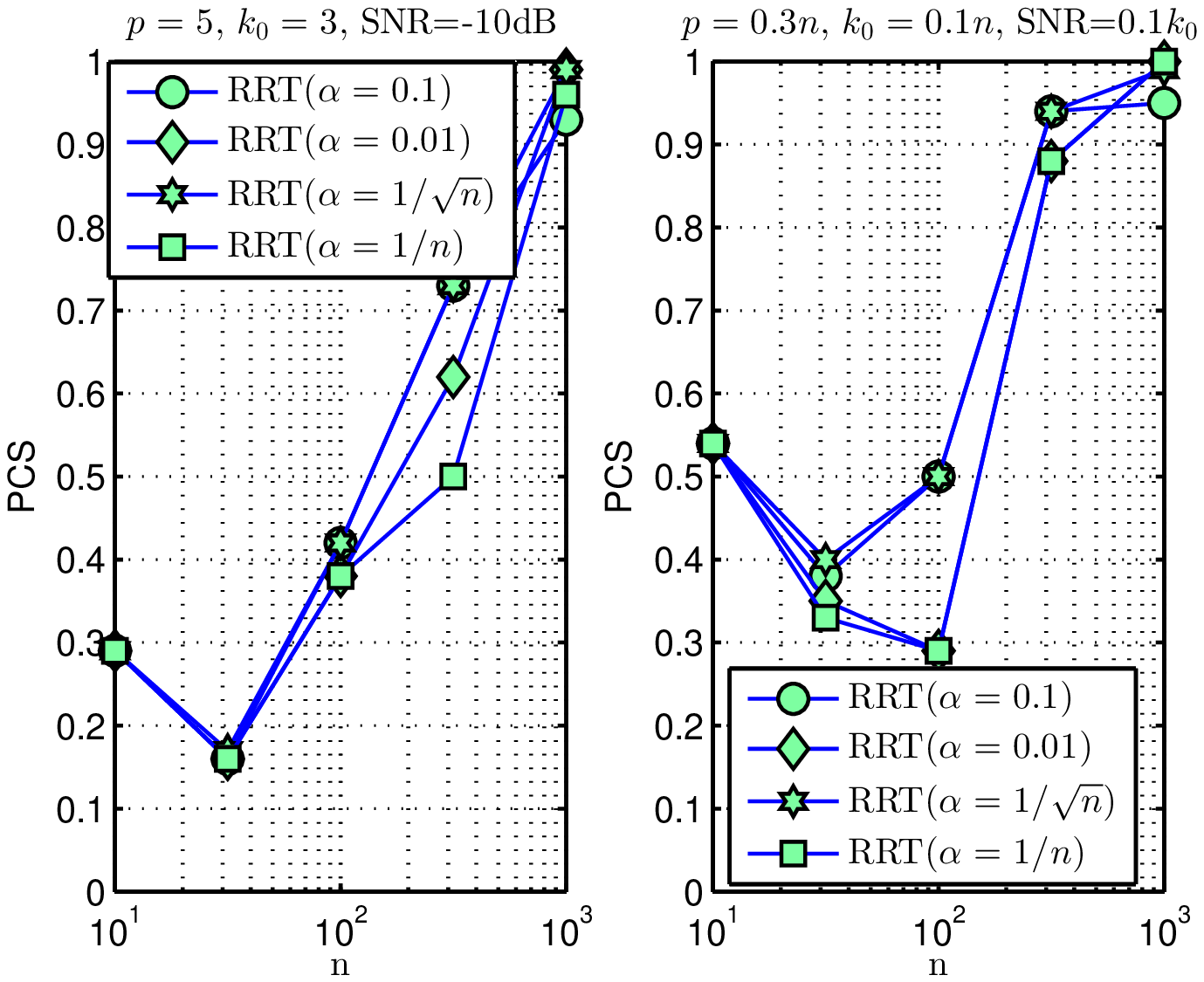}
\caption*{b). Verification of Theorem \ref{thm:large_sample}. $p=5$,$k_0=3$ and SNR=-10dB (left).$p=0.3n$,$k_0=0.1n$ and SNR=0.1$k_0$ (right).}
\end{multicols}
\squeezeup
\caption{Verification of Theorem \ref{thm:highSNR} and Theorem \ref{thm:large_sample}. $\boldsymbol{\beta}_k=\pm 1$ for all $k\leq k_0$.}
\label{fig:verification}
\squeezeup
\end{figure*}

In this section, we numerically validate the high SNR and large sample results presented in Theorem \ref{thm:highSNR} and Theorem \ref{thm:large_sample} of Section \rom{3}.  
Fig.\ref{fig:verification}.a) presents the variations in $\mathbb{P}_{\mathcal{O}}$ and $\mathbb{P}_{\mathcal{U}}$ with increasing SNR. From the L.H.S of Fig.\ref{fig:verification}.a), one can see that $\mathbb{P}_{\mathcal{O}}$ floors at $\approx 10^{-1.5}$ from zero dB SNR onwards when $\alpha=0.1$ and $\approx 10^{-2.5}$ from 3dB SNR onwards when $\alpha=0.01$. These  evaluated values of $\mathbb{P}_{\mathcal{O}}$ satisfy the bound $\underset{\sigma^2 \rightarrow 0}{\lim}\mathbb{P}_{\mathcal{O}}\leq \alpha$ predicted by Theorem \ref{thm:highSNR}. In fact, the bound $\mathbb{P}_{\mathcal{O}}\leq \alpha$ hold true even at a low SNR of 3dB. Likewise, as one can see from the R.H.S of Fig.\ref{fig:verification}.a), $\mathbb{P}_{\mathcal{U}}$ decreases with increasing SNR. This is also in accordance with the limit $\underset{\sigma^2 \rightarrow 0}{\lim}\mathbb{P}_{\mathcal{U}}=0$ predicted by Theorem \ref{thm:highSNR}. Note that restricting overestimation probability to smaller values in MOS problems will always  leads to an increase in finite SNR  underestimation probability for any MOS criterion. This explains  the increase in $\mathbb{P}_{\mathcal{U}}$ for $\alpha=0.01$ at finite SNR compared to  $\alpha=0.1$. 

Fig.\ref{fig:verification}.b) presents the variations in PCS with increasing sample size $n$. Among the four choices of $\alpha$ considered, only $\alpha=1/\sqrt{n}$ and $\alpha=1/n$ can lead to large sample consistency according to Theorem \ref{thm:large_sample}.  We consider two regimes of interest. Regime 1 depicted in the L.H.S of Fig.\ref{fig:verification}.b) deals with the situation where $n$ increases to $\infty$ keeping $p$, $k_0$ and  SNR  fixed. As one can see from Fig.\ref{fig:verification}.b),  PCS for all values of $\alpha$ increases to one with increasing $n$. However,   $PCS$ for $\alpha=0.01$ and $\alpha=0.1$ floor near one satisfying the bounds $\underset{n \rightarrow \infty}{\lim}PCS\geq 1-\alpha$ in R2) of Theorem \ref{thm:large_sample}, whereas, PCS for $\alpha=1/\sqrt{n}$ and $\alpha=1/n$ converges to one satisfying R1) of Theorem \ref{thm:large_sample}. Regime 2 deals with a situation where all $n$, $p$ and $k_0$ increases to $\infty$ with SNR increasing linearly with $k_0$. As one can see from Fig.\ref{fig:verification}.b), PCS for $\alpha=1/n$ and $\alpha=1/\sqrt{n}$ converge to one, whereas, $PCS$ for $\alpha=0.01$ and $\alpha=0.1$ floor near one satisfying the bounds $\underset{n \rightarrow \infty}{\lim}PCS\geq 1-\alpha$. These results validate Theorem \ref{thm:large_sample} and its' corollaries. 
\begin{figure*}
\begin{multicols}{2}

    \includegraphics[width=1\linewidth]{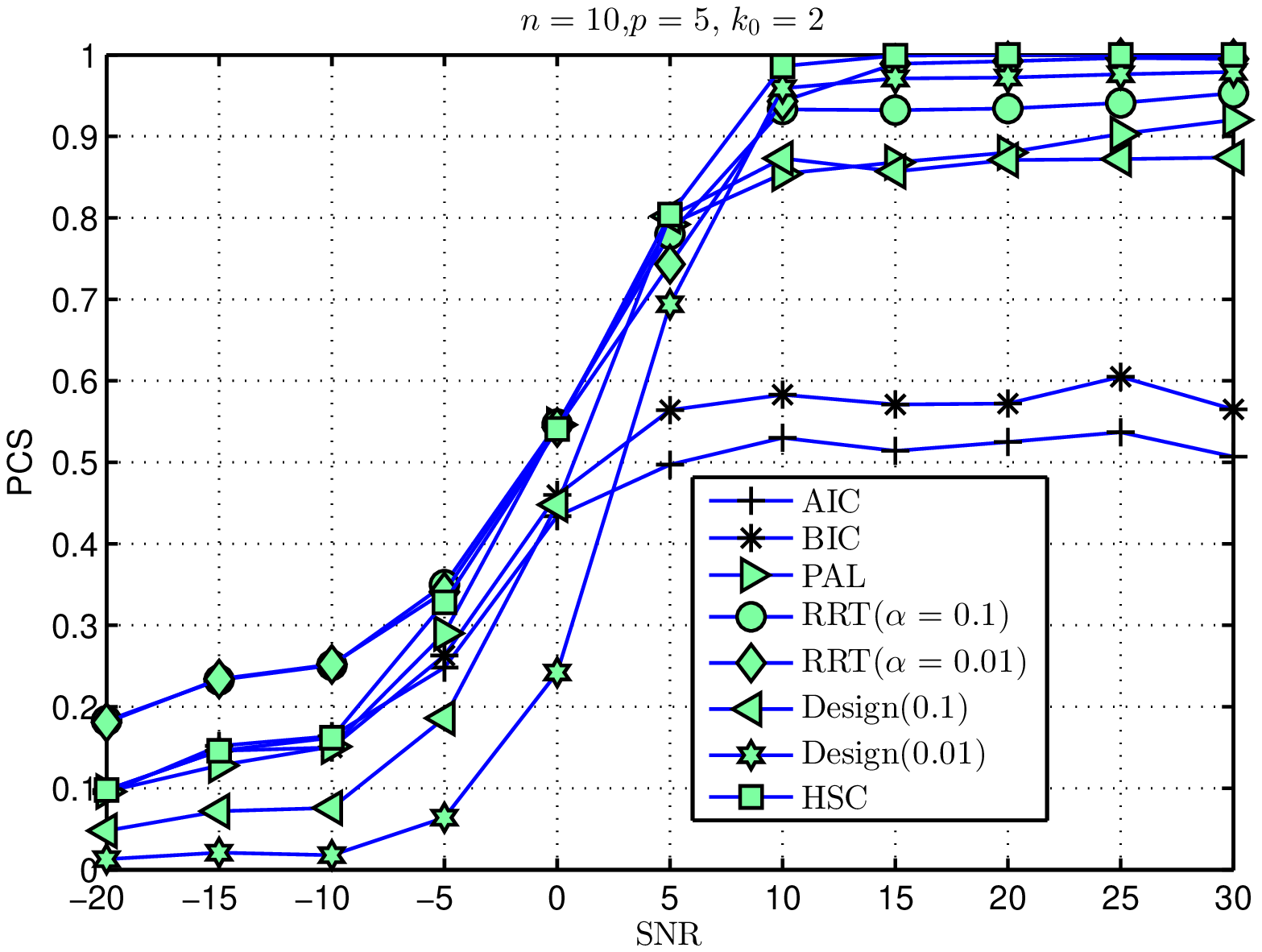} 
    \caption*{a). $n=10$, $p=5$ and $k_0=2$.}
    
    \includegraphics[width=1\linewidth]{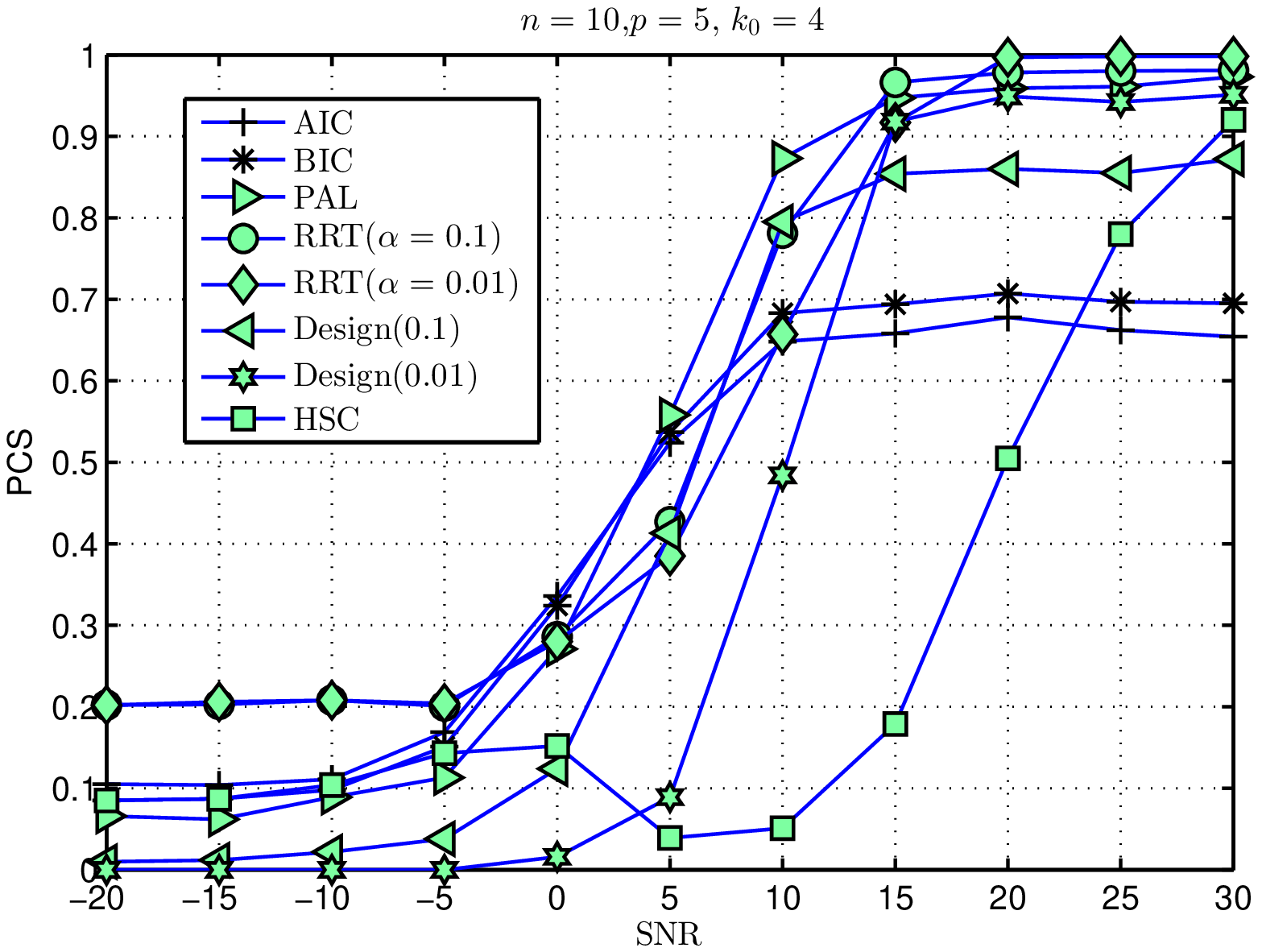} 
    \caption*{b). $n=10$, $p=5$ and $k_0=4$.}
    \end{multicols}
   
   \begin{multicols}{2}

    \includegraphics[width=1\linewidth]{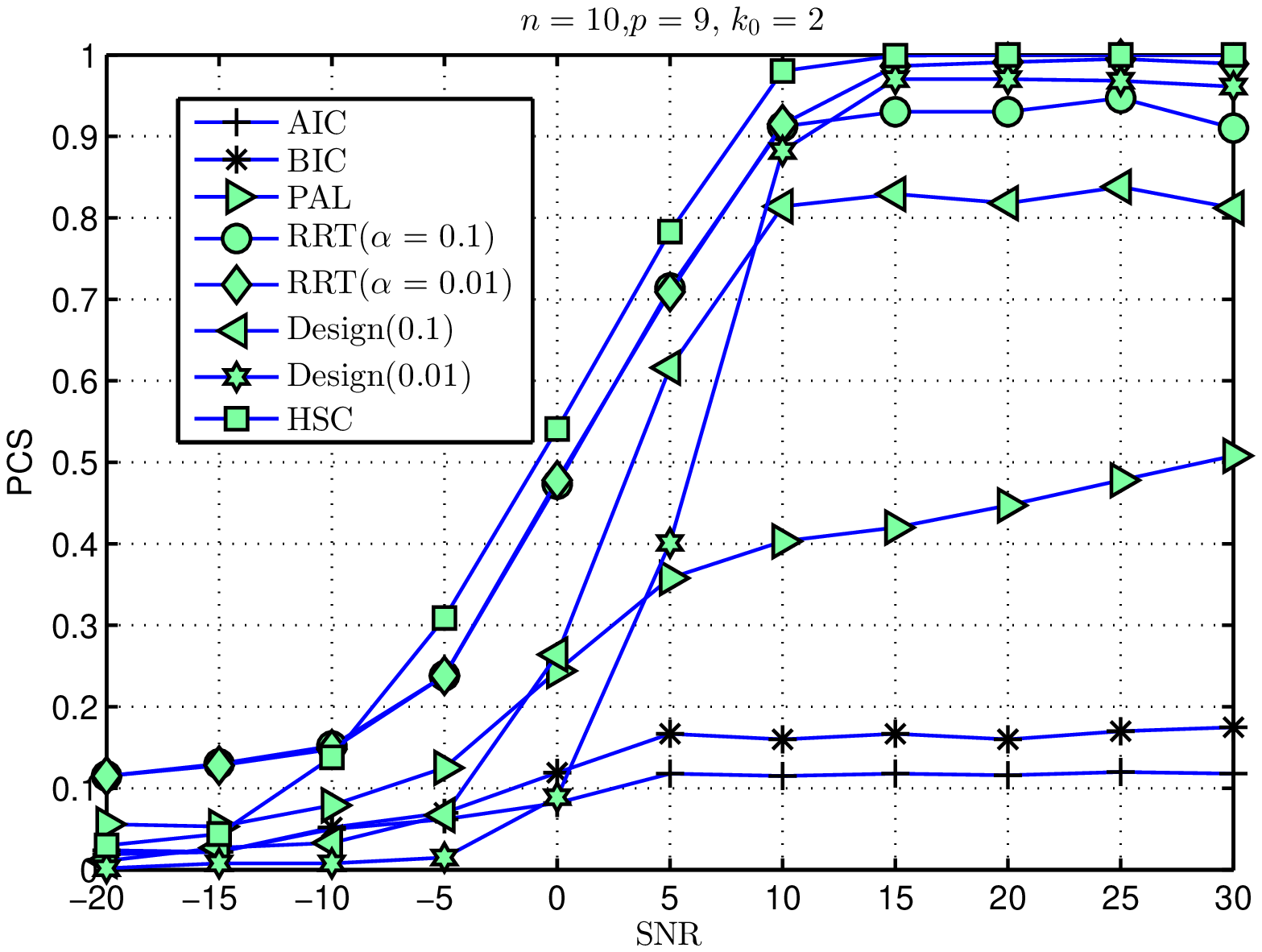} 
    \caption*{c). $n=10$, $p=9$ and $k_0=2$.}
    
    \includegraphics[width=1\linewidth]{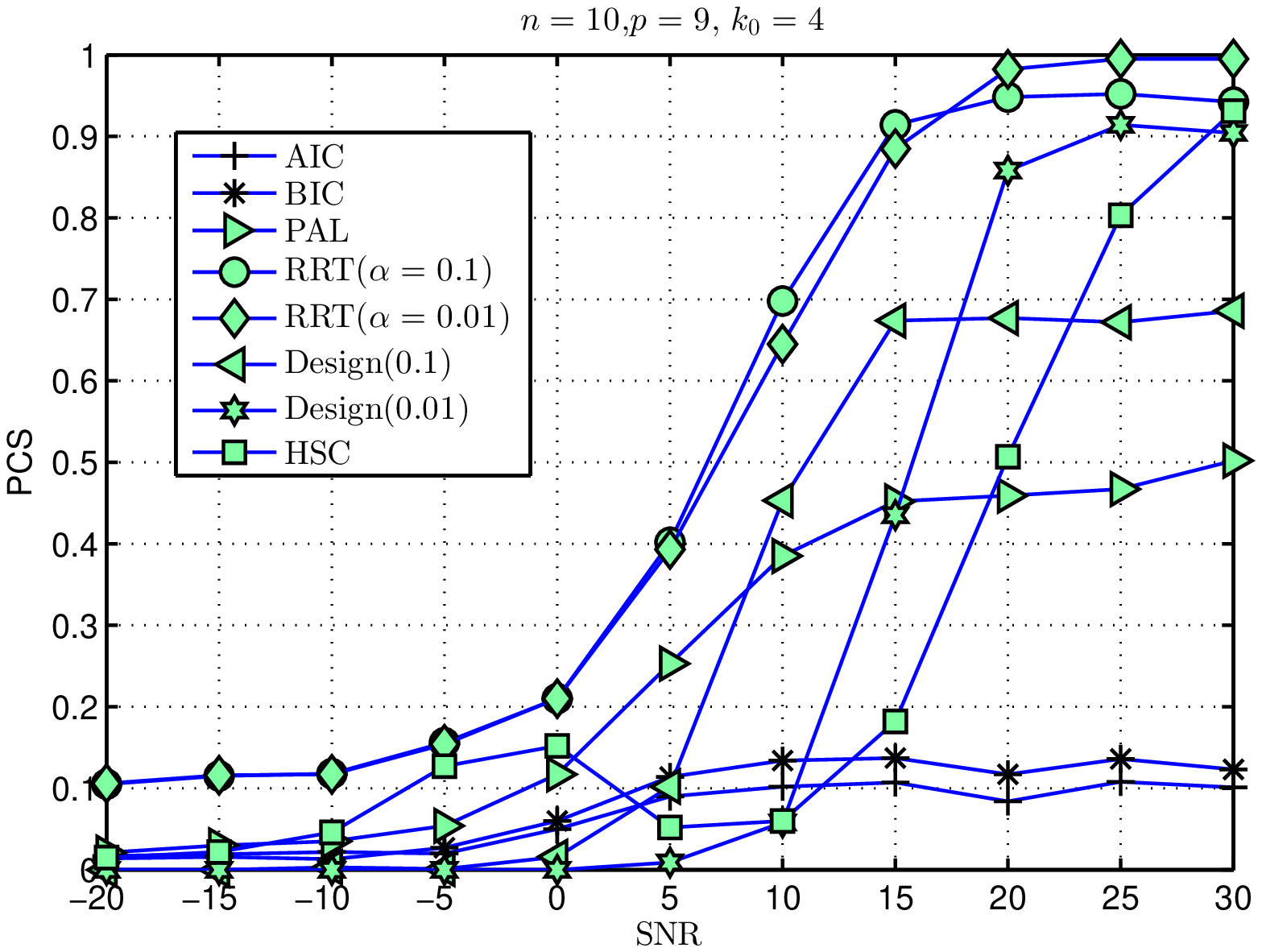} 
    \caption*{d).  $n=10$, $p=9$ and $k_0=4$.}
    \end{multicols} 
    \squeezeup
   \caption{Small sample performance: $\boldsymbol{\beta}_k=\pm 1$ for all $k\leq k_0$.}
   \label{fig:small_sample}
   \squeezeup
\end{figure*}

\subsection{Experiment 1: PCS when both  $n$ and $p$ are small }

We first compare the PCS performance of MOS techniques when sample size $n$ is very small in absolute terms. Fig.\ref{fig:small_sample}.a) and Fig.\ref{fig:small_sample}.b) illustrate a situation where $p$ is much smaller than $n$, i.e., $p=n/2$. When $k_0=2$, one can see that RRT with $\alpha=0.1$ and $\alpha=0.01$ outperform other algorithms at very low SNR. In the medium SNR regime, HSC,PAL, Design($0.1$) and RRT with both $\alpha=0.1$ and $\alpha=0.01$ have similar performances. At high SNR, the best performance is delivered by RRT with $\alpha=0.01$, Design($0.01$) and HSC. The PCS of PAL appears to floor below one. The reason for this is the slow growth of penalty function in PAL with increasing SNR\cite{tsp} which causes overestimation. PCS of AIC and BIC are much inferior compared to other   MOS techniques. 
When $k_0$ is increased to $k_0=4$, performance of HSC  deteriorates significantly. The high SNR performance of PAL, AIC, MDL, RRT etc. improve when $k_0=4$. {This can be reasoned as follows. At high SNR, the error in MOS criteria like AIC, BIC, PAL etc. is overwhelmingly due to overestimation. Please note that when $k_0$ is increased keeping $p$ constant, the probability of overestimation decreases. This explains the improvement in PCS  with increasing $k_0$ for MOS criteria  like  PAL, AIC, BIC etc. which have a tendency to overestimate at high SNR. HSC incorporates a SNR adaptation to the BIC penalty to decrease its' $\mathbb{P}_{\mathcal{O}}$. Note that any attempt to decrease overestimation probability will result in an increase in underestimation probability in  the low to moderate SNR. Since, with increasing $k_0$, the importance of overestimation decreases and underestimation increases, the SNR adaptation intended for avoiding overestimation will result in more underestimation in the low to moderate SNR regime. The explains the deteriorating performance of HSC with increasing $k_0$.}
\begin{figure*}[htb]
\begin{multicols}{2}
\includegraphics[width=1\linewidth]{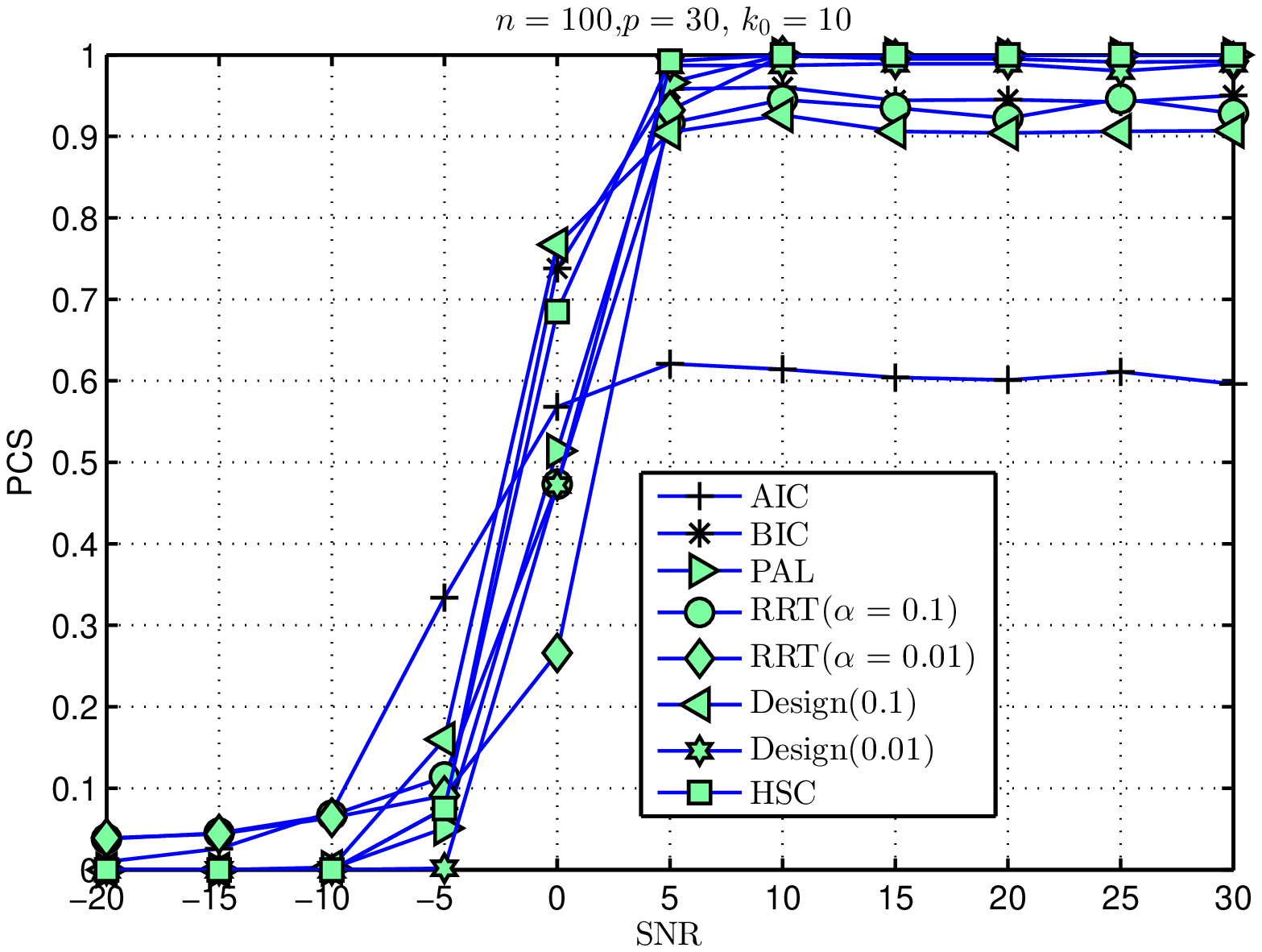} 
    \caption*{a). $n=100$, $p=30$ and $k_0=10$.}
    
    \includegraphics[width=1\linewidth]{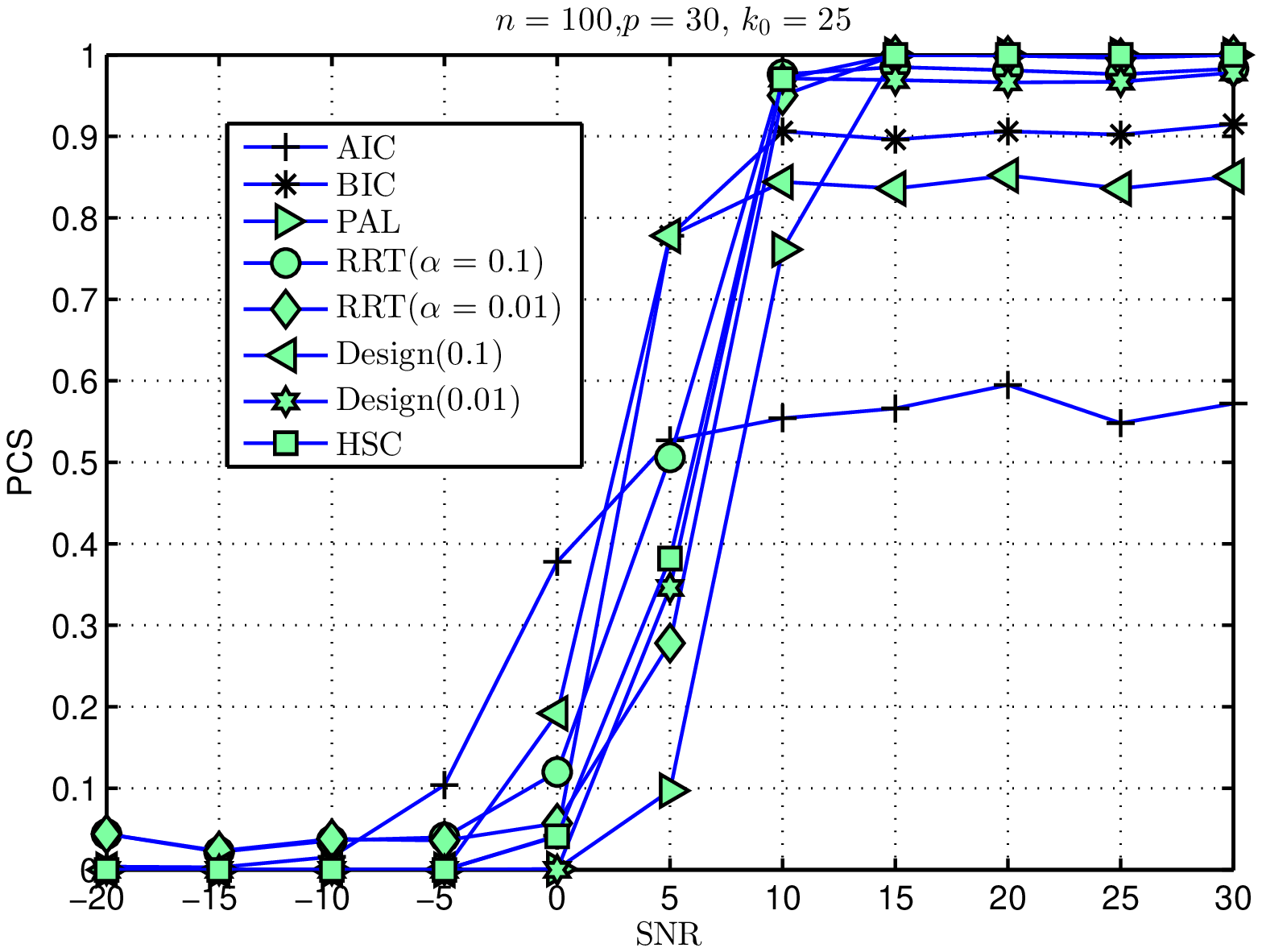} 
    \caption*{b). $n=100$, $p=30$ and $k_0=25$.}
    \end{multicols}
   
   \begin{multicols}{2}

    \includegraphics[width=1\linewidth]{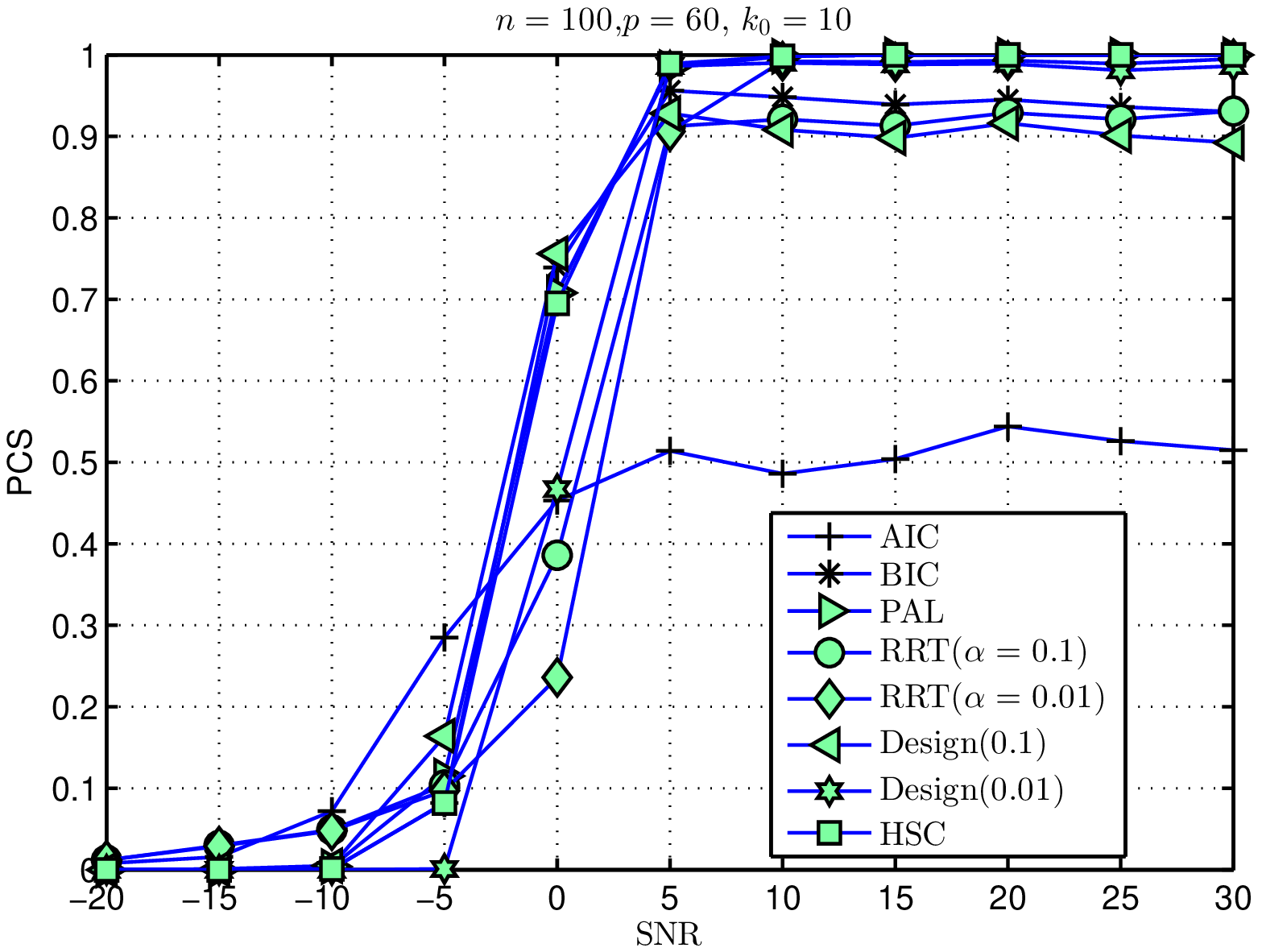} 
    \caption*{c). $n=100$, $p=60$ and $k_0=10$.}
    
    \includegraphics[width=1\linewidth]{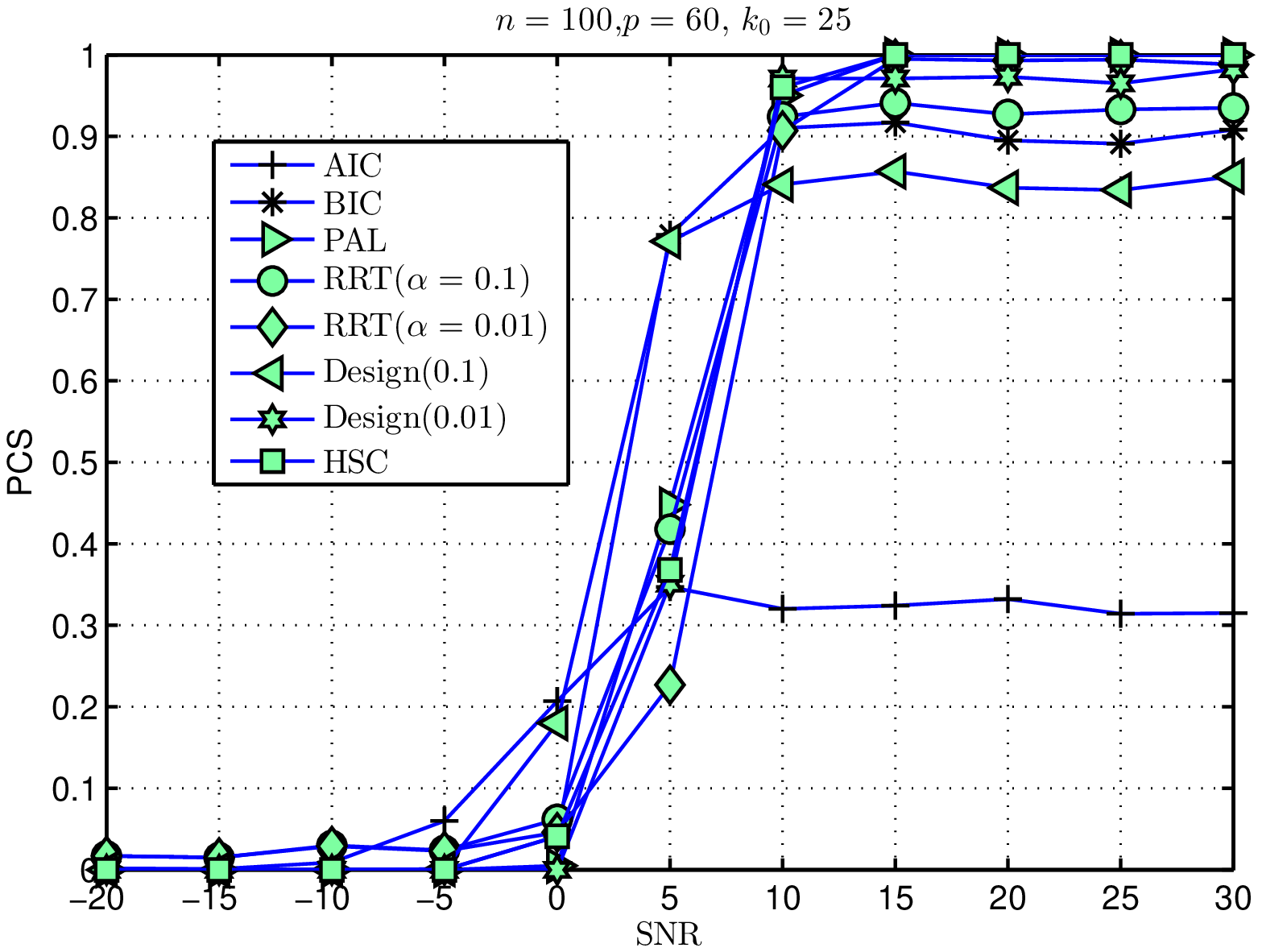} 
    \caption*{d).  $n=100$, $p=60$ and $k_0=25$.}
    \end{multicols} 
    \squeezeup
   \caption{$\boldsymbol{\beta}_k=\pm 1$ for $k=1,2...\dotsc,k_0$. $\boldsymbol{\beta}_{[k_0]}$ is long and dense.}
   \label{fig:nonsparse}
   \squeezeup
\end{figure*}

Next we compare the performance of MOS techniques when $p$ and $n$ are nearly the same. As one can see from Fig.\ref{fig:small_sample}.c) and Fig.\ref{fig:small_sample}.d),  performances of PAL, AIC and BIC are much worse in this case than with $p=5$. Again this is because of the fact that $\mathbb{P}_{\mathcal{O}}$ increases when $p$ is increased while keeping $k_0$ fixed.  When $k_0=2$, HSC achieves the best overall performance. However, when $k_0=4$, the performance of HSC is remarkably poor in the low to moderate high SNR regime. This general trend of HSC performing badly with increasing $k_0$ is observed in many other simulations too. When $k_0=4$ and $p=9$, RRT with both $\alpha=0.1$ and $\alpha=0.01$ outperform all other algorithms by a significant margin.    From Fig.\ref{fig:small_sample}, it is clear that  no single algorithm  outperformed all other algorithms in all the four scenarios. However, RRT delivered the best performance in atleast  one scenario, whereas, it delivered near best performance in all the three other scenarios. Further, RRT with $\alpha=0.1$ outperformed  Design($0.1$) and RRT with $\alpha=0.01$ outperformed Design($0.01$) in all the four experiments.  This is significant considering the fact both these schemes guarantee same value of high SNR error probability.
\subsection{Experiment 2: PCS when $n$ is large and  SNR is varying }

\begin{figure*}[htb]
\begin{multicols}{2}

    \includegraphics[width=1\linewidth]{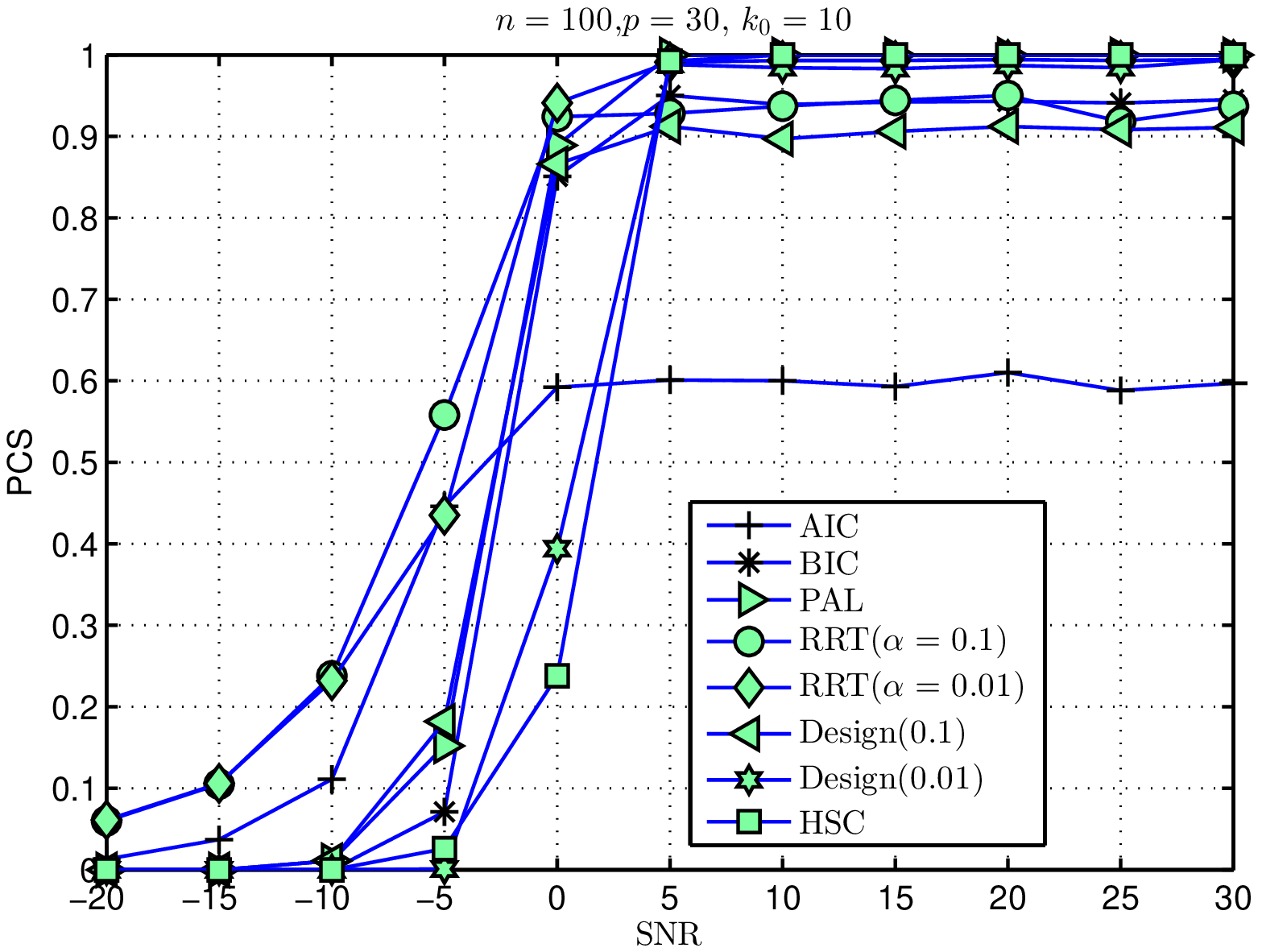} 
    \caption*{a). $n=100$, $p=30$ and $k_0=10$.}
    
    \includegraphics[width=1\linewidth]{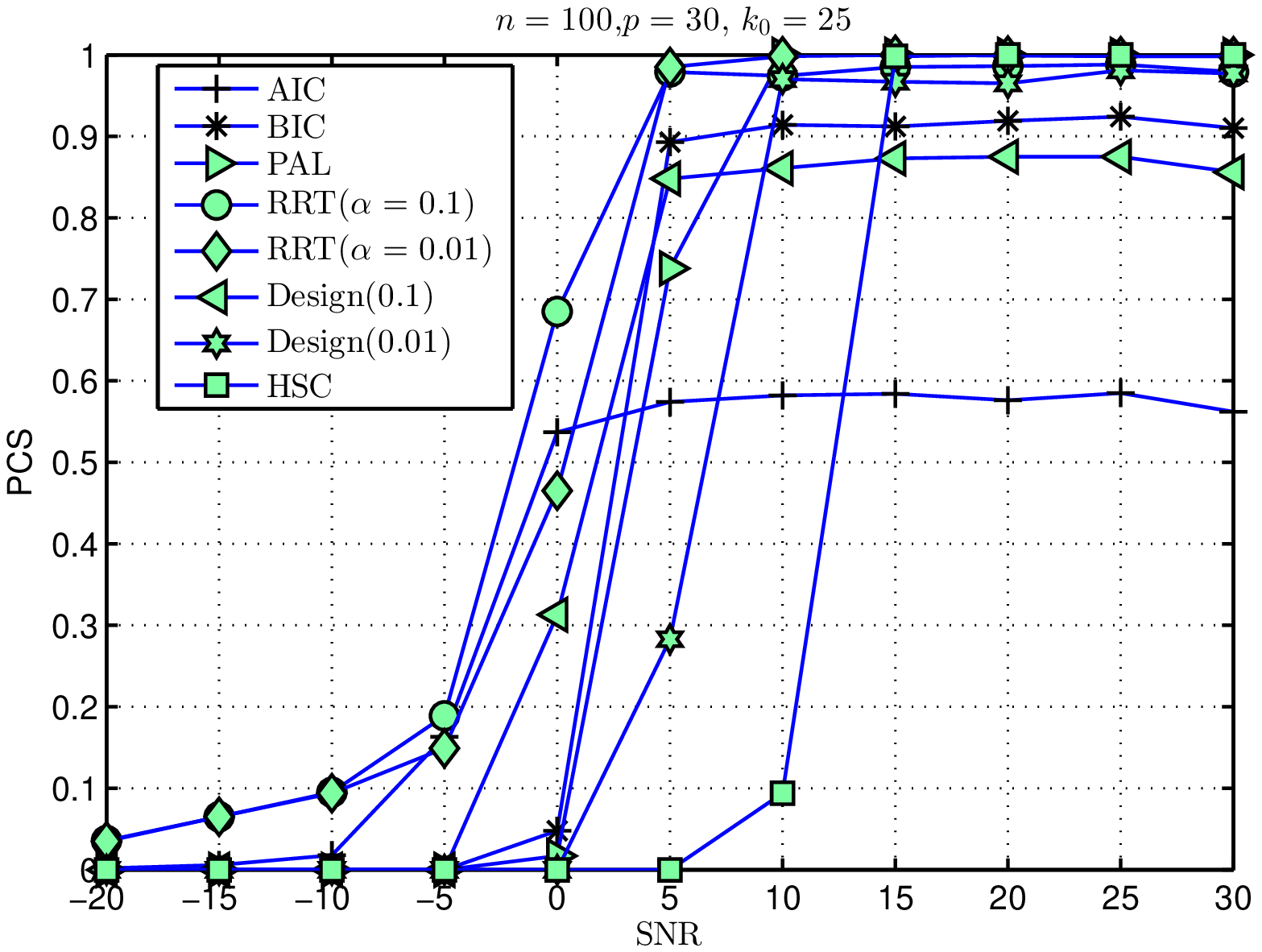} 
    \caption*{b). $n=100$, $p=30$ and $k_0=25$.}
    \end{multicols}
   
   \begin{multicols}{2}

    \includegraphics[width=1\linewidth]{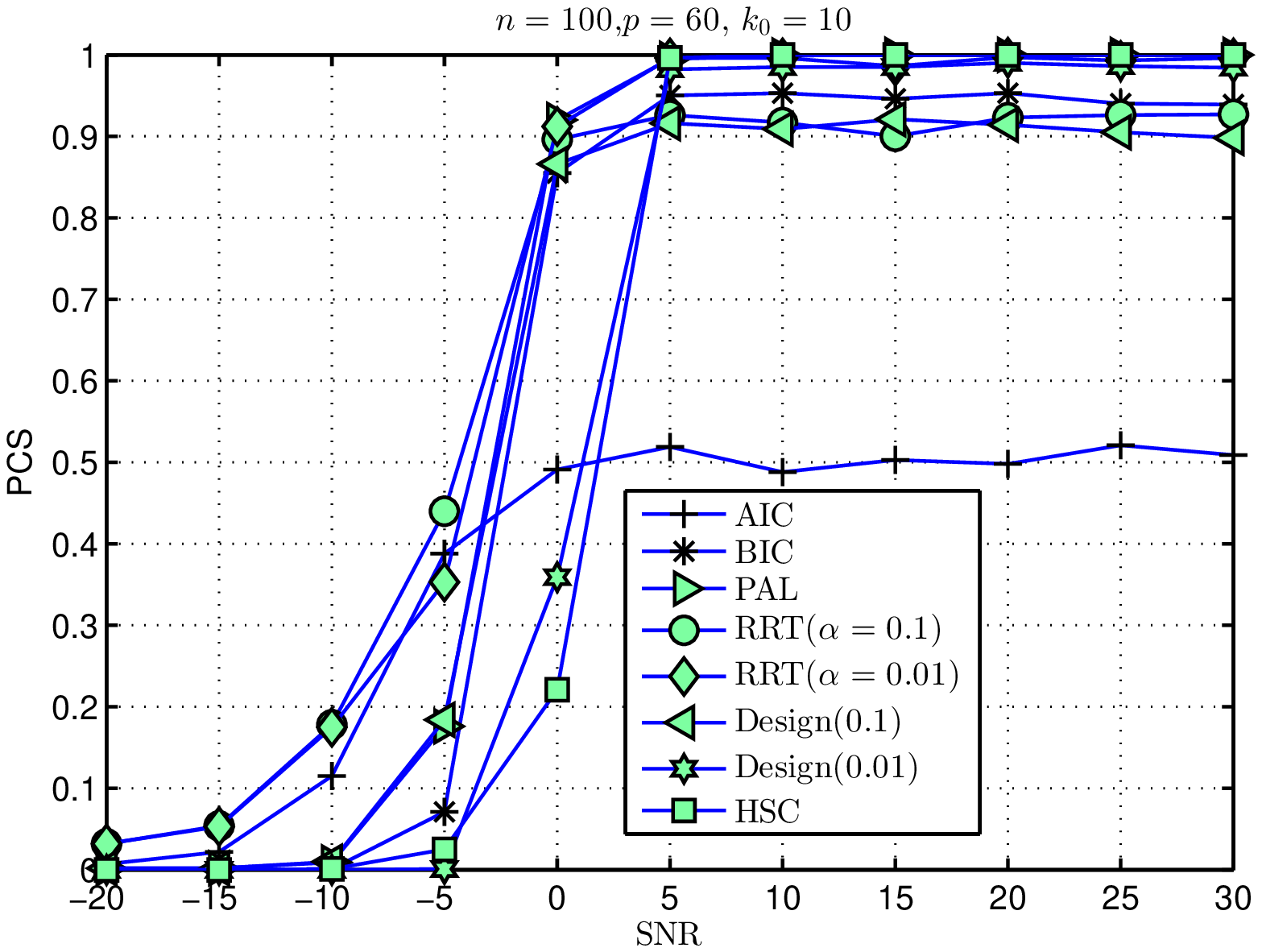} 
    \caption*{c). $n=100$, $p=60$ and $k_0=10$.}
    
    \includegraphics[width=1\linewidth]{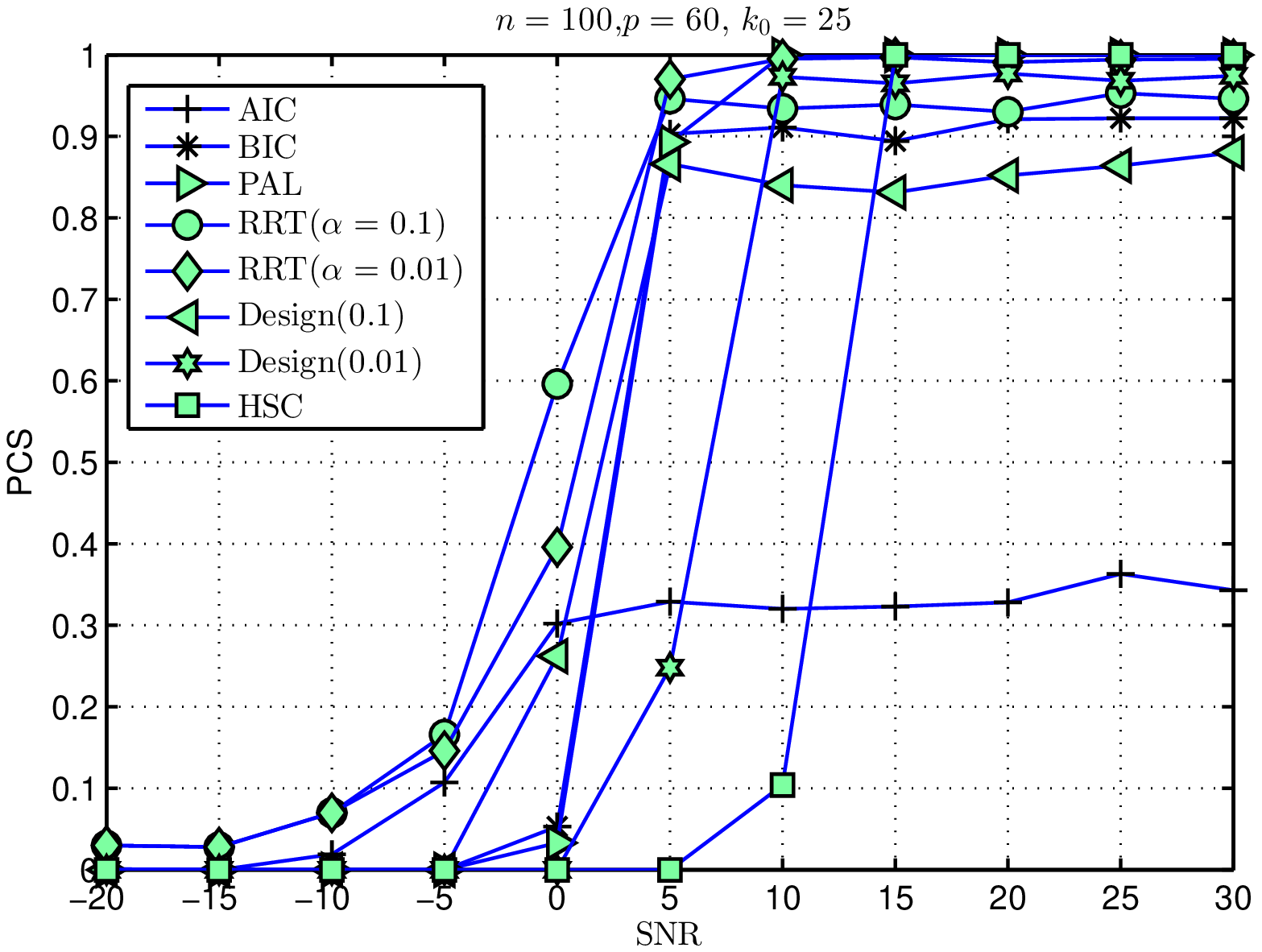} 
    \caption*{d).  $n=100$, $p=60$ and $k_0=25$.}
    \end{multicols} 
    \squeezeup
   \caption{$\boldsymbol{\beta}_k=\pm 1$ for $k=1,6...\dotsc,$ and $k=k_0$. $\boldsymbol{\beta}_{[k_0]}$ is long but sparse.}
   \label{fig:sparse}
   \squeezeup
\end{figure*}

Next we consider the performance of algorithms with increasing SNR when the problem dimensions $(n,p,k_0)$ are moderately large.  From the PCS figures for Model 1 given in Fig.\ref{fig:nonsparse}, it is clear that the performance of algorithms like AIC, BIC, PAL etc. have improved tremendously compared to the case when $n$ was set at $n=10$. From the four scenarios considered in Fig.\ref{fig:nonsparse}, it is difficult to pick a single winner. However, apart from Fig.\ref{fig:nonsparse}.c, in all the other situations RRT with $\alpha=0.1$ performed closer to most of the other algorithms for all values of SNR. Unlike the previous case, Design($0.1$)  does outperforms RRT with $\alpha=0.1$ many often. 

Next we consider the performance of algorithms when $\boldsymbol{\beta}$ is of Model 2, i.e., sparse. Unlike the case of Model 1, RRT with $\alpha=0.1$ is a clear winner throughout the low to high SNR in all the four experiments considered in Fig.\ref{fig:sparse}. In fact this trend of RRT performance improving with increasing  sparsity of $\boldsymbol{\beta}_{[k_0]}$ and a corresponding deterioration in the performance of ITC based MOS techniques was observed in  a large number of experiments conducted. 
 \subsection{Experiment 3: PCS of algorithms with increasing $n$}
 \begin{figure*}[htb]
\begin{multicols}{2}

    \includegraphics[width=1\linewidth]{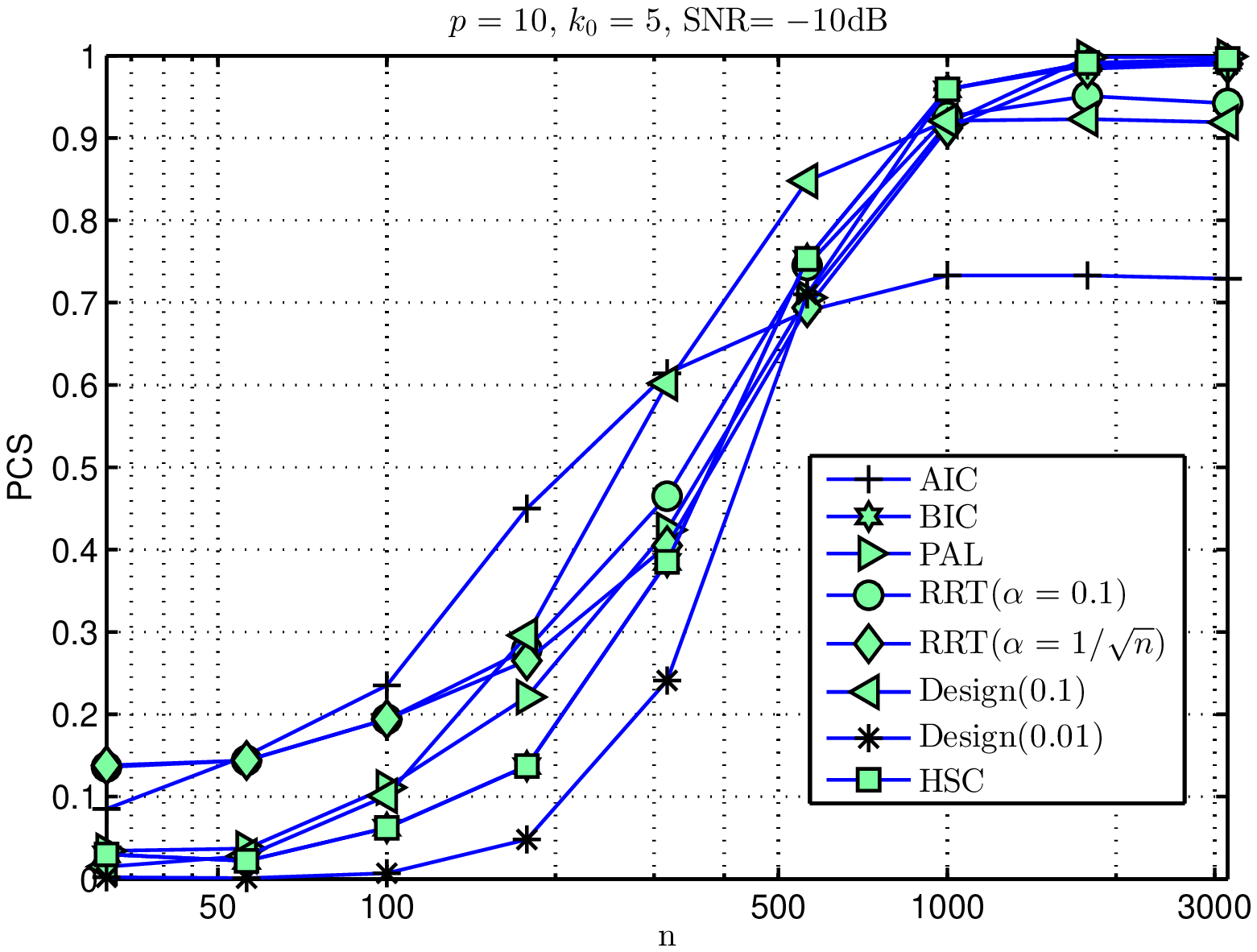} 
    \caption*{a).  $p=10$, $k_0=5$ and SNR=$-10$dB.}
    
    \includegraphics[width=1\linewidth]{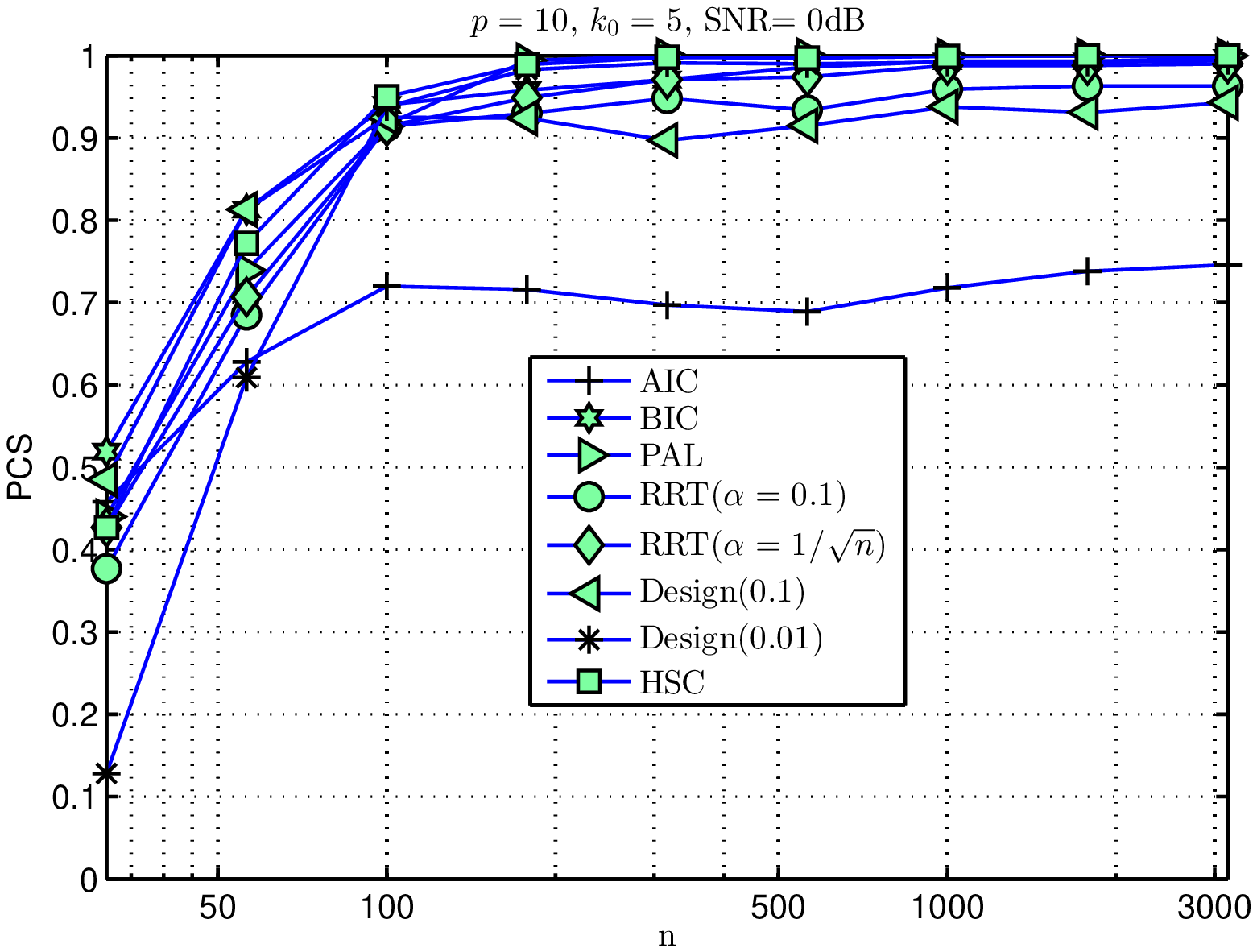} 
    \caption*{b). $p=10$, $k_0=5$ and SNR=$0$dB.}
    \end{multicols}
   
   \begin{multicols}{2}

    \includegraphics[width=1\linewidth]{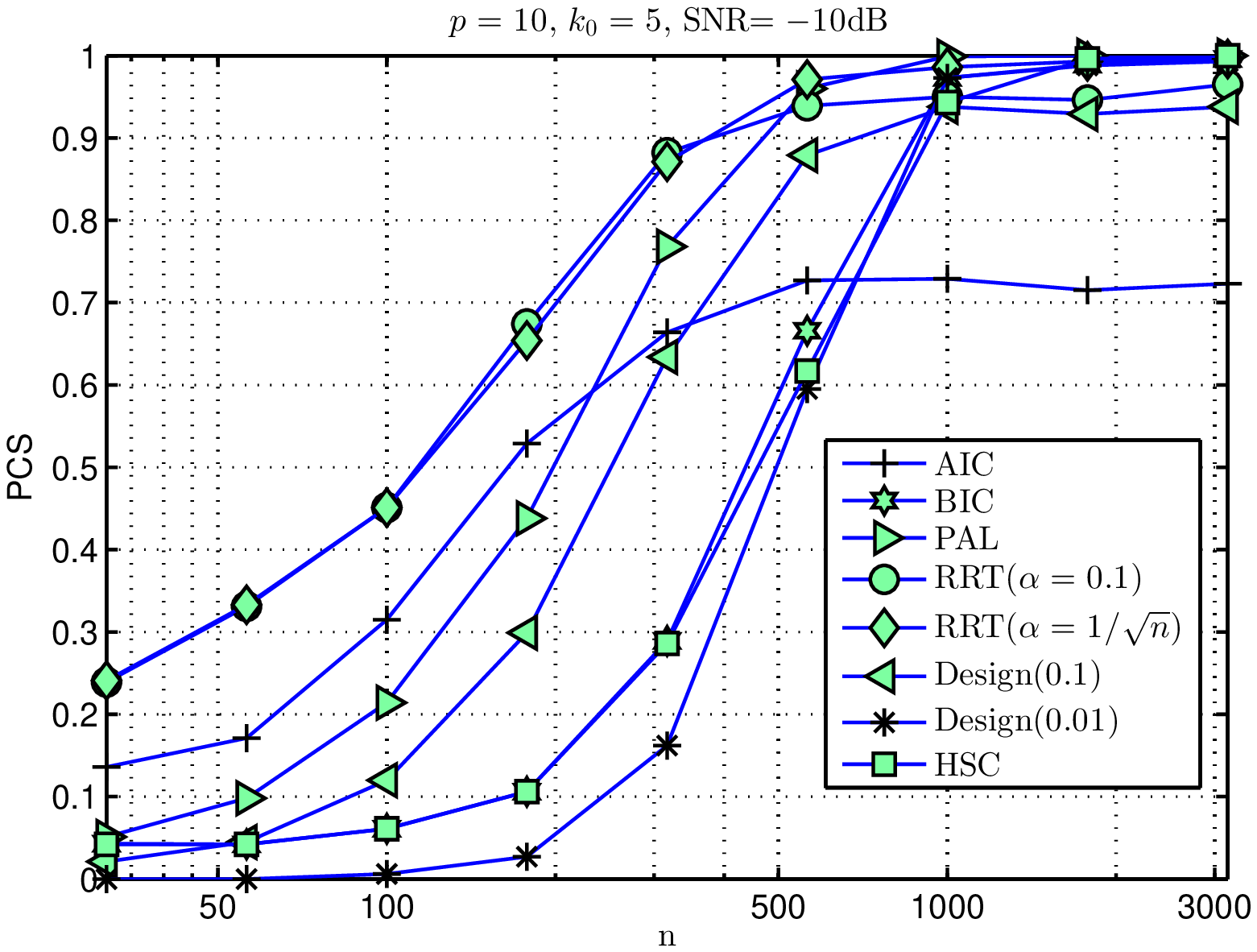} 
    \caption*{c).  $p=10$, $k_0=5$ and SNR$=-10$dB.}
    
    \includegraphics[width=1\linewidth]{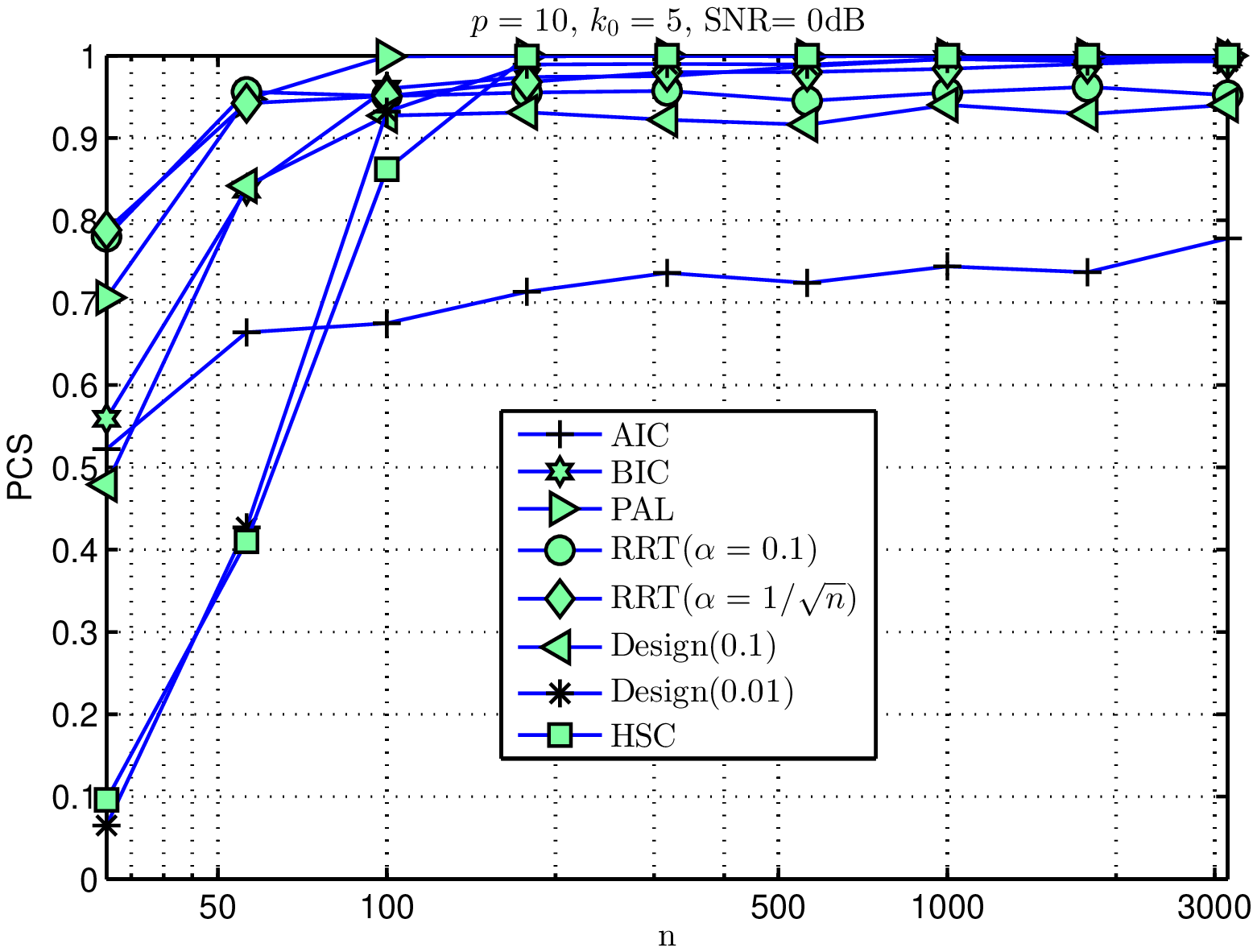} 
    \caption*{d). $p=10$, $k_0=5$ and SNR$=0$dB.}
    \end{multicols} 
    \squeezeup
   \caption{Large sample performance:  $\boldsymbol{\beta}_k=\pm1,\forall k\leq k_0$ for a) and b).  $\boldsymbol{\beta}_k=\pm1$ for  $k=1$ and $k= k_0=5$ for c)  and d). }
   \label{fig:largesample}
   \squeezeup
\end{figure*}
 
 We depict in  Fig.\ref{fig:largesample} the performance of MOS criteria when the sample size $n$ is increasing while keeping  SNR, $k_0$ and $p$ fixed. When  $\boldsymbol{\beta}$ is of Model 1, one can see  from  Fig.\ref{fig:largesample}.a) and b)  that RRT performs as good as most of the other algorithms under consideration. However,  when $\boldsymbol{\beta}$ is of Model 2, it is clear from Fig.\ref{fig:largesample}.c) and d) that RRT with $\alpha=0.1$ and $\alpha=0.01$ clearly outperform all the other algorithms. 
 
 To summarize, RRT has definitive performance advantages over many existing MOS techniques when the sample size $n$ is very small. When the sample size $n$ is large and $\boldsymbol{\beta}_{[k_0]}$ is dense, RRT did  not exhibit any significant performance advantages. Indeed, the observed performance of RRT in the low to moderately high SNR is inferior compared to the best performing MOS criteria like Design($0.1$). However, when the vector $\boldsymbol{\beta}_{[k_0]}$ is sparse, RRT clearly outperformed all the other MOS criteria under consideration. 
 \subsection{Choice of $\alpha$ in RRT }
The performance of RRT depends crucially on the choice of $\alpha$. From the 18 experiments presented in this section and many other experiments not shown in this article, we found out that $\alpha=0.1$ delivered the best overall PCS performance in the low to moderately high SNR regime. Indeed, this choice is purely empirical. However, even with this choice, one can guarantee a value of $\mathbb{P}_{\mathcal{O}}$ less than $10\%$ throughout the operating SNR regimes. When the SNR is very high, a situation not so common in practical applications, one can set $\alpha$ to smaller values like $\alpha=0.01$. Likewise, when the sample size $n$ is very large, one can set $\alpha=1/\sqrt{n}$ which was also found to deliver a very good performance. Finding a completely data dependent choice of $\alpha$ in RRT is of tremendous operational importance  and will be part of the  future research.  

%
%
%
%
%
%
%

\section{Conclusions and directions of future research}
This article proposes a novel MOS criterion based on the behaviour of residual norm ratios. The proposed technique is philosophically different from the widely used ITC based MOS  techniques. This article also provides high SNR and large sample performance guarantees for RRT. In particular, the large sample consistency of RRT is established.  Numerical simulations too demonstrate a highly competitive  performance of the proposed technique over many widely used MOS techniques. Extending the operational philosophy of RRT to non linear model order selection problems like source number enumeration and developing completely data dependent choices for hyper parameter $\alpha$ are two possible avenues for extending the RRT technique proposed in this article.   
\section*{Appendix A:Proof of Lemma \ref{lemma:basic_distributions} \cite{yanai2011projection},\cite{tsp}}
Since the model order is $k_0$, $\boldsymbol{\beta}_{k}=0$ for $k> k_0$. Consequently, the signal component in ${\bf y}$, i.e., ${\bf X}\boldsymbol{\beta}$ is equal to ${\bf X}_{[k_0]}\boldsymbol{\beta}_{[k_0]}$. 
Hence, ${\bf X}\boldsymbol{\beta}\in span({\bf X}_{[k_0]})$. This along with the full rank assumption on ${\bf X}$   implies that $({\bf I}_n-{\bf P}_k){\bf X}\boldsymbol{\beta}=({\bf I}_n-{\bf P}_k){\bf X}_{[k_0]/[k]}\boldsymbol{\beta}_{[k_0]/[k]}\neq {\bf 0}_n$ for $k<k_0$, whereas, $({\bf I}_n-{\bf P}_k){\bf X}\boldsymbol{\beta}={\bf 0}_n$ for $k\geq k_0$. Consequently, $({\bf I}_n-{\bf P}_k){\bf y}=({\bf I}_n-{\bf P}_k){\bf X}_{[k_0]/[k]}\boldsymbol{\beta}_{[k_0]/[k]}+({\bf I}_n-{\bf P}_k){\bf w}$ for $k<k_0$ and  $({\bf I}_n-{\bf P}_k){\bf y}=({\bf I}_n-{\bf P}_k){\bf w}$ for $k\geq k_0$. The distribution of norm of Gaussian vectors is given in the following Lemma. 
\begin{lemma}\label{lemma:chi2}\cite{yanai2011projection}
Let ${\bf x} \sim \mathcal{N}({\bf u},\sigma^2{\bf I}_n)$ and ${\bf P}\in \mathbb{R}^{n \times n}$ be any projection matrix of rank $j$. Then, \\
a). ${\bf P}{\bf x}\sim \mathcal{N}({\bf P}{\bf u},\sigma^2{\bf P} )$.\\
b). $\|{\bf P}{\bf x}\|_2^2/\sigma^2\sim \chi^2_j(\frac{\|{\bf P}{\bf u}\|_2^2}{\sigma^2})$ if ${\bf Pu}\neq {\bf 0}_n$.\\
c). $\|{\bf P}{\bf x}\|_2^2/\sigma^2\sim \chi^2_j$ if ${\bf Pu}= {\bf 0}_n$.
\end{lemma}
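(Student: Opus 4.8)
The plan is to prove part a) using the closure of the Gaussian family under linear maps, and then to reduce parts b) and c) to the definition of the (non-central) chi-squared distribution via an orthonormal factorization of the projection matrix.

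First I would establish a). Since ${\bf P}{\bf x}$ is a linear image of the Gaussian vector ${\bf x}$, it is again Gaussian, so it suffices to compute its first two moments. Its mean is $\mathbb{E}({\bf P}{\bf x})={\bf P}{\bf u}$ by linearity, and its covariance is ${\bf P}(\sigma^2{\bf I}_n){\bf P}^T=\sigma^2{\bf P}{\bf P}^T=\sigma^2{\bf P}$, where the last equality uses that an orthogonal projection matrix is symmetric and idempotent, i.e., ${\bf P}^T={\bf P}$ and ${\bf P}^2={\bf P}$. This yields ${\bf P}{\bf x}\sim\mathcal{N}({\bf P}{\bf u},\sigma^2{\bf P})$ (a degenerate Gaussian supported on the $j$-dimensional range of ${\bf P}$ when $j<n$).

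For b) and c) the key step is to write ${\bf P}={\bf U}{\bf U}^T$, where ${\bf U}\in\mathbb{R}^{n\times j}$ has orthonormal columns (${\bf U}^T{\bf U}={\bf I}_j$); such a factorization exists because a symmetric idempotent matrix of rank $j$ has exactly $j$ unit eigenvalues, and an orthonormal basis of its range furnishes the columns of ${\bf U}$. Using symmetry and idempotency, $\|{\bf P}{\bf x}\|_2^2={\bf x}^T{\bf P}{\bf x}={\bf x}^T{\bf U}{\bf U}^T{\bf x}=\|{\bf U}^T{\bf x}\|_2^2$. Applying part a) to the linear map ${\bf U}^T$, the reduced vector satisfies ${\bf U}^T{\bf x}\sim\mathcal{N}({\bf U}^T{\bf u},\sigma^2{\bf U}^T{\bf U})=\mathcal{N}({\bf U}^T{\bf u},\sigma^2{\bf I}_j)$, so its $j$ entries are independent Gaussians each of variance $\sigma^2$. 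Consequently $\|{\bf U}^T{\bf x}\|_2^2/\sigma^2=\sum_{i=1}^j\big(({\bf U}^T{\bf x})_i/\sigma\big)^2$ is, by the very definition of the non-central chi-squared law, a $\chi^2_j(\lambda)$ variable with non-centrality parameter $\lambda=\|{\bf U}^T{\bf u}\|_2^2/\sigma^2$. Finally I would recast $\lambda$ in the stated form using $\|{\bf U}^T{\bf u}\|_2^2={\bf u}^T{\bf U}{\bf U}^T{\bf u}={\bf u}^T{\bf P}{\bf u}=\|{\bf P}{\bf u}\|_2^2$, which gives b); part c) is then immediate, since ${\bf P}{\bf u}={\bf 0}_n$ forces $\lambda=0$ and the non-central law collapses to the central $\chi^2_j$.

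There is no serious obstacle here, as this is a classical fact; the only point requiring a word of care is the justification of the orthonormal factorization ${\bf P}={\bf U}{\bf U}^T$ and the implicit convention that ``projection matrix'' means an orthogonal (symmetric idempotent) projection. This convention is exactly satisfied by the matrices ${\bf P}_k={\bf X}_{[k]}{\bf X}^{\dagger}_{[k]}$ to which the lemma is applied in Appendix A.
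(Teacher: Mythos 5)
Your proof is correct and complete. The paper does not actually prove this lemma --- it is imported verbatim from the cited reference \cite{yanai2011projection} and used as a known fact in Appendix A --- so there is no in-paper argument to compare against; your derivation (part a) by closure of Gaussians under linear maps, parts b) and c) by the orthonormal factorization ${\bf P}={\bf U}{\bf U}^T$ reducing $\|{\bf P}{\bf x}\|_2^2$ to a sum of $j$ independent squared Gaussians) is the standard one, and your identification of the non-centrality parameter $\|{\bf P}{\bf u}\|_2^2/\sigma^2$ matches the convention the paper uses. Your closing remark about the symmetric-idempotent convention is also the right thing to flag, and it indeed holds for all the matrices to which the lemma is applied in Appendix A, namely ${\bf P}_k$, ${\bf I}_n-{\bf P}_k$ and ${\bf P}_k-{\bf P}_{k-1}$.
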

Since ${\bf P}_k$ is a projection matrix  of rank $k$, $({\bf I}_n-{\bf P}_k)$ is a projection matrix of rank $n-k$. Hence, by   Lemma \ref{lemma:chi2}  
\begin{equation}
\|{\bf r}^k\|_2^2/\sigma^2=\chi^2_{n-k}\left(\dfrac{\|({\bf I}_n-{\bf P}_k){\bf X}_{[k_0]/[k]}\boldsymbol{\beta}_{[k_0]/[k]}\|_2^2}{\sigma^2}\right)\text{\ for\ } k<k_0
\end{equation}
\begin{equation}\label{residual1}
\text{and} \ \|{\bf r}^k\|_2^2/\sigma^2=\chi^2_{n-k}\text{\ for\ } k\geq k_0
\end{equation}
Since  ${\bf P}_k{\bf P}_{k-1}={\bf P}_{k-1}$, $({\bf I}_n-{\bf P}_k)({\bf P}_k-{\bf P}_{k-1})={\bf O}_n$. This implies that $\|{\bf r}^{k-1}\|_2^2=\|({\bf I}_n-{\bf P}_k+{\bf P}_k-{\bf P}_{k-1}){\bf y}\|_2^2=\|{\bf r}^{k}\|_2^2+\|({\bf P}_k-{\bf P}_{k-1}){\bf y}\|_2^2$. Note that $({\bf P}_k-{\bf P}_{k-1})$ is a projection matrix of rank one projecting onto the subspace $span({\bf X}_{[k]})\cap span({\bf X}_{[k-1]})^{\perp}$, i.e., $\{{\bf v} \in \mathbb{R}^n: {\bf v} \in span({\bf X}_{[k]}) \& {\bf v} \notin span({\bf X}_{[k-1]})\}$. This implies that $({\bf P}_k-{\bf P}_{k-1}){\bf X}\boldsymbol{\beta}=({\bf P}_k-{\bf P}_{k-1}){\bf X}_{[k_0]}\boldsymbol{\beta}_{[k_0]}\neq {\bf 0}_n$  for $k\leq k_0$, whereas, $({\bf P}_k-{\bf P}_{k-1}){\bf X}\boldsymbol{\beta}={\bf 0}_n$ for $k>k_0$. This implies
\begin{equation}\label{residual12}
\dfrac{\|({\bf P}_k-{\bf P}_{k-1}){\bf y}\|_2^2}{\sigma^2}\sim \chi^2_1\left(\dfrac{\|({\bf P}_k-{\bf P}_{k-1}){\bf X}_{[k_0]}\boldsymbol{\beta}_{[k_0]}\|_2^2}{\sigma^2}\right) \text{\ for \ } k\leq k_0
\end{equation}
\begin{equation}\label{residual2}
\text{and} \ \|({\bf P}_k-{\bf P}_{k-1}){\bf y}\|_2^2/\sigma^2\sim \chi^2_1 \text{\ for\ } k> k_0
\end{equation}

The  orthogonality of matrices ${\bf I}_n-{\bf P}_k$ and ${\bf P}_k-{\bf P}_{k-1}$ implies that the vectors 
${\bf r}_k=({\bf I}_n-{\bf P}_k){\bf y}$ and $({\bf P}_k-{\bf P}_{k-1}){\bf y}$ are uncorrelated. Since  these vectors are Gaussian, they and their norms ($\|{\bf r}^{k}\|_2^2$, $\|({\bf P}_k-{\bf P}_{k-1}){\bf y}\|_2^2$)  are independent too.  
\begin{lemma}\label{lemma:Beta_def}\cite{ravishanker2001first}
Let $Z_1\sim \chi^2_{k_1}$ and $Z_2\sim \chi^2_{k_2}$ be two independent $\chi^2$ R.Vs. Then the ratio $\dfrac{Z_1}{Z_1+Z_2}\sim \mathbb{B}(\dfrac{k_1}{2},\dfrac{k_2}{2})$
\end{lemma}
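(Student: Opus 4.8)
The plan is to establish this classical distributional identity by a direct change of variables. First I would recall that a $\chi^2_k$ random variable is a special Gamma variate whose density on $(0,\infty)$ is $\frac{1}{2^{k/2}G(k/2)}x^{k/2-1}e^{-x/2}$, where $G(\cdot)$ is the Gamma function. Since $Z_1$ and $Z_2$ are independent, their joint density is simply the product of these two marginals, namely $\frac{1}{2^{(k_1+k_2)/2}G(k_1/2)G(k_2/2)}z_1^{k_1/2-1}z_2^{k_2/2-1}e^{-(z_1+z_2)/2}$ on the positive quadrant.

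Next I would introduce the transformation $U=\frac{Z_1}{Z_1+Z_2}$ and $V=Z_1+Z_2$, whose inverse is $Z_1=UV$ and $Z_2=V(1-U)$ with $(U,V)\in(0,1)\times(0,\infty)$. The Jacobian of this map is readily computed to be $V$. Substituting and collecting terms, the two exponential factors combine to $e^{-V/2}$, so the joint density of $(U,V)$ separates into a product of a pure function of $u$, namely $u^{k_1/2-1}(1-u)^{k_2/2-1}$, and a pure function of $v$, namely $v^{(k_1+k_2)/2-1}e^{-v/2}$, all multiplied by the constant $\frac{1}{2^{(k_1+k_2)/2}G(k_1/2)G(k_2/2)}$. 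This factorization already shows $U$ and $V$ are independent and identifies the functional form of the marginal of $U$.

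Finally I would integrate out $V$ over $(0,\infty)$. Using $\int_0^\infty v^{(k_1+k_2)/2-1}e^{-v/2}\,dv=2^{(k_1+k_2)/2}G((k_1+k_2)/2)$, the powers of $2$ cancel and the marginal density of $U$ reduces to $\frac{G((k_1+k_2)/2)}{G(k_1/2)G(k_2/2)}u^{k_1/2-1}(1-u)^{k_2/2-1}$. Recognizing $\frac{G(k_1/2)G(k_2/2)}{G((k_1+k_2)/2)}=B(k_1/2,k_2/2)$, this is exactly the density of a $\mathbb{B}(k_1/2,k_2/2)$ random variable, which is the claim. The argument is entirely routine; the only point needing a little care is the bookkeeping of the normalizing constants so that they cancel cleanly against the Gamma integral, after which the Beta form is immediate.
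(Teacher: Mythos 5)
Your derivation is correct: the joint density, the Jacobian $V$ of the map $(Z_1,Z_2)\mapsto(U,V)$, the factorization, and the Gamma integral all check out, yielding the $\mathbb{B}(k_1/2,k_2/2)$ density. The paper itself offers no proof of this lemma --- it is stated as a known fact with a citation to a textbook --- so there is no argument to compare against; your change-of-variables computation is the standard derivation of this classical result and is entirely adequate.
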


Using (\ref{residual1}) and (\ref{residual2}) along with the independence of $({\bf I}_n-{\bf P}_k){\bf y}$ and $({\bf P}_k-{\bf P}_{k-1}){\bf y}$ in Lemma \ref{lemma:Beta_def} gives the following distributional result. 
\begin{equation}\label{beta_prelim}
RR(k)=\dfrac{\|{\bf r}^k\|_2^2/\sigma^2}{\|{\bf r}^{k-1}\|_2^2/\sigma^2}\sim \dfrac{\chi^2_{n-k}}{\chi^2_{n-k}+\chi^2_1}\sim \mathbb{B}(\dfrac{n-k}{2},\dfrac{1}{2})
\end{equation}
for $k>k_0$ and all $\sigma^2>0$. This is a) of Lemma \ref{lemma:basic_distributions}.

From (\ref{residual1}) and (\ref{residual12}),  $RR(k_0)=\frac{{ Z}_1}{Z_1+Z_2}$, where $Z_1=\|({\bf I}_n-{\bf P}_{k_0}){\bf w}\|_2^2\sim \sigma^2\chi^2_{n-k_0}$ and $Z_2=\|({\bf P}_{k_0}-{\bf P}_{k_0-1}){\bf y}\|_2^2\sim\sigma^2 \chi^2_{1}\left(\dfrac{\|({\bf P}_{k_0}-{\bf P}_{k_0-1}){\bf X}_{[k_0]}\boldsymbol{\beta}_{[k_0]}\|_2^2}{\sigma^2}\right) $. Note that $({\bf P}_{k_0}-{\bf P}_{k_0-1}){\bf X}_{[k_0]}\boldsymbol{\beta}_{[k_0]}= ({\bf P}_{k_0}-{\bf I}_n+{\bf I}_n-{\bf P}_{k_0-1}){\bf X}_{[k_0]}\boldsymbol{\beta}_{[k_0]}={\bf 0}_n+({\bf I}_n-{\bf P}_{k_0-1}){\bf X}_{[k_0]}\boldsymbol{\beta}_{[k_0]}=({\bf I}_n-{\bf P}_{k_0-1}){\bf x}_{k_0}\boldsymbol{\beta}_{k_0}$.  Hence, $Z_2\sim \chi^2_1\left(\dfrac{\|({\bf I}_n-{\bf P}_{k_0-1}){\bf x}_{k_0}\|_2^2\boldsymbol{\beta}_{k_0}^2}{\sigma^2}\right)$. This proves  b) of Lemma \ref{lemma:basic_distributions}. 
\section*{Appendix B: Proof of Theorem \ref{thm:rrknot}} 
\begin{proof}
Note that $RR(k_0)=\dfrac{Z_1}{Z_1+Z_2}$,  where $Z_1=\|({\bf I}_n-{\bf P}_{k_0}){\bf w}\|_2^2\sim \sigma^2\chi^2_{n-k_0}$ and $Z_2=\|({\bf P}_{k_0}-{\bf P}_{k_0-1}){\bf y}\|_2^2\sim \sigma^2\chi^2_1\left(\dfrac{\|({\bf I}_{n}-{\bf P}_{k_0-1}){\bf x}_{k_0}\|_2^2\boldsymbol{\beta}_{k_0}^2}{\sigma^2}\right)$ as  discussed in Lemma \ref{lemma:basic_distributions}. The proof of Theorem \ref{thm:rrknot} is based on the following lemma. 
\begin{lemma} \label{lemma:noncentral} $\chi^2$ R.V satisfies the following limits\cite{tsp}.\\
a). Let $Z\sim \chi^2_k$ for a fixed $k\in \mathbb{N}$. Then $\sigma^2Z\overset{P}{\rightarrow }0$ as $\sigma^2\rightarrow 0$. \\
b). Let $Z\sim \chi^2_{k}(\lambda/\sigma^2)$ for  fixed $k\in \mathbb{N}$ and fixed $\lambda>0$. Then $\sigma^2 Z\overset{P}{\rightarrow } \lambda$ as $\sigma^2 \rightarrow 0$.
\end{lemma}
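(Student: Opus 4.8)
The plan is to establish both limits through elementary moment computations for the chi-square family, followed by Markov's inequality in part a) and Chebyshev's inequality in part b). Throughout I would use the standard facts that a central $\chi^2_k$ variable has mean $k$ and variance $2k$, while a non-central $\chi^2_k(\mu)$ variable has mean $k+\mu$ and variance $2k+4\mu$; both can be quoted as classical. The whole argument is a second-moment (mean-square convergence implies convergence in probability) argument that uses nothing about the distributions beyond their first two moments.

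For part a), since $Z\sim\chi^2_k$ is non-negative with $\mathbb{E}(\sigma^2 Z)=k\sigma^2$, I would fix $\epsilon>0$ and apply Markov's inequality to the non-negative variable $\sigma^2 Z$:
\[
\mathbb{P}(|\sigma^2 Z-0|>\epsilon)=\mathbb{P}(\sigma^2 Z>\epsilon)\leq \frac{\mathbb{E}(\sigma^2 Z)}{\epsilon}=\frac{k\sigma^2}{\epsilon}.
\]
Letting $\sigma^2\rightarrow 0$ with $k$ and $\epsilon$ held fixed sends the right-hand side to zero, which is precisely $\sigma^2 Z\overset{P}{\rightarrow}0$.

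For part b), the non-centrality parameter is $\mu=\lambda/\sigma^2$, so I would first record the two rescaled moments
\[
\mathbb{E}(\sigma^2 Z)=\sigma^2\!\left(k+\tfrac{\lambda}{\sigma^2}\right)=\lambda+k\sigma^2,\qquad \mathrm{Var}(\sigma^2 Z)=\sigma^4\!\left(2k+\tfrac{4\lambda}{\sigma^2}\right)=2k\sigma^4+4\lambda\sigma^2.
\]
The crucial observation is that although $\mathrm{Var}(Z)$ itself blows up like $\lambda/\sigma^2$, multiplication by $\sigma^4$ forces $\mathrm{Var}(\sigma^2 Z)\rightarrow 0$ while the mean tends to $\lambda$. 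I would then split via the triangle inequality $|\sigma^2 Z-\lambda|\leq |\sigma^2 Z-\mathbb{E}(\sigma^2 Z)|+|\mathbb{E}(\sigma^2 Z)-\lambda|$: for $\sigma^2$ small enough the deterministic gap $|\mathbb{E}(\sigma^2 Z)-\lambda|=k\sigma^2$ falls below $\epsilon/2$, so that
\[
\mathbb{P}(|\sigma^2 Z-\lambda|>\epsilon)\leq \mathbb{P}\!\left(|\sigma^2 Z-\mathbb{E}(\sigma^2 Z)|>\tfrac{\epsilon}{2}\right)\leq \frac{4\,\mathrm{Var}(\sigma^2 Z)}{\epsilon^2}\xrightarrow{\sigma^2\to 0}0
\]
by Chebyshev's inequality, yielding $\sigma^2 Z\overset{P}{\rightarrow}\lambda$.

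The computation is short, and the only point requiring genuine care—and the step I would flag as the main obstacle—is the bookkeeping of the $\sigma^2$-dependence in part b). One must verify that the growing non-centrality and the shrinking scale factor balance exactly so that $\mathrm{Var}(\sigma^2 Z)$ truly vanishes rather than settling at a positive constant; once that is confirmed, the rest is the routine Chebyshev bound.
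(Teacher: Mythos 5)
Your proof is correct. Note that the paper itself offers no proof of this lemma at all --- it is simply imported from the cited reference \cite{tsp} --- so there is no in-paper argument to compare against; your elementary second-moment derivation (Markov's inequality for part a), and the observation that $\mathrm{Var}(\sigma^2 Z)=2k\sigma^4+4\lambda\sigma^2\rightarrow 0$ while $\mathbb{E}(\sigma^2 Z)\rightarrow\lambda$, followed by the triangle-inequality split and Chebyshev, for part b)) is a clean, self-contained justification that uses only the standard first two moments of the central and non-central $\chi^2$ distributions.
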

It follows directly from Lemma \ref{lemma:noncentral} that $Z_1\overset{P}{\rightarrow} 0 \ \text{as} \ \sigma^2\rightarrow 0$ and $Z_2\overset{P}{\rightarrow} \|({\bf I}_n-{\bf P}_{k_0-1}){\bf x}_{k_0}\|_2^2\boldsymbol{\beta}_{k_0}^2>0 \ \text{as} \ \sigma^2\rightarrow 0$.  Hence, the numerator $Z_1$ in $RR(k_0)=\dfrac{Z_1}{Z_1+Z_2}$ converges in probability to zero, whereas, the denominator $Z_1+Z_2$ converges in probability  to a positive constant $ \|({\bf I}_n-{\bf P}_{k_0-1}){\bf x}_{k_0}\|_2^2\boldsymbol{\beta}_{k_0}^2$. Consequently\footnote{Please note that if $X_n \overset{P}{\rightarrow }c_1$ and $Y_n\overset{P}{\rightarrow } c_2\neq 0$ as $n\rightarrow \infty$, then $X_n/Y_n\overset{P}{\rightarrow } c_1/c_2$. Likewise, $X_n+Y_n\overset{P}{\rightarrow}c_1+c_2$ [Theorem 5.5,\cite{wasserman2013all}].},    $RR(k_0)\overset{P}{\rightarrow } 0$ as $\sigma^2 \rightarrow 0$. 
\end{proof}
\section*{Appendix C: Proof of Theorem \ref{thm:highSNR}}
\begin{proof} The event $\{\hat{k}_{RRT}<k_0\}$ can happen   when either of the events $\mathcal{A}_1=\{\{\exists k<k_0 : RR(k)<\Gamma_{RRT}^{\alpha}(k)\}\cap \{RR(k)>\Gamma_{RRT}^{\alpha}(k),\forall k\geq k_0\}\}$ or $\mathcal{A}_2=\{\{k:RR(k)<\Gamma_{RRT}^{\alpha}(k)\}=\phi\}=\{RR(k)>\Gamma_{RRT}^{\alpha}(k),\ \forall k\}$ is true. $\mathcal{A}_1$ definitely results in $\hat{k}_{RRT}<k_0$, whereas, when $\mathcal{A}_2$ is true,   $\hat{k}_{RRT}=\max\{k:RR(k)<\Gamma_{RRT}^{\alpha_{new}}\}$ in (\ref{alphanew}) can be larger or smaller than $k_0$.   Thus,
\begin{equation}\label{under}
\mathbb{P}_{\mathcal{U}}\leq \mathbb{P}(\mathcal{A}_1)+\mathbb{P}(\mathcal{A}_2).
\end{equation}
Using the bound $\mathbb{P}(\mathcal{B}_1\cap \mathcal{B}_2)\leq \mathbb{P}(\mathcal{B}_1)$ for any two events $\mathcal{B}_1$ and $\mathcal{B}_2$, one can bound 
\begin{equation}
\mathbb{P}(\mathcal{A}_1)\leq \mathbb{P}(\{RR(k_0)>\Gamma_{RRT}^{\alpha}(k_0)\})
\end{equation}
  $RR(k_0)\overset{P}{\rightarrow } 0$ in Theorem \ref{thm:rrknot} implies that 
  \begin{equation}
  \underset{\sigma^2 \rightarrow 0}{\lim}\mathbb{P}(\mathcal{A}_1)\leq \underset{\sigma^2 \rightarrow 0}{\lim}\mathbb{P}(\{RR(k_0)>\Gamma_{RRT}^{\alpha}(k_0)\})=0.
  \end{equation}
   $RR(k_0)\overset{P}{\rightarrow } 0$ also implies that  
 \begin{equation}
 \begin{array}{ll}
 \underset{\sigma^2 \rightarrow 0}{\lim}\mathbb{P}(\mathcal{A}_2)&=\underset{\sigma^2 \rightarrow 0}{\lim}\mathbb{P}(\{RR(k)>\Gamma_{RRT}^{\alpha}(k),\ \forall k\})\\
 &\leq \underset{\sigma^2 \rightarrow 0}{\lim}\mathbb{P}(\{RR(k_0)>\Gamma_{RRT}^{\alpha}({k_0})\})=0.
 \end{array}
 \end{equation}
 Applying these  limits in (\ref{under}) give $\underset{\sigma^2 \rightarrow 0}{\lim }\mathbb{P}_{\mathcal{U}}=0$.
 
 Similarly, the event $\hat{k}_{RRT}>k_0$ can happen either when $\mathcal{A}_3=\{\exists k>k_0:RR(k)<\Gamma_{RRT}^{\alpha}(k)\}$  or when  $\mathcal{A}_2$ is true. When $\mathcal{A}_3$ 
is true, then definitely $\hat{k}_{RRT}>k_0$, whereas, when   $\mathcal{A}_2$ is true, then $\hat{k}_{RRT}$ can be either greater than or smaller than $k_0$. Hence, 
\begin{equation}\label{over}
\mathbb{P}_{\mathcal{O}}\leq \mathbb{P}(\mathcal{A}_3)+\mathbb{P}(\mathcal{A}_2)
\end{equation} 
By Theorem \ref{thm:beta_mos}, $\mathbb{P}(\mathcal{A}_3)=1-\mathbb{P}( \{RR(k)<\Gamma_{RRT}^{\alpha}(k),\forall k>k_0\})\leq \alpha$ for all $\sigma^2>0$.   Applying this along with $\underset{\sigma^2 \rightarrow 0}{\lim}\mathbb{P}(\mathcal{A}_2)=0$   in (\ref{over}) give $\underset{\sigma^2 \rightarrow 0}{\lim }\mathbb{P}_{\mathcal{O}}\leq \alpha$. Note that the events $\{\hat{k}_{RRT}<k_0\}$ and  $\{\hat{k}_{RRT}>k_0\}$ are disjoint and hence 
\begin{equation}
PCS=1-\mathbb{P}(\hat{k}_{RRT}\neq k_0)=1-\mathbb{P}_{\mathcal{U}}-\mathbb{P}_{\mathcal{O}}.
\end{equation}
Thus the limit $\underset{\sigma^2 \rightarrow 0}{\lim }PCS\geq 1-\alpha$ directly follows from the limits on $\mathbb{P}_{\mathcal{U}}$ and $\mathbb{P}_{\mathcal{O}}$.
\end{proof}
\section*{Appendix D: Proof of Theorem \ref{thm:asymptotic_rrt}}
$\Gamma_{RRT}^{\alpha}(k_0)=F^{-1}_{\frac{n-k_0}{2},\frac{1}{2}}(x_n)$, where $x_n=\frac{\alpha}{p}$ is an implicit function of $n$. Depending on the behaviour of $x_n$ as $n \rightarrow \infty$, we consider the following two  cases. 

{\bf Case 1:} $p$ fixed and $\alpha$ fixed. Here $x_n$ is a constant function of $n$ and $k_{lim}=\underset{n \rightarrow \infty}{\lim}k_0/n<1$. Using the limit $\underset{a \rightarrow \infty}{\lim}F^{-1}_{a,b}(x)=1$ for every  fixed $b \in (0,\infty)$ and $x \in(0,1)$ (see proposition 1, \cite{askitis2016asymptotic}), it follows that $\underset{n\rightarrow \infty}{\lim}\Gamma_{RRT}^{\alpha}(k_0)=\underset{n \rightarrow \infty}{\lim}F^{-1}_{\frac{n-k_0}{2},\frac{1}{2}}(x_n)=1$.\\

{\bf Case 2:-} ($p$ fixed, $\alpha \rightarrow 0$), ($p\rightarrow \infty$, $\alpha$ fixed) or ($p\rightarrow \infty$, $\alpha \rightarrow 0$). In all these cases, $x_n\rightarrow 0$ as $n\rightarrow \infty$. Expanding  $F^{-1}_{a,b}(z)$  at $z=0$ using the expansion given in [http://functions.wolfram.com/06.23.06.0001.01] gives
\begin{equation}\label{beta_exp}
\begin{array}{ll}
F^{-1}_{a,b}(z)=\rho(n,1)+\dfrac{b-1}{a+1}\rho(n,2) \\
+\dfrac{(b-1)(a^2+3ab-a+5b-4)}{2(a+1)^2(a+2)}\rho(n,3)
+O(z^{(4/a)})
\end{array}
\end{equation}
for all $a>0$.   We associate $a=\frac{n-k_0}{2}$, $b=1/2$ , $z=x_n$ and $\rho(n,l)=(az{B}(a,b))^{(l/a)}=\left(\frac{\left(\frac{n-k_0}{2}\right)\alpha{B}(\frac{n-k_0}{2},0.5)}{p}\right)^{\frac{2l}{n-k_0}}$ for $l\geq 1$. 
Then $\log(\rho(n,l))$ gives
\begin{equation}\label{log_rho}
\begin{array}{ll}
\log(\rho(n,l))=\frac{2l}{n-k_0}\log\left({\frac{n-k_0}{2p}}\right)
+\frac{2l}{n-k_0}\log({B}(\frac{n-k_0}{2},0.5))\\
+\frac{2l}{n-k_0}\log(\alpha)
\end{array}
\end{equation}
In the limits $n\rightarrow \infty$, $0\leq\underset{n\rightarrow \infty}{\lim} p/n<1$ and $0\leq k_{lim}<1$, $\underset{n \rightarrow \infty}{\lim}\frac{2l}{n-k_0}\log\left({\frac{n-k_0}{2p}}\right)=0$. Using the asymptotic expansion
 ${B}(a,b)=G(b)a^{-b}\left(1-\frac{b(b-1)}{2a}(1+O(\frac{1}{a}))\right)$ as $a \rightarrow \infty$ (given in [http://functions.wolfram.com/06.18.06.0006.01]) in the second term of (\ref{log_rho}) gives
\begin{equation}
\underset{n \rightarrow \infty}{\lim}\frac{2l}{n-k_0}\log\left({B}(\frac{n-k_0}{2},0.5)\right)=0.
\end{equation}
Hence, only the behaviour of $\frac{2l}{n-k_0}\log(\alpha)$ needs to be considered. Now we consider the three cases depending on the behaviour of $\alpha$.

{\bf Case 2.A:-} When $\underset{n \rightarrow \infty}{\lim}\log(\alpha)/n=0$ 
one has $\underset{n \rightarrow \infty}{\lim}\log(\rho(n,l))=0$ which in turn implies that $\underset{n \rightarrow \infty}{\lim}\rho(n,l)=1$ for every $l$. 

{\bf Case 2:-} When $-\infty<\alpha_{lim}=\underset{n \rightarrow \infty}{\lim}\log(\alpha)/n<0$ and $\underset{n \rightarrow \infty}{\lim}\dfrac{k_0}{n}=k_{lim}<1$,   
one has $-\infty<\underset{n \rightarrow \infty}{\lim}\log(\rho(n,l))=(2l\alpha_{lim})/(1-k_{lim})<0$. This in turn implies that $0<\underset{n \rightarrow \infty}{\lim}\rho(n,l)=e^{\dfrac{2l\alpha_{lim}}{1-k_{lim}}}<1$ for every $l$. 

{\bf Case 3:-} When $\underset{n \rightarrow \infty}{\lim}\log(\alpha)/n=-\infty$,  
one has $\underset{n \rightarrow \infty}{\lim}\log(\rho(n,l))=-\infty$ which in turn implies that $\underset{n \rightarrow \infty}{\lim}\rho(n,l)=0$ for every $l$.  

Note that the coefficient of $\rho(n,l)$ in (\ref{beta_exp}) for $l>1$ is asymptotically $1/a\approx 2/(n-k_0)$. Hence, these coefficients decay to zero in the limits $n\rightarrow \infty$ and $0\leq k_{lim}<1$. Consequently, only the $\rho(n,1)$ term in (\ref{beta_exp}) is non zero as $n \rightarrow \infty$. This implies that $\underset{n \rightarrow \infty}{\lim}\Gamma_{RRT}^{\alpha}(k_0)=1$ for Case 2.A,  $0<\underset{n \rightarrow \infty}{\lim}\Gamma_{RRT}^{\alpha}(k_0)=e^{\dfrac{2\alpha_{lim}}{1-k_{lim}}}<1$ for Case 2.B and $\underset{n \rightarrow \infty}{\lim}\Gamma_{RRT}^{\alpha}(k_0)=0$ for Case 2.C. This proves Theorem \ref{thm:asymptotic_rrt}.
\section*{Appendix E. Proof of Theorem \ref{thm:large_sample}}
\begin{proof}
Consider the events $\mathcal{A}_1$, $\mathcal{A}_2$ and $\mathcal{A}_3$ defined in the proof of Theorem \ref{thm:highSNR}. Following the proof of Theorem \ref{thm:highSNR}, one has $\mathbb{P}_{\mathcal{U}}\leq \mathbb{P}(\mathcal{A}_1)+\mathbb{P}(\mathcal{A}_2)$ and $\mathbb{P}_{\mathcal{O}}\leq \mathbb{P}(\mathcal{A}_3)+\mathbb{P}(\mathcal{A}_2)$,  where $\mathbb{P}(\mathcal{A}_1)\leq \mathbb{P}(\{RR(k_0)>\Gamma_{RRT}^{\alpha}(k_0)\})$, $\mathbb{P}(\mathcal{A}_2)\leq \mathbb{P}(\{RR(k_0)>\Gamma_{RRT}^{\alpha}(k_0)\})$ and $\mathbb{P}(\mathcal{A}_3)\leq \alpha,\forall n$. Hence, only the large sample behaviour of $\mathbb{P}(\{RR(k_0)>\Gamma_{RRT}^{\alpha}(k_0)\})$ needs to be analysed. 

Let $Z_1=\|({\bf I}_n-{\bf P}_{k_0}){\bf w}\|_2^2\sim \sigma^2\chi^2_{n-k_0}$ and $Z_2=\|({\bf P}_{k_0}-{\bf P}_{k_0-1}){\bf y}\|_2^2\sim \sigma^2 \chi^2_1\left(\dfrac{\|({\bf I}_n-{\bf P}_{k_0-1}){\bf x}_{k_0}\|_2^2\boldsymbol{\beta}_{k_0}^2}{\sigma^2}\right)$.  Following Lemma \ref{lemma:basic_distributions}, $RR(k_0)=\frac{Z_1}{Z_1+Z_2}$. Hence,
\begin{equation}\label{under_asymptptic}
\begin{array}{ll}
\mathbb{P}(\mathcal{A}_1)=\mathbb{P}\left(\frac{Z_1}{Z_1+Z_2}>\Gamma_{RRT}^{\alpha}(k_0)\right)
=\mathbb{P}\left(\frac{1-\Gamma_{RRT}^{\alpha}(k_0)}{\Gamma_{RRT}^{\alpha}(k_0)}\frac{Z_1}{n\sigma^2}>\frac{Z_2}{n\sigma^2}\right)
\end{array}
\end{equation}
The large sample behaviour of  chi squared R.Vs are characterized  in the following lemma.
\begin{lemma}\label{lemma:noncentral2}
Chi squared R.Vs satisfy the following limits.\\
A1). Let $Z\sim {\chi^2_l}$, then $Z/l \overset{P}{\rightarrow} 1$ as $l \rightarrow \infty$. \\
A2). Let $Z\sim \chi^2_k(M l)$ for a fixed $k$ and $M>0$, then $Z/l\overset{P}{\rightarrow} M$ as $l \rightarrow \infty$. \cite{eefenumeration}
\end{lemma}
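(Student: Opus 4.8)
The plan is to prove both limits from the first two moments of the relevant (non-central) chi-squared laws together with Chebyshev's inequality, since convergence in probability to a constant follows once the mean of the normalised quantity converges to that constant and its variance vanishes.

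For A1, I would first represent $Z\sim \chi^2_l$ as $Z=\sum_{i=1}^l X_i^2$ with $X_i\overset{i.i.d}{\sim}\mathcal{N}(0,1)$, so that $Z/l$ is the sample mean of the i.i.d.\ variables $X_i^2$, each with $\mathbb{E}(X_i^2)=1$ and $\var(X_i^2)=\mathbb{E}(X_i^4)-1=2$. The weak law of large numbers then gives $Z/l\overset{P}{\rightarrow}1$ as $l\rightarrow\infty$; equivalently, $\mathbb{E}(Z/l)=1$ and $\var(Z/l)=2/l\rightarrow 0$, so Chebyshev's inequality yields $\mathbb{P}(|Z/l-1|>\epsilon)\leq 2/(l\epsilon^2)\rightarrow 0$ for every fixed $\epsilon>0$.

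For A2, I would use the standard representation of a non-central chi-squared $\chi^2_k(\lambda)$ with $\lambda=Ml$ as $Z=\sum_{i=1}^k(X_i+\mu_i)^2$, where $X_i\overset{i.i.d}{\sim}\mathcal{N}(0,1)$ and $\sum_{i=1}^k\mu_i^2=\lambda=Ml$. From the moment formulas $\mathbb{E}(Z)=k+\lambda=k+Ml$ and $\var(Z)=2k+4\lambda=2k+4Ml$, I obtain $\mathbb{E}(Z/l)=k/l+M$ and $\var(Z/l)=(2k+4Ml)/l^2=2k/l^2+4M/l$. Since $k$ and $M$ are fixed, the variance tends to zero while the mean tends to $M$, which is exactly the condition needed for $Z/l\overset{P}{\rightarrow}M$.

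The only point requiring a little care is the vanishing bias in A2: here $\mathbb{E}(Z/l)=M+k/l$ equals $M$ only in the limit, so Chebyshev must be combined with the triangle inequality rather than applied directly around $M$. Concretely, for $l$ large enough that $k/l<\epsilon/2$ I would bound $\mathbb{P}(|Z/l-M|>\epsilon)\leq \mathbb{P}(|Z/l-\mathbb{E}(Z/l)|>\epsilon/2)\leq \var(Z/l)/(\epsilon/2)^2\rightarrow 0$. This handling of the $O(1/l)$ bias is the main (and essentially only) obstacle; everything else reduces to the routine moment computations above.
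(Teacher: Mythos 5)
Your proposal is correct and complete. Note that the paper itself does not prove this lemma at all: it simply states it with a citation to an external reference, so there is no in-paper argument to compare against. Your moment-plus-Chebyshev route is the natural elementary proof and fills that gap in a self-contained way. The details check out: for A1, $\var(X_i^2)=\mathbb{E}(X_i^4)-1=2$ gives $\var(Z/l)=2/l\rightarrow 0$, and either the weak law of large numbers or Chebyshev finishes; for A2, the standard moments $\mathbb{E}(Z)=k+\lambda$ and $\var(Z)=2k+4\lambda$ with $\lambda=Ml$ give $\mathbb{E}(Z/l)\rightarrow M$ and $\var(Z/l)=2k/l^2+4M/l\rightarrow 0$. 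Your handling of the $O(1/l)$ bias is the one point where a naive application of Chebyshev around $M$ would be sloppy, and you treat it correctly: for $l$ large enough that $k/l<\epsilon/2$, the event $\{|Z/l-M|>\epsilon\}$ is contained in $\{|Z/l-\mathbb{E}(Z/l)|>\epsilon/2\}$, whose probability is at most $4\var(Z/l)/\epsilon^2\rightarrow 0$. (Equivalently, you could observe that $\mathbb{E}\left[(Z/l-M)^2\right]=\var(Z/l)+(k/l)^2\rightarrow 0$, i.e., $L^2$ convergence, which implies convergence in probability and dispenses with the case split.) Either way, your argument is sound and arguably more useful to a reader than the paper's bare citation.
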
 
By Theorem 4, $\alpha_{\lim}=0$ implies that $\Gamma_{RRT}^{\alpha}(k_0)\rightarrow 1$ and $\dfrac{1-\Gamma_{RRT}^{\alpha}(k_0)}{\Gamma_{RRT}^{\alpha}(k_0)}\rightarrow 0$ as $n \rightarrow \infty$.  A1) of Lemma \ref{lemma:noncentral2} and $0\leq k_{lim}<1$ imply $\dfrac{Z_1}{n\sigma^2}=\dfrac{Z_1}{(n-k_0)\sigma^2}\dfrac{n-k_0}{n}\overset{P}{\rightarrow }1-k_{lim}$. Combining these limits,   L.H.S in (\ref{under_asymptptic}), i.e., $\dfrac{1-\Gamma_{RRT}^{\alpha}(k_0)}{\Gamma_{RRT}^{\alpha}(k_0)}\dfrac{Z_1}{n\sigma^2}\overset{P}{\rightarrow} 0$ as $n \rightarrow \infty$. Next we consider the behaviour of $\dfrac{Z_2}{n\sigma^2}$. Let $\tilde{Z}_2\sim \chi^2_1(M_1n)$ be some R.V. Then 
\begin{equation}\label{gargon}
\mathbb{P}\left(\frac{Z_2}{n\sigma^2}>\frac{M_1}{2}\right)\geq \mathbb{P}\left(\frac{\tilde{Z}_2}{n}>\frac{M_1}{2}\right), \ \forall n>n_0.
\end{equation}
Eq.\ref{gargon} follows from the monotonicity of $\chi^2_k(\lambda)$ w.r.t $\lambda$ and the fact that the noncentrality parameter in $Z_2$ satisfies $\dfrac{\|({\bf I}_n-{\bf P}_{k_0-1}){\bf x}_{k_0}\|_2^2\boldsymbol{\beta}_{k_0}^2}{\sigma^2}\geq M_1n$ for all $n>n_0$. A2) of Lemma \ref{lemma:noncentral2} implies that $\tilde{Z}_2/n\overset{P}{\rightarrow }M_1$ as $n \rightarrow \infty$. This implies that
\begin{equation}
\underset{n \rightarrow \infty}{\lim}\mathbb{P}\left(\frac{\tilde{Z}_2}{n}>\frac{M_1}{2}\right)=1\  \text{and} \ \ \underset{n \rightarrow \infty}{\lim} \mathbb{P}\left(\frac{Z_2}{n\sigma^2}>\frac{M_1}{2}\right)=1.
\end{equation}
Since the L.H.S of (\ref{under_asymptptic}) converges to zero and R.H.S is bounded away from zero with probability one, it is true that   $\underset{n \rightarrow \infty}{\lim}\mathbb{P}(\mathcal{A}_1)=0$. Similarly, $\underset{n \rightarrow \infty}{\lim}\mathbb{P}(\mathcal{A}_2)=0$.

Note that the limits derived so far assumed only $\alpha_{lim}=0$. Hence, as long as $\alpha_{lim}=0$, it is true that $\underset{n \rightarrow \infty}{\lim}\mathbb{P}_{\mathcal{U}}= 0$ and $\underset{n \rightarrow \infty}{\lim}\mathbb{P}_{\mathcal{O}}\leq \alpha$. Since,  $\alpha_{lim}=0$ for fixed $0\leq \alpha\leq 1$, this proves R2) of Theorem \ref{thm:large_sample}. Once $\underset{n \rightarrow \infty}{\lim}\alpha=0$ is also true, then $\underset{n \rightarrow \infty}{\lim}\mathbb{P}_{\mathcal{O}}= 0$ and $\underset{n \rightarrow \infty}{\lim}PCS=1-\underset{n \rightarrow \infty}{\lim}\mathbb{P}_{\mathcal{O}}-\underset{n \rightarrow \infty}{\lim}\mathbb{P}_{\mathcal{U}}=1$. This proves R1) of Theorem \ref{thm:large_sample}. 
 \end{proof}

\bibliographystyle{IEEEtran}
\bibliography{compressive}

\end{document}